\documentclass[]{sn-jnl}

\usepackage[table]{xcolor}
\usepackage{tcolorbox}
\usepackage[frozencache,cachedir=./_minted]{minted}
\def\us{\char`\_}
\def\coqin#1{\mintinline{ssr}{#1}}
\setminted[ssr]{fontsize=\small}
\usepackage{syntax}
\usepackage{tabularx}
\usepackage{stmaryrd}
\usepackage{simplebnf}
\usepackage{comment}
\usepackage{rotating}
\usepackage{hyperref}
\usepackage{tablefootnote}
\usepackage{booktabs}
\usepackage{cite}
\usepackage{amssymb}
\usepackage{amsmath}
\usepackage{subcaption}
\usepackage{proof}
\usepackage{interval,derivative}
\usepackage{float}
\usepackage{tikz}
\usepackage{ebproof}
\usetikzlibrary{arrows.meta, positioning}

\def\itvcc#1#2{\interval{#1}{#2}}
\def\itvoo#1#2{\interval[open]{#1}{#2}}
\def\itvco#1#2{\interval[open right]{#1}{#2}}
\def\itvoc#1#2{\interval[open left]{#1}{#2}}

\def\mylim#1#2#3#4{#1 \xrightarrow[ #3 \to #4 ]{} #2}

\def\listfont{}%

\newcommand{\Real}{{\mathbb R}}
\newcommand{\Pp}{{\mathcal P}}
\newcommand{\Sp}{{\cal S}}

\newcommand{\godel}{G\"{o}del}
\newcommand{\hseqOne}{\ensuremath{\mathcal{G}}}
\newcommand{\hseqTwo}{\ensuremath{\mathcal{H}}}
\newcommand{\seqOne}{\ensuremath{Q}}
\newcommand{\seqTwo}{\ensuremath{P}}
\newcommand{\seqThree}{\ensuremath{R}}
\newcommand{\seqFour}{\ensuremath{S}}
\newcommand{\forml}{\ensuremath{p}}

\newcommand{\mix}{\text{(MIX)}}
\newcommand{\spl}{\text{(SPLIT)}}
\newcommand{\ec}{\text{(EC)}}
\newcommand{\ew}{\text{(EW)}}
\newcommand{\eex}{\text{(EEX)}}
\newcommand{\lex}{\text{(LEX)}}
\newcommand{\rex}{\text{(REX)}}
\newcommand{\com}{\text{(COM)}}
\newcommand{\emp}{\text{(EMP)}}
\newcommand{\weak}{\text{(W)}}
\newcommand{\init}{\text{(init)}}
\newcommand{\contr}{\text{(C)}}

\newcommand{\losssymbol}{{\cal L}}
\newcommand{\lossfn}{\losssymbol}

\newcommand{\naryConj}{\bigwedge\nolimits_M}
\newcommand{\naryDisj}{\bigvee\nolimits_M}

\def\DL{\textrm{differentiable logic}}
\def\DLtwo{\textrm{DL2}}
\def\STL{\textrm{STL$_\nu$}}
\def\STLinfty{\textrm{STL$_\infty$}}
\def\product{\textrm{product}}
\def\Yager{\textrm{Yager$_r$}}
\def\Luka{\textrm{Łukasiewicz}}
\def\Godel{\textrm{Gödel}}

\newcommand{\tDLtwo}[1]{\llbracket #1 \rrbracket_{\DLtwo}}
\newcommand{\tGodel}[1]{\llbracket #1 \rrbracket_{\textrm{G}}}
\newcommand{\tSTL}[1]{\llbracket #1 \rrbracket_{\STL}}
\newcommand{\tSTLinfty}[1]{\llbracket #1 \rrbracket_{\STLinfty}}
\newcommand{\tlukasiewicz}[1]{\llbracket #1 \rrbracket_{\textrm{Ł}}}
\newcommand{\tyager}[1]{\llbracket #1 \rrbracket_{\textrm{Y}}}
\newcommand{\tproduct}[1]{\llbracket #1 \rrbracket_{\textrm{P}}}
\newcommand{\tdl}[1]{\llbracket #1 \rrbracket_{\textrm{DL}}}

\newcommand{\tempty}[1]{\llbracket #1 \rrbracket}

\newcommand{\tdlbig}[1]{\bigg\llbracket #1 \bigg\rrbracket_{\textrm{DL}}}

\newcommand{\vect}[1]{\mathbf{#1}}
\newcommand{\x}{\vect{x}} 
\newcommand{\y}{\vect{y}} 

\newcommand{\Nat}{\mathbb{N}}

\newcommand{\pval}{p}

\newcommand{\VecType}[1]{\ensuremath{\text{Vec } #1}}
\newcommand{\IndexType}[1]{\ensuremath{\text{Index } #1}}
\newcommand{\BoolType}{\ensuremath{\text{Bool}}}
\newcommand{\RealType}{\ensuremath{\text{Real}}}
\newcommand{\FunType}[2]{\ensuremath{\text{Fun } #1 \ #2}}
\newcommand{\FunTwoType}[3]{\ensuremath{\text{Fun}_2\ #1 \ #2 \ #3}}

\newcommand*{\impl}{\Rightarrow}
\newcommand{\semSpace}{\qquad\qquad}

\newcommand{\setBaseUrl}[1]{\def\baseurl#1}
\setBaseUrl{{https://github.com/ndslusarz/formal_LDL/blob/main/}}

\newtcolorbox{authorComment}[1]{colback=#1}

\usepackage[textwidth=2.5cm,textsize=footnotesize,colorinlistoftodos]{todonotes}

\def\rocq{\textsc{Rocq}}
\def\mathcomp{\textsc{MathComp}}
\def\analysis{\textsc{MathComp-Analysis}}

\theoremstyle{definition}\newtheorem{example}{Example}[section]

\theoremstyle{definition}\newtheorem{definition}{Definition}[section]

\theoremstyle{definition}\newtheorem{theorem}{Theorem}[section]

\theoremstyle{definition}

\theoremstyle{definition}\newtheorem{counterexample}{Counterexample}[section]

\begin{document}

\title[A Foundation for Differentiable Logics using Dependent Type Theory]{A 
Foundation for Differentiable Logics using Dependent Type Theory} 


\author*[1]{Reynald Affeldt}\email{reynald.affeldt@aist.go.jp}
\equalcont{Author Contribution: All authors contributed equally to this work.}

\author[2]{Alessandro Bruni}\email{brun@itu.dk}
\equalcont{Author Contribution: All authors contributed equally to this work.}

\author[3,4]{Ekaterina Komendantskaya}\email{e.komendantskaya@soton.ac.uk}
\equalcont{Author Contribution: All authors contributed equally to this work.}

\author[4]{Natalia Ślusarz}\email{nds1@hw.ac.uk}
\equalcont{Author Contribution: All authors contributed equally to this work.}

\author[4]{Kathrin Stark}\email{k.stark@hw.ac.uk}
\equalcont{Author Contribution: All authors contributed equally to this work.}

\affil[1]{National Institute of Advanced Industrial Science and  Technology 
(AIST), Japan}

\affil[2]{IT-University of Copenhagen, Denmark}

\affil[3]{Southampton University, UK}

\affil[4]{Heriot-Watt University, UK}


\keywords{Machine Learning, Loss Functions, Differentiable Logics, Logic and 
Semantics, Interactive Theorem Proving} 

\abstract{
  Differentiable logics are a family of quantitative logics originated
  in the machine learning literature. Because of their origin,
  differentiable logics often come equipped with analytic properties
  that guarantee that they are differentiable. However, they usually
  lack an accompanying theory that describes their algebraic and
  proof-theoretic properties.
  Meanwhile, fuzzy logics, seen as substructural logics, have been
  studied algebraically and proof-theoretically, and some fuzzy logics
  with desirable analytic properties have also been used in machine
  learning.
  Our aim is to systematically compare analytic, algebraic and
  proof-theoretical properties of both fuzzy and differentiable
  logics.

  To this end, we formalize differentiable and fuzzy logics in a
  unified framework, encoded using the \mathcomp{} library in the
  \rocq{} proof assistant.
  We propose a single language specification to encompass multiple
  logics, using intrinsic typing to only allow valid and well-typed
  formulas for each of the logics that we encode: Yager, \Luka{},
  \Godel{} and \product{} fuzzy logics, as well as the differentiable
  logics \DLtwo{} and STL.
  Algebraically, we show how these logics can be interpreted using
  residuated lattices, which are prevalent in the theory
  of substructural logics.
  Analytically, we formalise the existence of a positive derivative
  for certain logical connectives, and to this end we
  formalise L'H\^opital's, contributing it to the \mathcomp{} library.
  Proof-theoretically, we formalise established sequent calculi for
  fuzzy logics, and we propose new sequent calculi for \DLtwo{} and
  \STLinfty{}, and formalise their soundness in our framework.
%

}

\maketitle

\def\newterm#1{{\sl #1}}
\def\mydef{\overset{\textrm{def}}{=}}

\section*{Differences with the conference paper}

Some of the results presented in this journal submission appeared in the 
paper~\cite{ldl-coq}, published at ITP'24.
This journal submission extends considerably the results presented at ITP'24.

We have revised the ITP'24 results and \rocq{} formalisation in two main ways.

Firstly, we have added a new characterisation of \DL{}s in terms of 
\emph{residuated lattices}, an algebraic structure that was traditionally used 
to give algebraic semantics to fuzzy logics~\cite{galatos2007residuated}. This 
characterisation was missing in the \DL{}/machine learning literature; our work 
brings semantic uniformity to the analysis of fuzzy \DL{}s and \DL{}s proposed 
in machine learning papers.

Secondly, while fuzzy logics received a thorough proof-theoretic treatment in 
the literature~\cite{fuzzy-proof, galatos2007residuated},
new, machine-learning inspired \DL{}s had no proof-theoretic semantics. Here, 
we 
formulate a new sound sequent calculus for \DLtwo{} and a 
sound sequent calculus for a restricted version of STL.


Specifically, the following sections and subsections are new:

\begin{itemize}
\item In Sect.~\ref{sec:diff-logic}, the syntax and semantics of 
  \DL{}s is reformulated to fit within a residuated lattice framework.
  In Sect.~\ref{sec:algebraic}, we provide the corresponding proofs.


\item In Sect.~\ref{sec:STLconjunction}, we proved hat the $n$-ary 
conjunction of \STL{} converges to $\min$
when $\nu$ 
tends towards $+\infty$ and is therefore 
a lattice operation.
    
\item In Sect.~\ref{sec:lhopital}, we have generalised L'H\^opital's
  rule. 
  
\item In Sect.~\ref{sec:soundness},
we define hypersequent calculi for \DL{}s and prove their soundness. 

\item In Sect.~\ref{sec:ex-formalised}, we give an 
application of \DL{}s in machine learning.

\end{itemize}

\section{Introduction}
\label{sec:introduction}

This work aims at contributing to the field of formal verification of
machine learning seen as the
study of algorithms that learn statistically from data. Neural networks
are the most common technical device used in machine learning.
The standard learning algorithms (such as gradient descent) use a
\newterm{loss function}
$\losssymbol: \Real^m \times \Real^n \rightarrow \Real$ to optimise
the neural network's parameters (say, $\theta$) to fit the input-output
vectors given by the data in a way that the loss $\losssymbol(\x, \y)$
is minimised. This optimisation objective is usually denoted as
$\min_\theta \lossfn(\x, \y)$. 

Most approaches to verification of neural networks consist of an
automated procedure based on SMT solving, abstract interpretation, or
branch-and-bound techniques (see, e.g., Albarghouthi's
survey~\cite{albarghouthi2021introduction}).
Verification typically applies after training because traditional
learning is purely data-driven and thus agnostic to verification
properties.
In contrast, \newterm{property-guided training\/} takes place once the
verification properties are stated.
More precisely, verification of neural networks consists of two parts:
statement and verification of a given property, and training
of the neural network that optimises the neural network's parameters
towards satisfying the given property.

However, naively or manually performed mapping of a logical property
to an optimisation task results in major discrepancies (as shown by
Casadio et al.~\cite{CasadioKDKKAR22}). This suggests the need to have 
principled
tools for property-guided training. One approach is to provide
programming language support for property-guided development of neural
networks that involves specification, verification, and optimisation in
a safe-by-construction environment.
As an example, Vehicle~\cite{FoMLAS2023,DaggittKAKS025} provides this 
support in the form of
a Haskell DSL, with
(1)~a higher-order typed specification
language, in which required neural network properties can be clearly
documented, and 
(2)~type-driven compilation, which can take care of
correct-by-construction translation of properties into both the
language of neural network verifiers it interfaces with and loss 
functions.

To generate loss functions from a logical property, one can use
\newterm{differentiable logics} (DLs).
One possibility for such \DL{}s are the
well-studied \newterm{fuzzy logics} that date back to the works of
Łukasiewicz and G\"{o}del~\cite{van2022analyzing}.
\begin{example}
\label{ex:small-dl-example}
The semantics of a \DL{} is a map $\tempty{\cdot}$ from a given syntax to a 
real number-valued function.
Consider, as an example of a fuzzy logic, \product{} logic~\cite[Sect. 
3]{van2022analyzing}.
Let $\phi$ be a small fragment of propositional syntax: 
$\phi := a \in A\ |\ \bot\ |\ \varphi_1 \land 
\varphi_2$, where $\varphi_1,\varphi_2 \in \phi$,
and $A$ is a set of real number-valued propositional variables. 
In \product{} logic, the semantics of this small syntax would be
$\tempty{a} := a$,
$\tempty{\bot} := 0$, and
$\tempty{\varphi_1 \land \varphi_2} := \tempty{\varphi_1} \times 
\tempty{\varphi_2}$.
\end{example}

As an alternative, both verification and machine-learning communities
formulated \DL{}s such as \DLtwo~\cite{fischer2019dl2} and
STL~\cite{varnai}.

Existing \DL{}s are very different from one another. For example, they do
not agree on the domains of the resulting loss functions: fuzzy logics
have as domain the closed interval\footnote{We use the notation $\itvoo{a}{b}$ for 
intervals open on both sides, $\itvcc{a}{b}$ for closed, and $\itvco{a}{b}$ and 
$\itvoc{a}{b}$ for closed only on the left/right side respectively.} $\itvcc{0}{1}$,
the original domain of \DLtwo{} (reversed in this work for the purpose of 
consistent treatment of greatest value as true) corresponds to the 
\newterm{Lawvere quantale}~\cite{Law73}
$\itvco{0}{+\infty}$, 
and
\STL{}'s domain is $\itvoo{-\infty}{+\infty}$ (all intervals begin
equipped with the usual ordering on reals).  Each domain has a
designated value or interval for truth 
 and falsity. 
 In 
consequence, it is unclear which \DL{} is more
suitable for property-guided training and how the \DL{}s compare.
Empirical evaluations, such as Flinkow et al.~\cite{flinkow2025comparing}, 
cannot
provide a clear principled answer to the problem of developing suitable
\DL{}s.

%
This situation calls for a principled unified mathematical study of \DL{}s.
One way to do this is to take into account a century-long tradition of 
algebraic and proof-theoretic studies of fuzzy 
logics~\cite{galatos2007residuated,fuzzy-proof}, and show how those 
well-established results extend to, and shed the light on, the new \DL{}s 
proposed in the machine learning context. On the other hand, machine learning 
 comes with is own approach to analysing mathematical (this time 
understood as \emph{analytical}) properties of \DL{}s. Thus, the second 
research question to ask is whether, and how, the century-old fuzzy logics 
(rendered as \DL{}s) fit into this machine learning context. 
This research program poses a challenge, due to both 
the large zoo of existing \DL{}s as well as the diversity of the mathematical 
methods used by 
different 
communities to motivate and analyse these logics.

\emph{Why reconciling algebraic and analytic traditions is hard 
for \DL{}s?}
On the one hand, even though \DL{}s derived from fuzzy logics can be 
characterised by
the existing algebraic semantics of fuzzy 
logics~\cite{galatos2007residuated,fuzzy-proof},
the majority of these logics do not enjoy desirable analytic properties,
such as \newterm{smoothness} or \newterm{shadow-lifting}~\cite{varnai}, which 
are key for machine learning applications.
On the other hand, the new \DL{s} inspired by 
machine learning~\cite{fischer2019dl2,varnai}
are not very well understood from the algebraic point of view.
For example, Varnai and Dimarogonas~\cite{varnai} prove that 
\DL{} operations cannot satisfy simultaneously the properties of 
shadow-lifting, idempotence, and associativity. Yet, associativity is 
traditionally the key property in the 
algebraic analysis of \DL{}s~\cite{galatos2007residuated}.

Similarly, proof-theoretic study of \DL{}s is lacking.
While some fuzzy logics enjoy sound and/or complete propositional sequent
calculi~\cite{fuzzy-proof}, the new \DL{}s suggested by machine
learning communities have not been studied from the proof-theoretic
perspective.


We believe that a systematic investigation of \DL{}s is a necessary
step towards the development of a reliable neural network verification
tool, and towards formal verification of machine learning in general.
We therefore propose to formalise them in an interactive theorem prover, 
providing a
unified formalisation of \DL{}s in which their properties can be
compared rigorously and in which one can verify the translation from
properties to loss functions.
For that purpose, we chose the \rocq{} prover:
firstly, because it has been used to carry out numerous formalisations of logics
and programming languages; 
secondly, because the Mathematical Components project provides libraries to
handle the algebraic properties of \DL{}s (using
\mathcomp{}~\cite{mathcomp}) and the analytic properties such as
shadow-lifting (using \analysis{}~\cite{analysis}).

Our approach to a systematic investigation of \DL{}s is to build on
top of previous work that has already proposed a common presentation
of \DL{}s~\cite{ldl} and to advance the understanding of mathematical
properties of \DL{}s along the following three axes:
\begin{itemize}
\item \newterm{Algebraic\/}: The idea is to define algebraic semantics
  for \DL{}s that guide our general understanding and comparison of
  their properties.
  For that purpose, we use \newterm{residuated
    lattices}.
  Residuated lattices are algebraic structures that have been
  suggested as a suitable model for quantitative
  \newterm{substructural 
    logics}~\cite{galatos2007residuated,fuzzy-proof}.
  We show which of the new \DL{}s can be mapped to residuated lattices and make 
  a comparative inventory
  of the structural properties of the logical connectives of \DL{}s.
  %
\item \newterm{Analytic\/}: This axis concerns the properties of
  \DL{}s that rely on the use of real analysis, because of the use of
  limits and differentiation. A central example is shadow-lifting
  already mentioned above. It states that the resulting functions
  should have partial derivatives that can characterise the idea of
  gradual improvement in training~\cite{varnai}. For example, the
  translation of a conjunction should evaluate to a higher value if
  the value of one of its conjuncts
  increases. Example~\ref{ex:small-dl-example} illustrates how such a
  conjunction may look.
  We formally prove or disprove (the latter on pen and paper) shadow-lifting, 
  for each chosen \DL{}.
\item \newterm{Proof-theoretic\/}:
  Proof theory, and in particular Gentzen sequent calculi, have long
  been used to study computational properties of logics.  While fuzzy
  logics already have well-defined proof-theoretic
  semantics~\cite{fuzzy-proof}, the question of whether new \DL{}s may
  enjoy such semantics is still open.  We present sound sequent
  calculi for \DLtwo~\cite{fischer2019dl2} and \STLinfty{}, and argue why 
  \STL{} is
  unlikely to have a Gentzen-style proof calculus (see 
  Sect.~\ref{sec:realvaluedsemantics} for definitions of \STLinfty{} and 
  \STL{}).
\end{itemize}


%


The approach explained above led us to the following contributions:
\begin{enumerate}
\item We explain how to encode known \DL{}s in a single generic syntax
  using \rocq{}, building on known techniques for logic embedding,
  such as taking advantage of dependent types to achieve intrinsic
  typing.  The formalisation is comprehensive and extensible for
  future use.
\item We demonstrate how to use the Mathematical Components libraries
  for our purpose. For example, we take advantage of algebraic
  tactics~\cite{algebratactics} to automate algebraic and soundness
  proofs, and we develop reusable lemmas, such as L'H\^opital's rule
  to establish shadow-lifting.
\item As a result of our formalisation, we are able to find and fix
  errors in the literature, and to produce original results
  missing parts of the
  shadow-lifting proofs, and soundness of \DLtwo{} and \STLinfty{} with respect 
  to their
  new sequent calculi.
\end{enumerate}

The paper proceeds as follows.
Section~\ref{sec:preliminary} gives a preliminary introduction to 
differentiable logics, including their original definitions and initial broad 
comparison.
Section~\ref{sec:residuatedlattices} provides mathematical background on 
residuated 
lattices and their axioms.
%
Section~\ref{sec:propertiesDLs} gives an overview of the main results of the 
paper.
Section~\ref{sec:syntax} explains how one can define \DL{}s in \rocq{} using a 
generic
encoding, including a translation function producing the semantics.
Section~\ref{sec:algebraic} focuses on the formalisation of algebraic
properties  

 for \DL{}s.
In Sect.~\ref{sec:realanalysis}, we demonstrate the formal
verification of properties of \DL{}s relying on real analysis.
In Sect.~\ref{sec:soundness}, we show the formalisation of soundness with 
respect to hypersequent calculi of fuzzy \DL{}s and define a new calculus for 
\DLtwo{}. 
Section~\ref{sec:example} explains the application of \DL{}s in property-guided 
training by means of an example, and shows the formalisation of an example 
property.
We discuss related work in Sect.~\ref{sec:relatedwork} and conclude in
Sect.~\ref{sec:conclusion}.

\section{Preliminary Introduction of Differentiable Logics}
\label{sec:preliminary}

In this section, we preliminarily introduce the differentiable logics used 
within 
this work 
just as they are outlined in existing literature, which we extend in subsequent 
sections.
We highlight the initial relative limitations of their definitions when 
compared to one another.

We begin with \textbf{fuzzy differentiable logics}~\cite{van2022analyzing}, which include \Godel{}, \Luka{}, \Yager{} and \product{} logics.

\begin{definition}[Fuzzy Differentiable Logics~\cite{van2022analyzing}]
\label{def:fuzzy-lit}
Let $\phi$ be a propositional syntax: 
$\phi := a \in A\ | \top\ |\ \bot\ |\ \neg \varphi\ |\ \varphi_1 \land 
\varphi_2 \ |\ \varphi_1 \lor \varphi_2 \ |\ \varphi_1 \odot 
\varphi_2 \ |\ \varphi_1 \oplus
\varphi_2\ |\ \varphi_1 \impl \varphi_2$, where $\varphi_1,\varphi_2 \in \phi$ and $A$ is a set of real number-valued propositional variables. 
The semantics of fuzzy logic, denoted $\tempty{\cdot}_{\text{DL}}$, where DL $\in$ \{\Godel, \Luka, \Yager, \product\}, are defined in Table~\ref{tab:fuzzy-logics}.

\end{definition}

\begin{table}[ht!]
\footnotesize

{\renewcommand{\arraystretch}{1.4}
\begin{tabularx}{\textwidth}{|p{4.7em}|X|X|X|p{2.3em}|p{2.3em}|}
\hline
 \rowcolor{lightgray}            & $ \tdl{\varphi_1 \land \varphi_2} $ & $ \tdl{\varphi_1 \lor \varphi_2} $ &  $ \tdl{\lnot p} $ & $\tdl{\top}$ & $\tdl{\bot}$\\
\hline
\text{\Godel}      & $ \min(\tGodel{\varphi_1},\tGodel{\varphi_2}) $ & $ \max(\tGodel{\varphi_1},\tGodel{\varphi_2}) $
 &$\begin{cases}
                 1 & \text{if } \tGodel{p}=0\\
                 0 & \text{otherwise}
               \end{cases}$  & 1 & 0\\
\hline
\text{\Luka}  & $ \min(\tlukasiewicz{\varphi_1},\tlukasiewicz{\varphi_2})$ & $ \max(\tlukasiewicz{\varphi_1},\tlukasiewicz{\varphi_2})$ & $ 1-\tlukasiewicz{p} $ & 1 & 0\\
\hline
\text{\Yager}        & $ \min(\tyager{\varphi_1},\tyager{\varphi_2})$ & $ \max(\tyager{x},\tyager{\varphi_2})$ & $1 - (1 - (1-\tyager{p})^r)^{1/r} $ & 1 & 0\\
\hline
\text{\product}      & $ \min(\tproduct{\varphi_1},\tproduct{\varphi_2})$ & $ \max(\tproduct{x},\tproduct{\varphi_2})$ & $\begin{cases}
                 1 & \text{if } \tproduct{p}=0\\
                 0 & \text{otherwise}
               \end{cases}$  & 1 & 0\\
\hline
\end{tabularx}
}

{\renewcommand{\arraystretch}{1.4}
\begin{tabularx}{\textwidth}{|p{4.7em}|X|X|p{15em}|}
\hline
 \rowcolor{lightgray}    &   $ \tdl{\varphi_1 \odot \varphi_2} $ & $ \tdl{\varphi_1 \oplus \varphi_2} $ & $ \tdl{\varphi_1 \impl \varphi_2} $  \\
\hline
\text{\Godel}    & $ \min(\tGodel{\varphi_1},\tGodel{\varphi_2}) $ & $ \max(\tGodel{\varphi_1},\tGodel{\varphi_2}) $ &
$\begin{array}{@{}l@{}}
\begin{cases}
1 & \text{if } \tGodel{\varphi_1} \leq \tGodel{\varphi_2}\\
\tGodel{\varphi_2} & \text{otherwise}
\end{cases}
\end{array}$ \\
\hline
\text{\Luka} & $ \max(\tlukasiewicz{\varphi_1} + \tlukasiewicz{\varphi_2} - 1, 0) $ 
  & $ \min(\tlukasiewicz{\varphi_1} + \tlukasiewicz{\varphi_2},1) $
  & $\min(1 - \tlukasiewicz{\varphi_1} + \tlukasiewicz{\varphi_2}, 1)$ \\
\hline
\text{Yager} & $ \max (1 - ((1-\tyager{\varphi_1})^r + (1-\tyager{\varphi_2})^r)^{1/r},0)$
& $\min((\tyager{\varphi_1}^r + \tyager{\varphi_2}^r)^{\frac{1}{r}},1)$
& $\begin{array}{@{}l@{}}
  \begin{cases}
  1 \text{ if } \tyager{\varphi_1} \leq \tyager{\varphi_2} &\\
  1 - ((1-\tyager{\varphi_1})^r - & \\
  \text{ \ \ \  }(1-\tyager{\varphi_2})^r)^{1/r} & \text{ o.w.}
  \end{cases}
\end{array}$ \\
\hline
\text{Product}   & $  \tproduct{\varphi_1} \times \tproduct{\varphi_2}  $   & $\tproduct{\varphi_1} + \tproduct{\varphi_2} - \tproduct{\varphi_1} \times \tproduct{\varphi_2}$ &
$\begin{array}{@{}l@{}}
\begin{cases}
1 & \text{if } \tproduct{\varphi_1} \leq \tproduct{\varphi_2} \\
\frac{\tproduct{\varphi_2}}{\tproduct{\varphi_1}} & \text{otherwise}
\end{cases}
\end{array}$ \\
\hline
\end{tabularx}
}
\caption{Semantics of connectives of fuzzy logics, where $\varphi_1, \varphi_2 
\in \phi$  and $r \in \Real^+$.
}
\label{tab:fuzzy-logics}
\end{table}

Fuzzy differentiable logics are the most well-researched in terms of their 
algebraic properties, due to the long history of logic research. 
In particular, while not presented this way by Krieken et 
al.~\cite{van2022analyzing}, who first suggest their application as 
differentiable logics, they are traditionally defined using residuated 
lattices~\cite{galatos2007residuated}---the definition followed in 
Definition~\ref{def:fuzzy-lit}---which we explain in more detail in 
Sect.~\ref{sec:residuatedlattices}. 
Additionally, they are the only differentiable logics to come with a proof 
theoretic interpretation.

The remaining \DL{}s have their origin in machine learning.

\textbf{DL2} is defined around the concept of incorporating comparisons between terms, often present in machine learning properties (e.g., robustness), as the core part of the language~\cite{fischer2019dl2}:

\begin{definition}[DL2~\cite{fischer2019dl2}]
\label{def:dl2-lit}
Let $\phi$ be a propositional syntax extended with comparisons between real values: 
$\phi := a \in A\ |\ a_1 \leq a_2\ |\ a_1 \neq a_2\ |\ \top\ |\ \varphi_1 \land 
\varphi_2 \ |\ \varphi_1 \lor 
\varphi_2$, where $\varphi_1,\varphi_2 \in \phi$,
and $A$ is a set of real number-valued propositional variables. 
The semantics of DL2, denoted $\tDLtwo{\cdot}$, is:
\begin{align*}
\tDLtwo{a} &= a\\
\tDLtwo{\top} &= 0\\
\tDLtwo{a_1 \leq a_2} & = \max (\tDLtwo{a_1} - \tDLtwo{a_2}, 0)\\
\tDLtwo{a_1 \neq a_2} & = \epsilon \times [\tDLtwo{(r_1)} = \tDLtwo{(r_2)}]\\
\tDLtwo{\varphi_1 \land p_2} & = \tDLtwo{\varphi_1} + \tDLtwo{\varphi_1} \\
\tDLtwo{p_1 \lor p_2} & = \tDLtwo{\varphi_1} \times \tDLtwo{\varphi_1}
\end{align*}
\noindent where $\epsilon \in \Real^+$ is a parameter and $[\cdot]$ is an indicator function.

\end{definition}

The remaining comparisons between terms can be defined in terms of the above (e.g., $r_1 < r_2 = (r_2 \leq r_1) \land (r_2 \neq r_1)$).
 DL2 does not have negation as part of the syntax---it is instead pushed to the 
 level of comparisons between real numbers, which serve as base expressions. 
 
 \DLtwo{} is the only \DL{} to explicitly include comparisons between real 
 numbers in its syntax.
 Comparisons are a core part of the verification properties (e.g., 
 robustness~\cite{CasadioKDKKAR22}) and their inclusion this allows \DLtwo{} 
 more control over their behaviour, simultaneously making their treatment more 
 well-defined.

Last of the \DL{}s present in this work, \textbf{STL}, takes its name from 
Signal Temporal Logic for which it was originally designed~\cite{varnai}.
It has the most algebraically complex semantics of the individual connectives among the DLs considered in this work, as it was defined specifically to fulfil a set of desirable properties, with the most important one being shadow-lifting (defined in Section~\ref{sec:axes-of-study}).

Its primary connective is $n$-ary, non-associative conjunction.

\begin{definition}[STL~\cite{varnai}]
\label{def:stl-lit}
Let $\phi$ be a propositional syntax extended with comparisons between real values: 
$\phi := a \in A\ |\ \naryConj{(\varphi_1,\ldots,\varphi_M)} \ |\ \neg \varphi$, where $\varphi_i \in \phi$,
and $A$ is a set of real number-valued propositional variables. 
The semantics of STL, denoted $\tDLtwo{\cdot}$, is:

$\begin{array}{rl}
\tSTL{a} &  = a\\
\tSTL{\neg \varphi} & = - \tSTL{\varphi}\\
\tSTL{\naryConj{(\varphi_1,\ldots,\varphi_M)}}  & = \begin{cases}
  \dfrac{\sum_i \varphi_{\min} e^{\widetilde{\varphi_i}} 
  e^{\nu 
  \widetilde{\varphi_i}}}{\sum_i e^{\nu 
  \widetilde{\varphi_i}}} & 
                      \text{if}\ 
                      \varphi_{\min} < 0 \\
  \dfrac{\sum_i \varphi_i e^{-\nu 
  \widetilde{\varphi_i}}}{\sum_i e^{-\nu 
  \widetilde{\varphi_i}}} & \text{if}\ 
                      \varphi_{\min} > 0 \\
  0 & \text{if}\ \varphi_{\min} = 0 \\
\end{cases} \\
\end{array}$

 where $\nu \in \Real^+$, $\varphi_{\min} = \min(\varphi_1, \ldots, \varphi_M)$, and $\widetilde{\varphi_i} = \dfrac{p_i - \varphi_{\min}}{\varphi_{\min}}$.		

\end{definition}

The domain of STL is $\itvoo{-\infty}{\infty}$, where all positive values 
denote truth, and all negative values falsity. Its disjunction can be obtained 
using 
negation and conjunction (i.e., $p_1 \lor p_2 = \neg ((\neg p_1) \land (\neg 
p_2))$).

This initial introduction allows us to point out the main differences in the 
treatment of differentiable logics as well as compare them in terms of the 
properties they satisfy in their original state.
We have already noted that fuzzy \DL{}s have better investigated properties, 
particularly algebraic, as they can be defined using the well-known structure 
of residuated lattices. In contrast \DLtwo{} introduces novel syntactic elements
tailored to the application in neural network training (see 
Sect.~\ref{sec:example} for a detailed explanation of the training process).

Additionally, from the perspective of their application, only \DLtwo{} and STL, 
the 
logics with their roots in machine learning are differentiable, as fuzzy 
differentiable logics include connectives interpreted as $\min$ and $\max$. 
This can make them more convenient in usage for training, where 
differentiability is a desired property.

\section{Mathematical background on residuated lattices}
\label{sec:residuatedlattices}

In this section, we recall the notion 
of residuated lattice that will be key
to organise our formalisation of differentiable logics. 

Residuated lattices have been proposed as a suitable algebraic
structure to characterise a range of \newterm{substructural 
logics}~\cite{galatos2007residuated}, i.e., logics that are stronger than 
intuitionistic logic, but weaker than classical logic.
These include linear logic, paraconsistent and relevant logics, and,
most importantly for our current study, real number-valued and fuzzy
logics~\cite{galatos2007residuated,fuzzy-proof}.  We devote this
section to definition of residuated lattices, and will use the
knowledge of these algebraic properties as a guiding principle in
later sections.

In the below definition we use real numbers as a carrier set, since
real numbers are of specific interest in this work. Often, the below
definition is given with an abstract carrier set.

\begin{definition}
\label{def:lattice}
An algebra $\mathbf{RL} = (\Real, \lor, \land, \odot, \impl, 1 )$ is called 
a 
\newterm{residuated lattice} if
\begin{itemize}
    \item $(\Real, \lor, \land)$ is a lattice (i.e., $\lor, \land$ are 
    commutative, associative, and mutually absorptive, and thus give rise to 
    a lattice order $\leq$),
    \item $(\Real, \odot, 1)$ is a monoid (i.e., $\odot$ is commutative and
    associative with unit element 1),
    \item $x \odot y \leq z$ iff $y \leq x \impl z$ for all $x,y,z \in 
    \Real$.
\end{itemize}

We say that a residuated lattice is \newterm{distributive} if 
$x \land (y \lor z) \leq (x \land y) \lor (x \land z)$ for all $x,y,z \in 
    \Real$.
\end{definition}

The above definition gives rise to the axioms R1--R10 of 
Table~\ref{table:axiom-schema-lattice}. 

To see why this algebra has played a pivotal role in substructural
logics, notice that the lattice structure $(\Real, \lor, \land)$ 
corresponds to the additive fragment of linear
logic~\cite{agliano2025algebraic}. The monoidal structure
$(\Real, \odot, 1)$ corresponds to the multiplicative fragment of linear logic; 
for
example, $\odot$ can be given by multiplication on real numbers. 
See Galatos et al.~\cite{galatos2007residuated} for a more detailed analysis of
linear logic interpretation in residuated lattices.

Finally, thanks to the ``residuation axiom'' $x \odot y \leq z$ iff
$y \leq x \impl z$ (R10 in Table~\ref{table:axiom-schema-lattice}), we can
interpret implication as an adjoint operator to $\odot$, which gives a
quantitative interpretation of implication in residuated lattices.

\begin{example}
\label{ex:product-resid-lattice}
As an illustration, consider again \product{} logic 
(Example~\ref{ex:small-dl-example}) extended to a 
residuated lattice:

$$\mathbf{P} =  \left( \itvcc{0}{1}, \max, \min, \times, \Rightarrow, 1 
 \right) \textrm{where} $$

 $$\begin{array}{@{}l@{}}
	x \Rightarrow y \mydef
 \begin{cases}
1 & \text{if } x \leq y\\
\frac{y}{x} & \text{otherwise}
\end{cases}.
\end{array}$$
Its carrier set is $\itvcc{0}{1}$ and is characteristic for fuzzy logics. The 
lattice connectives are $\max$ and $\min$; they are commonly used within this 
context 
for other fuzzy logics. 
It is easy to check that $\times$ and $\Rightarrow$ satisfy the residuation 
axiom
(R10).
\end{example}

When defining a logic we expect to be given a definition of $\bot$ and/or $\impl$.
We define negation directly and, if $\bot$ and $\impl$ are defined, check whether it satisfies property (N1) because we observed in practice that its absence complicates the design of sequent calculi (see Sect.~\ref{sec:weak-compl}).
In line with the
literature~\cite[Sect. 2]{agliano2025algebraic},
we also define the properties of involutive negation (N1--N4) in Table~\ref{table:axiom-schema-neg}.
Properties (M1--M3) of Table~\ref{table:axiom-schema-mdual} define what it means for an operator $\oplus$ to be a \newterm{monoidal dual}~\cite[Sect. 2]{agliano2025algebraic}.
\begin{table}[t]
Axiomatisation of distributive residuated lattices:
\begin{tabularx}{\textwidth}{rXr}
R1 & $x \land y = y \land x$ & (Commutativity of $\land$) \\
R2& $x \land (y \land z) = (x \land y) \land z$ & (Associativity 
 of $\land$) 
 \\
R3& $x \lor y = y \lor x$ &  (Commutativity of $\lor$)\\
R4& $x \lor (y \lor z) = (x \lor y) \lor z$ &  (Associativity of 
 $\lor$)\\
R5& $x \land (x \lor y) = x $ &  (Mutual absorption 1)\\
R6& $x \lor (x \land y) = x $ &  (Mutual absorption 2)\\
R7& $x \land (y \lor z) \leq (x \land y) \lor (x \land z)$&  
(Distributivity) \\
R8& $x \odot  (y \odot  z) = (x \odot  y) \odot z$ &  
 (Associativity of $\odot$)\\
R9& $x \odot 1 = x$ & (Unit element) \\
R10& $x \odot y \leq z$ iff $y \leq x \impl z$ & (Residuation) \\
\end{tabularx}
\caption{Standard axiomatisation of residuated lattices.}
\label{table:axiom-schema-lattice}
\end{table}
\begin{table}[t]
Axiomatisation of negation:
\begin{tabularx}{\textwidth}{rXr}
N1 & $\neg x = x \impl \bot$ &  (Negation as implication)\\
N2 & $\neg \neg x = x$ &  (Involution)\\
N3 & $\neg (x \land y) = \neg x \lor \neg y$ &  (De Morgan 1)\\
N4 & $\neg (x \lor y) = \neg x \land \neg y$ &  (De Morgan 2)\\
\end{tabularx}
\caption{Properties of negation.}
\label{table:axiom-schema-neg}
\end{table}
\begin{table}[t]
Axiomatisation of monoidal dual:
\begin{tabularx}{\textwidth}{rXr}
M1 & $x_0 \oplus  (x_1 \oplus  x_2) = (x_0 \oplus  x_1) \oplus x_2$ &  
 (Associativity of $\oplus$)\\
M2 & $\neg (x \odot y) = \neg x \oplus \neg y$ &  (De Morgan 3)\\
M3 & $\neg (x \oplus y) = \neg x \odot \neg y$ &  (De Morgan 4)\\
\end{tabularx}
\caption{Properties of monoidal dual.}
\label{table:axiom-schema-mdual}
\end{table}

\section{Overview of the approach and main results}
\label{sec:propertiesDLs}

We first recall the syntax and semantics of differentiable logics
using one generic syntax (generic enough to represent all \DL{}s as
well as concrete applications) and an informal, yet complete
explanation of their semantics (Sect.~\ref{sec:diff-logic}).
We split one of the \DL{}s, STL, into two versions: \STL{} and \STLinfty{}, 
based on the value of parameter $\nu$ used which is used in its semantics.
%
We then explain our approach which consists of three axes (algebraic,
analytic, proof-theoretic, see Sect.~\ref{sec:introduction}) for the
properties we wish to establish across all \DL{}s.
We then summarise the known properties of different \DL{}s as well as
an overview of our results which we formalise throughout this work
(Sect.~\ref{sec:summary-props}).

\subsection{A unified presentation of differentiable logics}
\label{sec:diff-logic}

Ślusarz et al.~\cite{ldl} suggest a common syntax for multiple \DL{}s,
calling it the \newterm{logic of differentiable logics}, and
subsequently obtain different \DL{}s via different interpretation
functions.  In the following, we summarise the syntactic and semantic
features of \DL{}s following this formulation.  Compared to
Ślusarz et al.~\cite{ldl} and the previous version of this
paper~\cite{ldl-coq}, we extend this syntax to cohere with the set of
connectives used in other studies of substructural 
logics~\cite{galatos2007residuated,fuzzy-proof}.

\begin{figure}
\begin{tabular}{rcl}
\text{type} & ::= & \BoolType{} $\mid$ \RealType{} $\mid$ \\
& & \IndexType{n} $\mid$ \VecType{n} $\mid$ \FunType{m}{n} $\mid$
    \FunTwoType{l}{m}{n} $\quad$ $(l,m,n\in\Nat)$ \\
\text{exprIndex} & ::= & $i \in \Nat$ \\ 
\text{exprFun} & ::= & $f \in \Real^m \to \Real^n$ \\
\text{exprFun2} & ::= & $f_2 \in \Real^l \to \Real^m \to \Real^n$ \\
\text{exprVec} & ::= & $v \in \Real^n \mid f \; v \mid f_2 \; v_0 \; v_1$ \\
\text{exprR} & ::= &  $r \in \Real \;\mid\; [v]_i$  \\
$p, q \in$ \text{exprB} & ::= &
  $\top$ $\mid$ $\bot$ $\mid$  $r_1 \leq r_2$ $\mid$ $r_1 = r_2$ \\ 
& & $\neg p$ $\mid$ $p_0 \odot p_1$  $\mid$ $p_0 \oplus p_1$ $\mid$ $p_0 \impl 
p_1$ $\mid$
  $p_0 \land p_1$ $\mid$ $p_0 \lor p_1$ $\mid$  \\
\end{tabular}
\caption{Types and expressions of \DL{}s.}
\label{fig:syntax-types-math}
\end{figure}

\subsubsection{Syntax of \DL{}s}

The syntax of \DL{}s consists of types and expressions
(Fig.~\ref{fig:syntax-types-math}). It includes a mixture of features present 
in the individual \DL{}s, as outlined in Sect.~\ref{sec:preliminary}.

Types are given by Boolean values or real numbers, indices,
vectors, and function types (\FunType{m}{n} for unary functions and
\FunTwoType{l}{m}{n} for binary functions). 
The addition of the binary 
functions, absent in the previous version of this work~\cite{ldl-coq}, was 
deemed necessary when considering real-life properties, which can often include 
functions with two arguments.
Real expressions are either real numbers (ranged over by $r$) or
the result of addressing a vector of real numbers.
Vectors (ranged over by $v$) can be addressed using indices (ranged
over by $i$) or can be obtained by applying unary or binary functions
(ranged over by $f$).
Formulas (or Boolean expressions, ranged over by $p$) are built from $\top$ and 
$\bot$, from application of the predicates $\leq$ and $=$ to real 
expressions, or by combining formulas using logical connectives ($\neg$,
$\odot$, $\oplus$, $\impl$, $\land$, and $\lor$).
Here, we combine the best features of the original syntax of both \DLtwo{} 
(Definition~\ref{def:dl2-lit}), with its comparisons and fuzzy \DL{}s 
(Definition~\ref{def:fuzzy-lit}), which follow residuated lattices.

Because \STL{} by Varnai and Dimarogonas~\cite{varnai} lacks
associativity for $\lor$ and $\land$ (Definition~\ref{def:stl-lit}), we define 
them as $n$-ary connectives
separately using the following notation:
$$ \naryConj (p_0, \ldots, p_M) = p_0 \land \ldots \land p_M,$$
$$\naryDisj (p_0, \ldots, p_M) = p_0 \lor \ldots \lor p_M.$$

We forgo the originally included
quantifiers, lambda and let expressions~\cite{ldl} to obtain a simpler core
language in which the three types of properties of interest
(algebraic, analytic, and proof-theoretic) can be studied.

\subsubsection{Semantics of \DL{}s}
\label{sec:realvaluedsemantics}

To define a \DL{}, one defines an interpretation function
$\tdl{\cdot}$ that, given an expression in the syntax of
\DL{}s, returns a function on real numbers.
We introduce all \DL{}s in a generic way and use the meta notation
$\tdl{\cdot}$ to refer to a range of interpretation
functions, with
$\DL \in \{\Godel, \Luka, \Yager, \product, \DLtwo, \STL,
\STLinfty\}$. Some logics---\Yager{} and \STL---are parametrised by a 
non-negative $r, \nu \in \Real$ which are logics with a different 
fixed $\nu$ each.
Note also the existence of \STLinfty{}---a particular case of \STL{} where 
$\nu$ is $\infty$; its relevance and behaviour is discussed in detail in
Sect.~\ref{sec:propertiesDLs}.

Table~\ref{tab:semantics-math} shows the interpretation of all
\DL{}s. 
First are the three \DL{}s based on well-known fuzzy logics: \Godel,
\Luka~\cite{lukasiewicz1920three}, and \product{}~\cite{van2022analyzing}. 
We also include a fuzzy logic based on \Yager{} \newterm{t-norms} (binary functions 
which are commutative, associative, monotonic and have an identity element), 
which we refer to as \Yager{} logic from now on). 
\Yager{} t-norms belong to a 
wider family of parametrised t-norms~\cite{klement2013triangular} the behaviour 
of which changes depending on the value of the parameter. 
We assume $r > 0$ 
which makes the \Yager{} logic isomorphic to \Luka{} (and equal to \Luka{} 
at $r = 1$) and which at $r$ tending to $\infty$ converges to \Godel{}'s 
connectives.

All fuzzy logics have the interpretation domain of $\itvcc{0}{1} \subset 
\Real$.  Other logics have
different domains: \DLtwo~\cite{fischer2019dl2} has the interpretation
domain $\itvoc{-\infty}{0} $, and \STL~\cite{varnai} the domain 
$\itvoo{-\infty}{+\infty}$.

Note the lack of semantics of $\top$ and $\bot$ for \STL{} and \DLtwo{}. Both 
DLs define the notion of truth and falsity using intervals $\itvoo{0}{\infty}$ 
and 
$\itvoo{-\infty}{0}$ for \STL{} respectively, and $\itvoo{-\infty}{0}$ for 
false for 
\DLtwo. 

Also note that while negation is undefined for 
\DLtwo{} as a structural transformation of the syntax of formulas, it is 
implemented at
the level of atomic comparison only, and negation on composite formulas is 
provided
as syntactic sugar~\cite[Sect.~3]{fischer2019dl2}. For example, consider
$\tempty{3 = 3}_{\DLtwo} = 0$. In $\DLtwo$, it is not
defined in term of  $\tempty{3 = 3}_{\DLtwo}=0$, but $\DLtwo$ provides an
interpretation for the symbol~$\neq$ separately (we do not include all 
comparisons within the syntax).

The binary predicates $\leq$ and $=$ are defined in a way that ensures that 
they are interpreted within 
the chosen real interval for the  given \DL. 
Note that the interpretation of $\leq$ does not necessarily cohere with the 
interpretation of $\impl$ as might be expected. 
This has been done to follow more closely the behaviour of the interpretation 
function for the comparisons for \DLtwo{} (with additional allowances made for 
the domain)---that is, emphasising more the standard way in which real numbers 
can be compared more than the behaviours of the individual logics.
Because of this, we have a single, general comparison interpretation applicable 
to all fuzzy logics, as it is connected to their shared domain rather than 
different interpretations of implication.
 
\newcommand{\topline}{\arrayrulecolor{black}\specialrule{0.1em}{\abovetopsep}{0pt}%
	\arrayrulecolor{lightgray}\specialrule{\belowrulesep}{0pt}{0pt}%
	\arrayrulecolor{black}
}

\newcommand{\midline}{\arrayrulecolor{lightgray}\specialrule{\aboverulesep}{-1pt}{0pt}%
	\arrayrulecolor{black}\specialrule{\lightrulewidth}{0pt}{0pt}%
	\arrayrulecolor{white}\specialrule{\belowrulesep}{0pt}{0pt}%
	\arrayrulecolor{black}
}
\begin{table}[ht!]
\footnotesize

\newcommand\TL{\rule{0pt}{3.3ex}}
\newcommand\BL{\rule[-2.1ex]{0pt}{0pt}}
\newcommand\TY{\rule{0pt}{4ex}}
\newcommand\BY{\rule[-2.8ex]{0pt}{0pt}}
\newcommand\BP{\rule[-1.5ex]{0pt}{0pt}}
\newcommand\BDLtwo{\rule[-1.5ex]{0pt}{0pt}}

{\renewcommand{\arraystretch}{1.4}
\begin{tabularx}{\textwidth}{|p{4.7em}|X|X|X|}
\hline
 \rowcolor{lightgray}            & $ \tdl{ p_0 \land p_1 } $ & $ \tdl{ 
 p_0 \lor p_1 } $ &  $ \tdl{ \lnot p }$ \\
\hline
\text{\godel}      & $ \min(\tGodel{p_0},\tGodel{p_1}) $ & $ 
\max(\tGodel{p_0},\tGodel{p_1}) 
$ &$\begin{cases}
                 1 & \text{if } \tGodel{p}=0\\
                 0 & \text{otherwise}
               \end{cases}$  \\
\hline
\text{\Luka}  & $ \min(\tlukasiewicz{p_0},\tlukasiewicz{p_1})$ & $ 
\max(\tlukasiewicz{p_0},\tlukasiewicz{p_1})$ & $ 1-\tlukasiewicz{p}$ \\
\hline
\text{\Yager}        & $ \min(\tyager{p_0},\tyager{p_1})$ & $ 
\max(\tyager{p_0},\tyager{p_1})$ &  $1 - (1 - (1-\tyager{p})^r)^{1/r} $\\
\hline
\text{\product}      & $ \min(\tproduct{p_0},\tproduct{p_1})$ & $ 
\max(\tproduct{p_0},\tproduct{p_1})$ & $\begin{cases}
                 1 & \text{if } \tproduct{p}=0\\
                 0 & \text{otherwise}
               \end{cases}$  \\
\hline
\text{\DLtwo}       & $ \min(\tDLtwo{p_0},\tDLtwo{p_1}) $ & $ 
\max(\tDLtwo{p_0},\tDLtwo{p_1}) $ & \multicolumn{1}{c|}{\text{N.A.}}  \\
\hline
\text{\STL} & \multicolumn{1}{c|}{\text{N.A.}}  & 
\multicolumn{1}{c|}{\text{N.A.}}  &
$ {-}\tSTL{p} $ \\
\hline
\text{\STLinfty}         & $ \min(\tSTLinfty{p_0},\tSTLinfty{p_1}) $ & $ 
\max(\tSTLinfty{p_0},\tSTLinfty{p_1}) $ 
      & $ {-}\tSTLinfty{p} $  \\
\hline
\end{tabularx}
}

{\renewcommand{\arraystretch}{1.4}
\begin{tabularx}{\textwidth}{|p{4.7em}|p{11.3em}|p{11.2em}|X|}
\hline
 \rowcolor{lightgray}    &   $ \tdl{ p_0 \odot p_1 } $ & $ \tdl{ p_0 
 \oplus p_1 } $ & $ \tdl{ p_0 
 \impl p_1} $  \\
\hline
\text{\godel}    & $ \min(\tGodel{p_0},\tGodel{p_1})  $ & $ 
\max(\tGodel{p_0},\tGodel{p_1})  $ &
$\begin{array}{@{}l@{}}
\begin{cases}
1 & \text{if } \tGodel{p_0} \leq \tGodel{p_1} \\
p_1 & \text{otherwise}
\end{cases}
\end{array}$ \\
\hline
\text{\Luka} & $ \max(\tlukasiewicz{p_0} + \tlukasiewicz{p_1} - 1, 0) $ 
  & $ \min(\tlukasiewicz{p_0} + \tlukasiewicz{p_1},1) $
  & $\min(1 - \tlukasiewicz{p_0} + \tlukasiewicz{p_1}, 1)$ \\
\hline
\text{\Yager} & $ \max (1 - ((1-\tyager{p_0})^r + (1-\tyager{p_1})^r)^{1/r},0)$
& $\min((\tyager{p_0}^r + \tyager{p_1}^r)^{\frac{1}{r}},1)$
& $\begin{array}{@{}l}
  \begin{cases}
  1 \ \ \ \ \  \text{   if } \tyager{p_0} \leq \tyager{p_1} \\
  1 - ((1-\tyager{p_1})^r\\
    \qquad - (1-\tyager{p_0})^r)^{1/r} \;\; \text{o.w.} 
  \end{cases}
\end{array}$ \\
\hline
\text{\product}   & $  \tproduct{p_0} \times \tproduct{p_1}  
$   & $\tproduct{p_0} + \tproduct{p_1} - \tproduct{p_0} \times \tproduct{p_1}$ &
$\begin{array}{@{}l@{}}
\begin{cases}
1 & \text{if } \tproduct{p_0} \leq \tproduct{p_1} \\
\frac{\tproduct{p_1}}{\tproduct{p_0}} & \text{otherwise}
\end{cases}
\end{array}$ \\
\hline
\text{\DLtwo}    & 
$\tDLtwo{p_0} + \tDLtwo{p_1} $    & $- \tDLtwo{p_0} \times \tDLtwo{p_1}$ &
 $ - \max(\tDLtwo{p_0}-\tDLtwo{p_1}, 0) $\\
\hline
\text{\STL}      & \multicolumn{1}{c|}{\text{N.A.}} & 
\multicolumn{1}{c|}{\text{N.A.}} &
\multicolumn{1}{c|}{\text{N.A.}} \\
\hline
\text{\STLinfty}   & $ \min(\tSTLinfty{p_0}, \tSTLinfty{p_1}) $  & $ 
\max(\tSTLinfty{p_0}, \tSTLinfty{p_1}) $ &
$\begin{array}{@{}l@{}}
\begin{cases}
\infty & \text{if } \tSTLinfty{p_0}\\
& \leq \tSTLinfty{p_1} \\
\tSTLinfty{p_1} & \text{otherwise} 
\end{cases}
\end{array}$\\
\hline
\end{tabularx}
}
{\renewcommand{\arraystretch}{1.5}
\begin{tabularx}{\textwidth}{|p{4.7em}|X|X|}
\hline
 \rowcolor{lightgray} & $\tdl{\naryConj (p_1,\ldots,p_M)}$ & $\tdl{\naryDisj 
 (p_1,\ldots,p_M)}$ \\
\hline
\text{\STL} & $ \tSTL{\naryConj{(p_1,\ldots,p_M)}} $ & $ 
\tSTL{\naryDisj{(p_1,\ldots,p_M)}}$\\
\hline
\end{tabularx}
}
\newcommand\BF{\rule[-4ex]{0pt}{0pt}}

{\renewcommand{\arraystretch}{1.3}
\begin{tabularx}{\textwidth}{|p{4.6em}|p{12.5em}|p{16em}|c|c|}
\hline
\rowcolor{lightgray} & $ \tdl{ r_1 = r_2 } $ &
$ \tdl{ r_1 \le r_2 } $ & $ \tdl{ \top } $ & $ \tdl{ \bot }$ \\
\hline
\text{fuzzy}        & 
$\begin{array}{l}
\text{if }{\tempty{r_1} = -\tempty{r_2}}\\
    \text{then }{1}\\
\text{else }{\max\left(1-\left|{\tempty{r_1} - 
\tempty{r_2} \over \tempty{r_1} + 
 \tempty{r_2}}\right|, 0\right)}
\end{array} $ 
&
$\begin{array}{l}
\text{if }{\tempty{r_1} = -\tempty{r_2}}\\
 \text{then }{1}\\
\text{else }
 {\max\left(1-\max\left({\tempty{r_1} - \tempty{r_2} 
 \over \tempty{r_1} + \tempty{r_2}}, 0\right), 0\right)}
\end{array} $ & $ 1 $ & $ 0 $\\
\hline
\text{\DLtwo}       & $ -|\tDLtwo{r_2} - \tDLtwo{r_1}| $ &
$ - \max(\tDLtwo{r_1}-\tDLtwo{r_2}, 0) $ & $ 0 $ & 
\text{N.A.} \\
\hline
\text{\STL} & $ -|\tSTL{r_2} - \tSTL{r_1}| $ &
$ \tSTL{r_2} - \tSTL{r_1} $ & N.A. & N.A. \\
\hline
\text{\STLinfty} & $ -|\tSTLinfty{r_2} - \tSTLinfty{r_1}| $ &
$ \tSTLinfty{r_2} - \tSTLinfty{r_1} $ & $\infty$ & $-\infty$ \\
\hline
\end{tabularx}
}

\begin{tabular}{lcc}
$
\tSTL{\naryConj{(p_1,\ldots,p_M)}}  =
\begin{cases}
  \dfrac{\sum_i p_{\min} e^{\widetilde{p_i}} 
  e^{\nu 
  \widetilde{p_i}}}{\sum_i e^{\nu 
  \widetilde{p_i}}} & 
                      \text{if}\ 
                      p_{\min} < 0 \\
  \dfrac{\sum_i p_i e^{-\nu 
  \widetilde{p_i}}}{\sum_i e^{-\nu 
  \widetilde{p_i}}} & \text{if}\ 
                      p_{\min} > 0 \\
  0 & \text{if}\ p_{\min} = 0 \\
\end{cases}
$ 
\quad where \quad
$
\begin{array}{rcl}
  \nu &\in& \Real^+ \\
  p_{\min} &=& \min(p_1, \ldots, p_M) \\
  \widetilde{p_i} &=& \dfrac{p_i - p_{\min}}{p_{\min}}		
\end{array}
$ \\[2em]
$
\tSTL{\naryDisj (p_1,\ldots,p_M)} =
\begin{cases}
	\dfrac{\sum_i p_{\max} e^{\tilde{p_i}} e^{\nu \tilde{p_i}}}{\sum_i 
	e^{\nu \tilde{p_i}}} & \text{if}\ p_{\max} > 0 \\
	\dfrac{\sum_i p_i e^{-\nu \tilde{p_i}}}{\sum_i e^{-\nu 
	\tilde{p_i}}} & \text{if}\ p_{\min} > 0 \\
	0 & \text{if}\ p_{\min} = 0 \\
\end{cases}
$
\quad where \quad
$
\begin{array}{rcl}
\nu &\in &\Real^+ \\
p_{\max} & = & \max (p_1, \ldots, p_M) \\
\widetilde{p_i} &=& \dfrac{p_{\max} - p_i}{a_{\max}} 
\end{array}$
\end{tabular}

\caption{Interpretation function $\tdl{\cdot}$, where DL $\in 
\{\Godel, \Luka, \Yager, \product, \DLtwo,\STL, \STLinfty\}$.\\
}
\label{tab:semantics-math}
\end{table}

\subsection{Three axes of study for \DL{}s}
\label{sec:axes-of-study}

We divide the properties proven in this work into three groups: algebraic, 
analytic, and proof-theoretic.

\paragraph{(I) Algebraic properties}

\subparagraph{Interpretation via residuated lattices}

\begin{definition}
\label{def:interprationvia}
We say that a \DL{} is interpretable in \emph{a distributive residuated 
lattice\/}
if its operations satisfy
axioms R1--10 of Table~\ref{table:axiom-schema-lattice}.

In addition, we say that 
\begin{itemize}
\item a \DL{} \emph{has negation defined in terms of implication} if the operations defined for it in 
Table~\ref{tab:semantics-math} satisfy axiom N1 (Table~\ref{table:axiom-schema-neg});
\item a \DL{} \emph{has involutive negation} if axioms axioms N2--N4 (Table~\ref{table:axiom-schema-neg});
\item a \DL{}  \emph{has monoidal dual} if axioms M1--M3 are satisfied (Table~\ref{table:axiom-schema-mdual}).
\end{itemize}
\end{definition}

All of the chosen
fuzzy logics are known to satisfy R1--10 and N1--N4, 
and some of them have a
monoidal dual.  Our extension of \DLtwo{} satisfies only R1--10. \STL{}
operations do not satisfy either lattice or monoidal
axioms. However, we will show that \STLinfty{} does satisfy R1--R10 and N2--N4, signifying 
that at the
limit, STL's 
operators do have desirable algebraic properties.
Alas, they lose their desirable analytic properties at the limit!
This shows that the parameter $\nu$ of \STL{} may in fact serve as a
switch between the discrete and continuous worlds.

\subparagraph{Idempotence} 
The idempotence of the lattice connectives is a direct consequence of the mutual absorption axioms (R5--R6) which are part of the axiomatisation of a residuated lattice (Table~\ref{table:axiom-schema-lattice}), hence we do not consider it separately. Take $\lor$ as an example:

$$ x \lor x \overset{\textrm{(R5)}}{=} x \lor (x \land (x \lor x)) \overset{\textrm{(R6)}}{=} x .$$

The idempotence property for monoidal connectives, i.e.,
$x \odot x = x$ was proposed as one of the desirable properties by the
authors of \STL{}~\cite{varnai}. Informally, the property is justified as it
disables situations when duplication of identical data swings the
training process. However, since Varnai and Dimarogonas~\cite{varnai} show that 
idempotence,
associativity, and shadow-lifting cannot be satisfied at the same
time, idempotence stands at odds with definition of residuated
lattice, that does require associativity of both monoid and lattice
operators. On the other hand \STLinfty{} operators are
idempotent and associative, but not shadow-lifting.

\paragraph{(II) Analytic properties}
Varnai and Dimarogonas~\cite{varnai} introduce three properties in this 
category: weak 
smoothness, scale-invariance, and shadow-lifting.
The latter is the most important as it accounts for gradual improvement in 
training.
We only consider shadow-lifting here as it is the most complex of those 
properties and leave the remaining properties to future work.

\begin{definition}[Shadow-lifting property~\cite{varnai}]
\label{def:shadow}
The \DL{} satisfies the \newterm{shadow-lifting} property if, for any
$ \tdl{\pval} \neq 0$,
\begin{equation*}
	\left. \dfrac{\partial \tdlbig{\naryConj(\pval_0, \ldots, \pval_i, \ldots, 
			\pval_M)}}{\partial 
		\tdl{\pval_i}}\right\rvert_{\pval_j = \pval \text{ where } i 
		\neq j} >0
\end{equation*}
holds for all $0 \leq i \leq M$,
where $ \partial $ denotes partial differentiation.
\end{definition}

\noindent Notice that classical conjunction does not satisfy the
property of shadow-lifting: no matter how ``true'' the value of $p_2$ is, if 
$p_1$ is
false, then $p_1 \land p_2$ will remain false. Likewise, all \DL{}s that use 
$\min$ or
$\max$ to define conjunction will fail shadow-lifting.

Shadow-lifting was originally defined for conjunction only, as \STL{} had no 
disjunction. 
In our formalisation, we could, in principle, extend shadow-lifting to any 
logical connective of interest.
However, we left this 
 extension for future work.   

For \DL{}s that do not naturally have $\naryConj$, we prove shadow-lifting for 
$\odot$.

\paragraph{(III) Proof-theoretic properties}
For proof-theoretic properties, we will establish whether a logic has
a sound sequent calculus. This will be discussed in
Sect.~\ref{sec:proof-theoretic}.

\subsection{Summary of properties of \DL{}s}
\label{sec:summary-props}

\begin{table}[t]
\centering
\footnotesize
\begin{tabularx}{\textwidth}{|p{0.11\textwidth}|X|X|X|}
\hline
& Residuated Lattice (R1--10) & Negation as Implication (N1) & Involutive Negation (N2--4)\\
\hline
\Godel & yes (Sect.~\ref{sec:logical-positive-results}) & yes (Sect.~\ref{sec:logical-positive-results}) &no 
(Sect.~\ref{sec:logical-neg-results}) \\
\hline
\Luka & yes (Sect.~\ref{sec:logical-positive-results}) & yes (Sect.~\ref{sec:logical-positive-results}) & yes 
(Sect.~\ref{sec:logical-positive-results})  \\
\hline
\Yager & yes (Sect.~\ref{sec:logical-positive-results}) & yes (Sect.~\ref{sec:logical-positive-results}) & yes 
(Sect.~\ref{sec:logical-positive-results})  \\
\hline
\product & yes (Sect.~\ref{sec:logical-positive-results}) & yes (Sect.~\ref{sec:logical-positive-results}) & no 
(Sect.~\ref{sec:logical-neg-results}) 
\\
\hline
\DLtwo & \cellcolor{orange} yes (Sect.~\ref{sec:logical-positive-results}) & no 
(Sect.~\ref{sec:logical-neg-results}) & no 
(Sect.~\ref{sec:logical-neg-results}) \\
\hline
\STL & \cellcolor{orange}no (Sect.~\ref{sec:algebraic})  & \cellcolor{orange} no 
(Sect.~\ref{sec:logical-neg-results}) & \cellcolor{orange}yes 
(Sect.~\ref{sec:logical-positive-results})  \\
\hline
\STLinfty & \cellcolor{orange}yes (Sect.~\ref{sec:logical-positive-results}, \ref{sec:STLconjunction})& \cellcolor{orange} no 
(Sect.~\ref{sec:logical-neg-results}) & \cellcolor{orange}yes 
(Sect.~\ref{sec:logical-positive-results})  \\
\hline
\end{tabularx}
\begin{tabularx}{\textwidth}{|p{0.11\textwidth}|X|X|X|X|}
\hline
& Monoidal Dual (M1--3) & Idempotence (of $\odot$)& Shadow-lifting (of $\naryConj$ or $\odot$)& 
Sound Gentzen Calculus \\
\hline
\Godel & yes 
(Sect.~\ref{sec:logical-positive-results}) & yes (Sect.~\ref{sec:logical-positive-results}) & no 
(Sect.~\ref{sec:shadow-lifting}) & yes 
(Sect.~\ref{sec:proof-theoretic}--\ref{sec:soundness-fuzzy})\\
\hline
\Luka & yes 
(Sect.~\ref{sec:logical-positive-results}) & no (Sect.~\ref{sec:logical-neg-results})& no 
(Sect.~\ref{sec:shadow-lifting}) & yes 
(Sect.~\ref{sec:proof-theoretic}--\ref{sec:soundness-fuzzy})\\
\hline
\Yager & yes 
(Sect.~\ref{sec:logical-positive-results}) & no (Sect.~\ref{sec:logical-neg-results})  & no 
(Sect.~\ref{sec:shadow-lifting}) & unknown \\
\hline
\product & yes 
(Sect.~\ref{sec:logical-positive-results}) & no (Sect.~\ref{sec:logical-neg-results})  & yes 
(Sect.~\ref{sec:sldltwoproduct}) & yes 
(Sect.~\ref{sec:proof-theoretic}--\ref{sec:soundness-fuzzy})\\
\hline
\DLtwo & \cellcolor{orange} no 
(Sect.~\ref{sec:logical-neg-results}) & \cellcolor{orange} no (Sect.~\ref{sec:logical-neg-results}) & 
\cellcolor{orange} yes (Sect.~\ref{sec:sldltwoproduct}) & 
\cellcolor{orange} yes (Sect.~\ref{sec:sound-dl2})\\
\hline
\STL & 
\cellcolor{orange} no 
(Sect.~\ref{sec:logical-neg-results}) & yes (Sect.~\ref{sec:logical-positive-results})  & \cellcolor{yellow} yes 
(Sect.~\ref{sec:slstl})& 
unknown \\
\hline
\STLinfty & 
\cellcolor{orange}yes (Sect.~\ref{sec:logical-positive-results}) & \cellcolor{orange} yes (Sect.~\ref{sec:logical-positive-results}) & 
no (Sect.~\ref{sec:shadow-lifting}) & 
\cellcolor{orange} yes (Sect.~\ref{sec:sound-stl})\\
\hline
\end{tabularx}
\caption{
Properties of the different \DL{}s formalised in this paper 
\cite{github}.
We distinguish previously known proofs that we mechanise from
known results published with incomplete or
semi-formal proofs (\colorbox{yellow}{yellow}) and new results 
(\colorbox{orange}{orange}).
}
\setlength{\belowcaptionskip}{-10pt}
\label{tab:properties}
\vspace*{-1em}
\end{table}

Table~\ref{tab:properties} summarises all properties covered in our \rocq{} 
formalisation and highlights the ones for which we provide 
original proofs.

No \DL{} satisfies all desirable properties, given by columns of 
Table~\ref{tab:properties}. For example, \Godel{} 
interpretation 
satisfies axioms of the residuated lattice, has a sound sequent calculus, and 
in addition its 
connectives are idempotent. However, it is not
shadow-lifting. On the other hand, \Luka{} is not adequate relative to the 
Boolean Logic, \STL{} is not
associative, and while \DLtwo{} is adequate relative to the Boolean Logic and 
shadow-lifting, it fails
idempotence, and its negation is not compositional because it is not structural 
(in the sense of Fig.~\ref{tab:semantics-math}).
Varnai and Dimarogonas~\cite{varnai} have proven that it is impossible for any 
\DL{} to 
be idempotent, associative, and shadow-lifting at the same time.

When one has to make a choice of a \DL{}, different pragmatic considerations may
influence that choice.  If 
shadow-lifting is
strictly desirable, then \Godel{}, \Luka{}, and \Yager{} logics are
probably less desirable than the rest.  However, further choice between logical 
properties is less 
clear.  For example, one can
imagine a scenario when the specification language avoids negation,
and in a style of substructural logics, treats differently $p$ and
$p \odot p$ and thus sacrifices idempotence; in this case, \DLtwo{}
may provide an ideal translation function.

Conceptually, if one wishes to reconcile the three traditions: 
the algebraic theory of substructural logics via residuated lattices, 
the analytic properties needed for machine learning, and the
proof-theoretic semantics via the Gentzen sequent calculus, 
then one is looking, as a minimal requirement, to satisfy the properties listed 
in the first and the final two columns of Table~\ref{tab:properties}.
In that case, two logics, 
\product{} and \DLtwo{}, bear promise.



\emph{Parametrised logics.} An alternative approach to reconciling the 
properties of Table~\ref{tab:properties} is to make a greater use of 
logics whose connectives are parametrised by a real 
number as in \STL{} and \Yager{}. 

For example, taken together, \STL{} and \STLinfty{}, 
cover all properties in Table~\ref{tab:properties}: 
if $\nu \in \itvco{1}{\infty}$, \STL{} satisfies shadow-lifting; 
for $\nu \to \infty$, \STL{} convergences to \STLinfty{}, and hence satisfies the algebraic theory of residuated lattices.  
Similarly, for $r \to \infty$, \Yager{} converges to \Godel{}, with  \Yager{} satisfying involutive negation and \Godel{} satisfying idempotence.

\paragraph{Discrepancies in pen and paper proofs}
\label{sec:errors}
Note that the formalisation further revealed some errors as explained below.


While doing the \rocq{} formalisation, we found two sources of
errors in our pen and paper proofs \cite{ldl} as well as in the
original \STL{} paper~\cite{varnai}.  Firstly, the work in
\cite{ldl} concerned completing results of
Table~\ref{tab:properties} in the uniform notation of \DL{}s 
(Fig.~\ref{fig:syntax-types-math}). 

Some errors came from extension of fuzzy logics with 
comparison 
operators. The interpretation of comparisons needed to be scaled between $0$ 
and $1$, and it seemed obvious that the scaling was done correctly. Therefore 
we did not provide any proofs concerning the interval properties of these 
operations in previous work, leading to errors found and fixed in the course of 
the formalisation. 



While a sketch of the shadow-lifting proof was provided in
the original \STL{} paper~\cite{varnai}, it was incomplete and did not
mention either the existence of other cases as well as the reliance 
on L'H\^opital's rule.

\section{A generic encoding of \DL{}s in \rocq{} and its instantiations}
\label{sec:syntax}

In this section we address the aim of creating a generic and
extendable encoding for different \DL{}s, using uniform syntactic
conventions. We highlight the features of the encoding and the
benefits of the general framework in proving properties of
similar~\DL{}s.

\subsection{A generic encoding of the syntax of types and expressions}
\label{sec:genericencoding}

The generic encoding of the types is the matter of declaring
an inductive type corresponding to the syntax of
Fig.~\ref{fig:syntax-types-math}. However, as different \DL{}s are defined over 
different fragments of syntax, we need to account for this in the 
formalisation. We do so by refining the Boolean 
type with flags that allow control over which fragments of the syntax 
have defined semantics: 
\begin{minted}{ssr}
Inductive flag_neg := neg_def | neg_undef.
Inductive flag_impl := impl_def | impl_undef. 
Inductive flag_monoid := m_def | m_undef. 
Inductive flag_lattice := l_def | l_undef.
\end{minted}
The individual flags control, respectively, whether negation $\lnot$,
implication $\Rightarrow$, monoidal operations $\odot$ and $\oplus$, and lattice operations
$\wedge$ and $\vee$ are defined. This allows the inclusion of \DL{}s for
which not all connectives have defined semantics, such as \DLtwo.
Using the above flags we encode the syntax of the types of \DL{}s 
(Fig.~\ref{fig:syntax-types-math}) as follows: 
\begin{minted}{ssr}
Inductive dl_type :=
  | boolT of flag_neg & flag_impl & flag_monoid & flag_lattice
  | indexT of nat | realT | vectorT of nat
  | funT of nat & nat | fun2T of nat & nat & nat.
\end{minted}
Following the semantics from Table~\ref{tab:semantics-math} we define the types 
of the various \DL{}s by selecting the right sets of flags, with all fuzzy 
\DL{}s 
grouped together, for ease of reading:
\begin{minted}{ssr}
Definition boolT_fuzzy := boolT neg_def impl_def m_def l_def.
Definition boolT_dl2 := boolT neg_undef impl_def m_def l_undef.
Definition boolT_stl := boolT neg_def impl_undef m_undef l_def.
\end{minted}

As for \DL{}s' expressions, their encoding is displayed in
Fig.~\ref{fig:syntax-coq}.  It is an inductive type indexed by
\coqin{dl_type} so that the resulting syntax is intrinsically typed:
one cannot write ill-typed expressions.
The \rocq{} inductive type matches the syntax already
explained in Fig.~\ref{fig:syntax-types-math}.
Real expressions (line \ref{line:ldlreal}) use a type \coqin{R} of
type \coqin{realType} coming from \analysis{}~\cite{analysis} that represents 
real
numbers.
Boolean expressions (line \ref{line:ldlbool}) use the native \rocq{}
type \coqin{bool}.
Indices embed an \newterm{ordinal\/} from \mathcomp\
(line~\ref{line:ldlidx}).  More specifically, \coqin{'I_n} is the type
of natural numbers smaller than~\coqin{n}.
Last, vectors embed finite functions of type \coqin{'I_n -> R}, which in
\mathcomp{} have a special notation \coqin{R ^ n}
(line~\ref{line:ldlvec}).
We name the lattice connectives $\wedge$ and $\vee$ as \coqin{dl_and}
and \coqin{dl_or}, and 
include both the monoidal $\odot$ as
\coqin{dl_mand} and $\oplus$ as \coqin{dl_mor}.
We encode those connectives as functions from the finite domain
\coqin{'I_n} into expressions to allow the inclusion of
STL\footnote{Note: the use of finite functions for encoding vectors
  and $n$-ary operators is a departure from the previous version of
  the framework published in~\cite{ldl-coq}.  We previously encoded
  vectors as $n$-tuples and $n$-ary operators as polymorphic lists (of
  type \coqin{seq}).  Our new encoding does not fundamentally change
  the expressiveness of the framework, however it greatly simplifies
  the custom induction principle, and removes the need for
  list-membership checks.  This change removed $\approx 650$ lines of
  code from the development.}.

To ease reading, we will use notation such as \coqin{a `/\ b} or
\coqin{a `** b} to denote binary connectives (lattice and monoidal
conjunction respectively) from now on.
For a generic definition of the syntax, we need to allow for the case
of \DL{}s in which negation is not defined (here: \DLtwo).
The explicit flags, such as \coqin{neg_def} for \coqin{dl_not} (line 
\ref{line:ldlnot}), signify that this flag is necessary for this connective to 
be defined.
%
The constructor \coqin{dl_cmp} (line \ref{line:ldlcmp}) is for binary
comparison operators over the real numbers; hereafter, we will use
notations such as \coqin{`<=} for the comparison corresponding to
$\leq$ instead of ``\coqin{dl_cmp cmp_le}'' to ease reading.
As for the last constructors, they are respectively for functions,
their application, both for one- and two-argument versions (\coqin{dl_fun}, 
  \coqin{dl_fun2} respectively for functions, with the same naming convention 
  used for their application) as per Fig.~\ref{fig:syntax-types-math}.

\begin{figure}[h]
\begin{minted}[numbers=left,xleftmargin=1.5em,escapeinside=88]{ssr}
Inductive comparison := cmp_le | cmp_eq.

Inductive expr : dl_type -> Type :=
(* base expressions *)
| dl_bool : forall p r s t, bool -> expr (boolT p r s t) 8\label{line:ldlbool}8
| dl_idx : forall n, 'I_n -> expr (indexT n) 8\label{line:ldlidx}8
| dl_real : R -> expr realT 8\label{line:ldlreal}8
| dl_vec : forall n, R ^ n -> expr (vectorT n) 8\label{line:ldlvec}8
(* connectives *)
| dl_and : forall fn fi fm n, ('I_n -> expr (boolT fn fi fm l_def)) 
                               -> expr (boolT fn fi fm l_def)
| dl_or : forall fn fi fm n, ('I_n -> expr (boolT fn fi fm l_def)) 
                               -> expr (boolT fn fi fm l_def)
| dl_not : forall fi fm fl, expr (boolT neg_def fi fm fl)  8\label{line:ldlnot}8
                               -> expr (boolT neg_def  fi fm fl) 
| dl_impl :forall fn fm fl, expr (boolT fn impl_def fm fl) 
                               -> expr (boolT fn impl_def fm fl) 
                               -> expr (boolT fn impl_def fm fl)
| dl_mand : forall fn fi fl n, ('I_n -> expr (boolT fn fi m_def fl)) 
                               -> expr (boolT fn fi m_def fl)
| dl_mor : forall fn fi fl n, ('I_n -> expr (boolT fn fi m_def fl)) 
                               -> expr (boolT fn fi m_def fl) 
(* comparisons *)
| dl_cmp : forall fn fi fm fl, comparison -> expr realT -> expr realT 8\label{line:ldlcmp}8
                                  -> expr (boolT fn fi fm fl)
(* networks and applications *)
| dl_fun : forall n m, (R ^ n -> R ^ m) -> expr (funT n m)
| dl_fun2 : forall n m l, (R ^ n -> R ^ m -> R ^ l) -> expr (fun2T n m l)
| dl_app : forall n m, expr (funT n m) -> expr (vectorT n)
                          -> expr (vectorT m)
| dl_app2 : forall n m l, expr (fun2T n m l) -> expr (vectorT n)
                             -> expr (vectorT m) -> expr (vectorT l)
| dl_lookup : forall n, expr (vectorT n) -> expr (indexT n)
                           -> expr realT.
\end{minted}
\caption{Generic syntax for \DL{}s in \rocq{}}
\label{fig:syntax-coq}
\end{figure}

\subsection{Specific interpretation of the generic syntax}

We now proceed to the translation function that
interprets the syntax.
Types are mapped to their obvious semantics:
\begin{minted}[numbers=left,xleftmargin=1.5em,escapeinside=88]{ssr}
Definition type_translation (t : dl_type) : Type:=
  match t with
  | boolT x y z v  => R
  | realT => R
  | vectorT n => R ^ n
  | indexT n => 'I_n
  | funT n m => R ^ n -> R ^ m
  | fun2T n m l => R ^ n -> R ^ m -> R ^ l
end.
\end{minted}

In particular, Boolean values are mapped to \coqin{R} of type \coqin{realType}, the 
type of real numbers.
This translation accommodates the many interpretations of the DLs: the domain
$\itvcc{0}{1}$ for fuzzy logic, $\itvoc{-\infty}{0}$ for \DLtwo{}, and 
$\itvoo{-\infty}{+\infty}$ for \STL{}.
For \DLtwo{} and \STL{}, we also provide an alternative translation
function \coqin{ereal_type_translation} that maps Boolean values to
real numbers extended with $-\infty$ and $+\infty$, i.e., the type
\coqin{\bar R} of extended real numbers as provided by \analysis.
%

Each logic requires a separate interpretation function (as explained
in Table~\ref{tab:semantics-math}).  Here, we only show an excerpt of
the translation function for \STL{}, namely the cases for conjunction
(constructor \coqin{dl_and}), negation (notation \coqin{`~}), and comparison
(notation \coqin{`<=}) of \STL{} in Fig.~\ref{fig:semantics-coq}. 
In case of logics parametrised by $r$ and $\nu$ (\Yager, \STL) a corresponding 
variable 
is set in context (see an example in Fig.~\ref{fig:semantics-lt0-gt0}).

\begin{figure}[h]
\begin{minted}[xleftmargin=1.5em,escapeinside=88]{ssr}
Fixpoint stl_translation {t} (e : expr t) : type_translation t :=
  match p in expr t return type_translation t with
  | dl_and _ _ _ 0 _ => 1
  | dl_and _ _ _ n.+1 s =>
      let A := stl_translation \o s in
      let a_min : R := \big[minr/A ord0]_(i < n.+1) A i in
      stl_and a_min A                                           
  | `~ E1               => - {[ E1 ]}
  | E1 `<= E2           => {[ E2 ]} - {[ E1 ]}
  ... (* see 8\cite[dl.v]{github}8 for omitted connectives *)
  end where "{[ e ]}" := (stl_translation e).
\end{minted}
\setlength{\belowcaptionskip}{-5pt}
\caption{Excerpt of the semantics of \STL{}: conjunction, negation, and 
comparison. \\
{\footnotesize (See Fig.~\ref{fig:semantics-lt0-gt0} for intermediate 
definitions \coqin{stl_and_lt0} and \coqin{stl_and_gt0}.)}}
\label{fig:semantics-coq}
\end{figure}

The case for conjunction of \STL{} is the most complex in our
formalisation because dealing formally with it requires the theories
of exponentiation (\coqin{expR}), big sums (\coqin{\sum}), inverses
(\coqin{^-1}), and minima (\coqin{minr}).
To reduce the clutter, we define the cases for $p_{\min} > 0$ and $p_{\min} < 
0$ separately
as \coqin{stl_and_gt0} and \coqin{stl_and_lt0} reproduced in 
Fig.~\ref{fig:semantics-lt0-gt0}.
This will allow us to state intermediate lemmas about subexpressions.

\begin{figure}[h]
\begin{subfigure}[]{0.66\textwidth}
\begin{minted}[xleftmargin=1em,escapeinside=88]{ssr}
Variable (nu : R). 
Definition min_dev n i (f : 'I_n.+1 -> R) : R :=
  let r := \big[minr/f ord0]_(j < n.+1) f j 
  	in (f i - r) / r.
		
Definition stl_and_gt0 n (v : 'I_n.+1 -> R) :=
  (\sum_(a < n.+1) v a * 
  	expR (- nu * min_dev a v)) /
    \sum_(a < n.+1) expR (-nu * min_dev a v).

Definition stl_and_lt0 n (v : 'I_n.+1 -> R) :=
  (\sum_(a < n.+1)
    (\big[minr/v ord0]_(i < n.+1) v i) * 
    expR (min_dev a v) * expR (nu * min_dev a v)) /
      \sum_(a < n.+1) expR (nu * min_dev a v).
\end{minted}
\end{subfigure}
\;
\vrule
\;
\begin{subfigure}[]{0.3\textwidth}
\footnotesize
$
\begin{array}{l}
p_{\min} = \min(p_1, \ldots, p_M) \\
\widetilde{p_i} = \dfrac{p_i - p_{\min}}{p_{\min}}\\
\\
\nu \in \Real^+
\quad\quad\quad\;\;\;\text{ (constant)} \\
\dfrac{\sum_i p_i e^{-\nu \widetilde{p_i}}}{\sum_i e^{-\nu \widetilde{p_i}}}
\quad\quad\text{(case }p_{\min} > 0\text{)} \\
\\
\\
\dfrac{\sum_i p_{\min} e^{\widetilde{p_i}} 
	e^{\nu 
		\widetilde{p_i}}}{\sum_i e^{\nu 
		\widetilde{p_i}}}  
\;\text{(case }p_{\min} < 0\text{)} \\
\\
\\
\\
\\
\end{array}
$
\end{subfigure}
\setlength{\belowcaptionskip}{5pt}
\caption{Intermediate definitions to define the conjunction of \STL{}.\\
{\footnotesize (The right subfigure reproduces part of 
Table~\ref{tab:semantics-math} for reading convenience.)}}
\label{fig:semantics-lt0-gt0}
\end{figure}

\section{Algebraic properties of \DL{}s}
\label{sec:algebraic}

This section focuses on the formalisation of algebraic properties of
\DL{}s. We first present these results in mathematical notation,
followed by their \rocq{} formalisation.  We finish with a
formalisation of the negative results from Table~\ref{tab:properties}.

%
%

\subsection{Differentiable logics as residuated lattices}
\label{sec:logical-positive-results}

The following theorems characterise differentiable logics in terms of
residuated lattices.  Note that many of the algebraic proofs rely on
$\Real$ being a total order, that is, for any $x,y \in \Real$, either
$x\leq y$ or $y \leq x$.
\begin{theorem}[Differentiable logics as residuated lattices]
\label{thm:dl-residuated}
The following hold:
\begin{enumerate}
\item \Godel, \Luka, \Yager, \product, \DLtwo, and \STLinfty{} each form a residuated lattice,
\item \Luka{}, \Yager{} and \STLinfty{} are equipped with involutive negation,
\item \Godel, \Luka, \Yager, \product, and \STLinfty{} have a monoidal dual, and
\item \Godel{} and \STLinfty{} have idempotent $\odot$
\end{enumerate}
as described in Table~\ref{tab:properties}.
\end{theorem}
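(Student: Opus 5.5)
The proof proceeds logic by logic, checking the axioms of Tables~\ref{table:axiom-schema-lattice}--\ref{table:axiom-schema-mdual} on each carrier: $\itvcc{0}{1}$ for the fuzzy logics, $\itvoc{-\infty}{0}$ for \DLtwo{}, and the extended reals $[-\infty,+\infty]$ for \STLinfty{}. The totality of the order on $\Real$ does most of the work.

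\textbf{Lattice axioms R1--R7.} For every logic in item~1 the lattice operations are $\min$ and $\max$ on a totally ordered carrier. Commutativity, associativity, absorption and distributivity therefore follow once for all logics by case analysis on the relative order of $x,y,z$. The only side condition is closure of the carrier under $\min$ and $\max$, which is immediate.

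\textbf{Monoid axioms R8--R9 and residuation R10.} These are checked per logic.
\begin{itemize}
\item \Godel{} and \STLinfty{}: $\odot=\min$ with unit $1$ (respectively $+\infty$). For the residuation, if $x\le z$ then $x\impl z$ is the top element and both sides of R10 hold. Otherwise $\min(x,y)\le z$ iff $y\le z$, which is exactly $y\le x\impl z$.
\item \product{}: R8--R9 are trivial. For R10, $xy\le z$ iff $y\le z/x$ when $x>z$; when $x\le z$ both sides are true.
\item \Luka{}: the standard truncated-sum argument, $\max(x+y-1,0)\le z$ iff $y\le \min(1-x+z,1)$.
\item \Yager{}: reduce to \Luka{} through the order isomorphism $\varphi(t)=1-(1-t)^r$ on $\itvcc{0}{1}$. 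Under $\varphi$, the \Yager{} $\odot$ and $\impl$ become the \Luka{} ones, so R8 and R10 transfer because $\varphi$ is a strictly monotone bijection. I will first check that the \Yager{} $\impl$ in Table~\ref{tab:semantics-math} is exactly the conjugate of the \Luka{} residuum.
\item \DLtwo{}: on $\itvoc{-\infty}{0}$ take $\odot=+$ with unit $0$. For R10, $x+y\le z$ iff $y\le z-x$, and since $y\le 0$ this is equivalent to $y\le \min(z-x,0)=-\max(x-z,0)$, which matches the table.
\end{itemize}

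\textbf{Negation (item~2).} For \Luka{} and \STLinfty{}, $\neg$ is $t\mapsto 1-t$ and $t\mapsto -t$ respectively. Both are order-reversing involutions, so N2 holds, and N3--N4 follow because an antitone bijection swaps $\min$ and $\max$. For \Yager{} the negation is again the conjugate of \Luka{}'s under $\varphi$.

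\textbf{Monoidal duals (item~3).} M2--M3 are De~Morgan identities relating $\oplus$ and $\odot$ through $\neg$.
\begin{itemize}
\item \Luka{}: direct computation.
\item \Yager{}: transport from \Luka{} via $\varphi$.
\item \product{}: this is the delicate case, since the table gives a Gödel-style negation that is not involutive. Here I would check M1 directly; the probabilistic sum is $1-(1-x)(1-y)$, so it is associative. I would then verify M2--M3 with the given $\neg$ by case analysis on whether arguments are $0$, expecting both sides to reduce to indicator functions.
\item \Godel{}: the same case analysis as for \product{}.
\item \STLinfty{}: $\max$ and $\min$ with $\neg=-$, which is immediate.
\end{itemize}
Item~4 holds because $\min(x,x)=x$.

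\textbf{Main obstacle.} I expect two points to cause the most trouble. The first is the \Yager{} residuation and duality, where real powers $t^{1/r}$ require care with monotonicity and the boundary truncations. This is exactly why I conjugate to \Luka{} rather than compute directly. The second is \STLinfty{} over extended reals: the case analysis must handle $\pm\infty$ in $x\impl z$, and infinite arguments in residuation need separate treatment.
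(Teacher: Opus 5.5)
Most of your plan is sound and follows the paper's route. Like the paper, you check axioms logic by logic, you handle R1--R7 via $\min$/$\max$ on a total order, and your \DLtwo{} residuation argument $x+y\le z \iff y\le\min(z-x,0)$ is exactly the paper's. Your conjugation by $\varphi(t)=1-(1-t)^r$ is a cleaner route for \Yager{} than direct power-function computation. It correctly gives R8--R10 and N2--N4, because the table's $\odot_Y$, $\impl_Y$ and $\neg_Y$ are precisely the $\varphi$-conjugates of \Luka{}'s $\odot$, $\impl$ and $t\mapsto 1-t$. The \Godel{}/\product{} indicator-function check of M2--M3 and the \STLinfty{} cases also go through.

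\textbf{The gap.} The step ``\Yager{}: transport from \Luka{} via $\varphi$'' for item~3 fails. You verified conjugacy for $\impl$ but not for $\oplus$. The table's $\oplus_Y(x,y)=\min((x^r+y^r)^{1/r},1)$ is conjugate to $\min(a+b,1)$ via $\psi(t)=t^r$, not via $\varphi$. Since $\varphi(t)=1-\psi(1-t)$, the two conjugations are linked by $t\mapsto 1-t$. So $\oplus_Y$ is the De~Morgan dual of $\odot_Y$ with respect to the standard negation $1-t$, not with respect to $\neg_Y=\varphi^{-1}\circ(1-\cdot)\circ\varphi$.

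Because $\neg_Y$ is an involution, M2 forces
$a\oplus b=\neg_Y(\neg_Y a\odot_Y\neg_Y b)=1-\max((1-a)^r+(1-b)^r-1,0)^{1/r}$.
This differs from the table's $\oplus_Y$ for every $r\neq 1$: for small $a=b$ the former is $\approx 2a$, while the latter equals $2^{1/r}a$. Concretely, take $r=2$ and $x=y=\tfrac12$. Then $\neg_Y(x\odot_Y y)=1-\tfrac{1}{\sqrt2}\approx 0.293$, but $\neg_Y x\oplus_Y\neg_Y y=\sqrt2\,(1-\tfrac{\sqrt3}{2})\approx 0.189$.

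So no argument can close this step from the definitions in Table~\ref{tab:semantics-math}. With the N1-compatible $\neg_Y$ you get M1 (transport along $\psi$), but M2--M3 are false. With $\neg t=1-t$ you get M1--M3 and N2--N4 by a one-line computation, but you lose N1. You need to flag this and either restrict the \Yager{} part of item~3 or change one of $\neg_Y$, $\oplus_Y$. The paper's own proof works out only \DLtwo{} in detail and does not address this point.
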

\begin{proof}\
For each logic, we prove the relevant axioms from 
Tables~\ref{table:axiom-schema-lattice}--\ref{table:axiom-schema-mdual}.
As an example, we present a proof for \DLtwo{}, formalised in~\cite[\coqin{dl2.v}]{github}.

Most cases follow directly by the corresponding properties of $\min$ and $\max$.
We will only show the two most interesting cases:




\textbf{(R7)}.  To prove $\tDLtwo{p_0 \land 
(p_1 \lor p_2)} \leq \tDLtwo{(p_0 \land p_1) \lor (p_0 \land p_2)}$ we actually
have to show that:
 $$\min(\tDLtwo{p_0}, \max (\tDLtwo{p_1}, \tDLtwo{p_2})) \leq 
\max(\min(\tDLtwo{p_0},\tDLtwo{p_1}), \min(\tDLtwo{p_0},\tDLtwo{p_2})).$$

For this, we consider all possible orders 
of $\tDLtwo{p_0}$, $\tDLtwo{p_1}$, and  $\tDLtwo{p_2}$. 
For example, if 
$\tDLtwo{p_0} \leq \tDLtwo{p_1} \leq \tDLtwo{p_2}$:
\begin{align*}
\min(\tDLtwo{p_0}, \max (\tDLtwo{p_1}, \tDLtwo{p_2})) & = \min(\tDLtwo{p_0}, \tDLtwo{p_2}) \\ 
& = \tDLtwo{p_0}  \\ 
& \leq \tDLtwo{p_0} \\ 
& = \max(\tDLtwo{p_0}, \tDLtwo{p_0}) \\ 
& = \max(\min(\tDLtwo{p_0},\tDLtwo{p_1}), \min(\tDLtwo{p_0},\tDLtwo{p_2})) \\ 
\end{align*}



\textbf{(R10)} To prove that $\tDLtwo{p_0 \odot p_1} \leq \tDLtwo{p_2}$ and 
$\tDLtwo{p_1} \leq \tDLtwo{p_0 \impl p_2}$ are equivalent,
we prove both directions separately.
Here, we only show the case for left to right implication as the other 
case is analogous.
We have:

\begin{align*}
\tDLtwo{p_0 \odot p_1} & \leq \tDLtwo{p_2} \\
\tDLtwo{p_0} + \tDLtwo{p_1} & \leq \tDLtwo{p_2}. \\
\end{align*}

We then need to prove 
$\tDLtwo{p_1} \leq -\max(\tDLtwo{p_0} - \tDLtwo{p_2}, 0)$. 

If
$\tDLtwo{p_0} - \tDLtwo{p_2} = 0$, this simplifies to $\tDLtwo{p_1} \leq 
0$ which is follows directly from the previous assumption as we know that $\tDLtwo{p_0} = 
\tDLtwo{p_2}$.

Otherwise, if $\tDLtwo{p_0} - \tDLtwo{p_2} > 0$,  we have to show that 
$\tDLtwo{p_1} \leq \tDLtwo{p_2} - \tDLtwo{p_0}$ which is equivalent to 
our assumption that $\tDLtwo{p_0} + \tDLtwo{p_1} \leq \tDLtwo{p_2}$.



\end{proof}

\subsection{Formalisation and proof in \rocq}

Proving these logical properties in \rocq{} consists of showing that they hold 
for 
the semantic
interpretation.
%
For example, the conjunction of \DLtwo{} being interpreted as addition
on real numbers inherits its associativity from the ring structure of
real numbers. As a consequence, its proof is a one-liner:
\begin{minted}{ssr}
Lemma dl2_mandA f1 f2 (e1 e2 e3 : expr (boolT_def f1 m_def f2)) :
  [[ e1 `** (e2 `** e3) ]]_dl2 = [[ (e1 `** e2) `** e3 ]]_dl2.
Proof.
by rewrite /= !big_ord_recl /= !big_ord_recl !big_ord0 !addr0 addrA (*assoc. of 
real number*).
Qed.
\end{minted} 
Proofs for \DLtwo{} are similarly succinct.

In contrast, the proofs for \Yager{} logic and \STL{} are more demanding. 
For example, the proof of associativity for \Yager{}:
\begin{minted}{ssr}
Theorem Yager_andA (e1 e2 e3 : expr boolT_def) : 0 < p ->
  [[ e1 `** (e2 `** e3) ]]_Yager = [[ (e1 `** e2) `** e3]]_Yager.
\end{minted}
consists of about 100 lines of code, as its interpretation
uses the power function of
\analysis{}, even with automation tactics of the Mathematical Components libraries, 
such as the decision procedure 
\coqin{lra}~\cite{sakaguchi2022itp,algebratactics}
for linear and rational arithmetics,
simplfying the formalisation significantly.
%

\subsection{Negative results}
\label{sec:logical-neg-results}

We finish this section with the negative results from
Table~\ref{tab:properties} along with their \rocq{} formalisations:

\begin{counterexample}
  Negation in \STLinfty{} logic is not expressible as $\impl \bot$. 
\end{counterexample}

In \rocq{} we formalise it as:
\begin{minted}{ssr}
Lemma stl_infty_not_neg_impl :
  exists e : expr boolT_fuzzy,  [[ `~e : expr boolT_fuzzy ]]_stli 
    <> [[ e `=> (dl_bool _ _ _ _ false)]]_stli.
\end{minted}
Note that we need not provide a counterexample for \STL{}, as it does not have implication.

\begin{counterexample}
  Negation in \Godel{} and \product{} logic is not involutive. 
\end{counterexample}

In \rocq{} we formalise it as:
\begin{minted}{ssr}
Lemma Godel_negation_not_involutive :
  exists e, [[ `~ `~ e ]]_Godel != [[ e ]]_Godel.
\end{minted}
This is proven similarly with a lemma for \product{} logic.

\begin{counterexample}
  \Luka{} and \Yager{} and \product{} logics are not monoidally idempotent.
\end{counterexample}

We show this in \rocq{} as:
\begin{minted}{ssr}
Lemma Lukasiewicz_not_idempotent :
  exists e, [[ e `** e ]]_Lukasiewicz <> [[ e ]]_Lukasiewicz.
\end{minted}
We provide similar proofs for \Yager{}, which is a generalisation of
\Luka{}, and for \product{} logics.

These lemmas rely on the existence of a value with interpretation that
is neither 0 nor 1.
We show this with:
\begin{minted}{ssr}
Lemma h1neq2 l : [[ dl_real 1 `== dl_real 2 ]]_l = 2/3.
\end{minted}

\section{Analytic properties of \DL{}s}
\label{sec:realanalysis}

In the previous section (Sect.~\ref{sec:algebraic}), we showed that
algebraic properties of \DL{}s can
be formalised using properties of the semantic interpretation.
In this section, we focus on properties of \DL{}s relying on real
analysis.

In Sect.~\ref{sec:STLconjunction}, we give justification for the definition of 
the connectives of \STLinfty{} defined as $\min$ and $\max$ in 
Table~\ref{tab:semantics-math}. In particular we show that if in the definition 
of $\tSTL{\naryConj{p_1,...,p_M}}$ the parameter $\nu$ tends towards $+\infty$ 
then $\tSTL{\naryConj{p_1,...,p_M}}$ tends towards 
$\min(\tSTL{p_1},...,\tSTL{p_M})$. An analogical convergence result can be 
proven for $\tSTL{\naryDisj{p_1,...,p_M}}$ and $\max$ though we do not provide 
it.
%
 This result is key to justifying the creation of
\STLinfty, a version of \STL{}
which has more intuitive logical properties and can be represented as
a residuated lattice (Sect.~\ref{sec:algebraic}).

In the following sections, we focus on shadow-lifting. 
We formally define shadow-lifting in Sect.~\ref{sec:shadow-lifting}
and prove it for \DLtwo{} and \product{} (which trivially enjoy
shadow-lifting) in Sect.~\ref{sec:sldltwoproduct}.
For \STL{}, this result was first proven (via a pencil-and-paper proof)
by Varnai and Dimarogonas~\cite[Sect.~V]{varnai} (along with the definition of
the \STL{} conjunction).
We both formalise this result and actually complete it
since the original proof only covers one of the two non-trivial
cases. The main technical aspect of the proof is high-school level
mathematics: an application of L'H\^opital's rule, which was not yet
available in \analysis{} and that we prove in Sect.~\ref{sec:lhopital}.
We finally provide an overview of the missing part of Varnai and Dimarogonas'
proof of shadow-lifting for \STL{} in Sect.~\ref{sec:slstl}.
Note that the logics \Godel{}, \Luka{}, \STLinfty{}, and \Yager{} fail 
shadow-lifting
as they are not differentiable everywhere, due to their use of $\min$
or $\max$ to define conjunction.

\subsection{\STL{} conjunction as a lattice connective}
\label{sec:STLconjunction}

As mentioned in Sect.~\ref{sec:propertiesDLs}, we cannot consider
\STL{} connectives as lattice connectives in their standard form
(Table~\ref{tab:semantics-math}) since they are non-associative by 
design~\cite{varnai}
and therefore do not fulfill the requirement of a lattice (see Definition~\ref{def:lattice}).
They can instead be
viewed as approximations of the common lattice operators $\min$ and
$\max$.
This amounts to proving the following limit, where $p_1,...,p_M$ are
formulas:
$$\lim_{\nu\to\infty} \tSTL{\naryConj (p_1,...,p_M)} = 
\min\large(\tSTL{p_1},...,\tSTL{p_M}\large).$$
 
As the encoding of the semantics of the conjunction provides separate functions 
for the different 
cases (Fig.~\ref{fig:semantics-lt0-gt0}), we prove this property separately 
for each case. 
Consider the case where $p_{\min}= \min(\tSTL{p_1}, \dots, \tSTL{p_M}) > 0$
and we have 
$\widetilde{p_i} = \dfrac{\tSTL{p_i} - p_{\min}}{p_{\min}}$. 
The proof relies on noting 
that when $p_i = p_{\min}$, $\widetilde{p_i} = 0$ and it is positive otherwise. 
In the latter case, we know that $\lim_{\nu\to\infty} -\nu \widetilde{p_i} = 
-\infty$ 
and therefore the exponential $e^{-\nu \widetilde{p_i}}$ tends towards zero.
The proof relies on splitting the sums to take 
advantage of those properties:
\begin{flalign*}
& \lim_{\nu\to\infty} \tSTL{\naryConj (p_1,...,p_M)} \\
& = \lim_{\nu\to\infty} \dfrac{\sum_{i=1}^M \tSTL{p_i}
  e^{-\nu \widetilde{p_i}}}{\sum_{i=1}^M e^{-\nu \widetilde{p_i}}} \\
& = \lim_{\nu\to\infty} \dfrac{\sum_{i=1,{\tSTL{p_i} = p_{\min}}}^M \tSTL{p_i}
e^{-\nu \widetilde{p_i}} + 
\sum_{i =1, {\tSTL{p_i} \not= p_{\min}}}^{M} \tSTL{p_i} e^{-\nu 
\widetilde{p_i}}}{\sum_{i=1,{\tSTL{p_i} = p_{\min}}}^M 
e^{-\nu \widetilde{p_i}} + \sum_{i=1,{\tSTL{p_i} \not= p_{\min}}}^M
e^{-\nu \widetilde{p_i}}} \\
& = \dfrac{\lim_{\nu\to\infty}\sum_{i=1,{\tSTL{p_i} = p_{\min}}}^M \tSTL{p_i} 
e^0 + 
 \lim_{\nu\to\infty}\sum_{i=1,{\tSTL{p_i} \not= p_{\min}}}^M \tSTL{p_i} e^{-\nu 
 \widetilde{p_i}}}{\lim_{\nu\to\infty}\sum_{i=1,{\tSTL{p_i} = p_{\min}}}^M 
1 + \lim_{\nu\to\infty}\sum_{i=1,{\tSTL{p_i} \not= p_{\min}}}^M
e^{-\nu \widetilde{p_i}}} \\
& = \dfrac{\lim_{\nu\to\infty}\sum_{i=1,{\tSTL{p_i} = p_{\min}}}^M
\tSTL{p_i}}{\lim_{\nu\to\infty}\sum_{i=1,{\tSTL{p_i} = p_{\min}}}^M 1}
= p_{\min}
\end{flalign*}


To formally state this property, we use the generic notation of a
limit from \analysis{}, \coqin{f x @[x --> F] --> l}, which stands for
$\mylim{f(x)}{l}{x}{F}$ where $F$ is a filter.  The semantics of this
case is represented by the function \coqin{stl_and_gt0}.
\begin{minted}{ssr}
Lemma stl_and_gt0_cvg_infty (p : R) (v : seq R)  : 
  v != [::] ->
  (forall x, x \in v -> x > 0) ->
  (stl_and_gt0 p v) @[p --> +oo] --> \big[minr/head 0 v]_(i <- v) i.
\end{minted}

While the mathematical proof is straightforward, the corresponding
mechanisation is less so with the main factors being the amount of
sub-limits that need to be proven---and their existence being reliant
on multiple conditions, e.g., positivity of all arguments or the
minimum---and additionally due to the necessity to develop a number of
lemmas on the interactions of iterated operators (\coqin{\big}
notation) and \coqin{minr}.

This result provides justification for the introduction of \STLinfty{} in which 
the semantics of conjunction are defined by $\min$.
We note that while $\STLinfty$'s connectives are associative 
and idempotent, it comes at the cost of shadow-lifting; all three cannot 
hold at once~\cite{varnai}.

\subsection{Formalisation of shadow-lifting}
\label{sec:shadow-lifting}

As seen in Sect.~\ref{sec:axes-of-study},
shadow-lifting is defined in terms of partial derivatives, for which
there was however no theory yet in \analysis{}. They can be
easily defined on the model of derivatives \cite[\coqin{derive.v}]{analysis}.
First, we define \newterm{error vectors} as row vectors (type \coqin{'rV[R]_k} 
where \coqin{R} is some ring
and \coqin{k} the size of the vector)
that are $0$ everywhere except at one coordinate \coqin{i}:
\begin{minted}{ssr}
Definition err_vec {R : ringType} (i : 'I_n.+1) : 'rV[R]_n.+1 := 
  \row_(j < n.+1) (i == j)%:R.
\end{minted}
The notation \mintinline{coq}{
note that here the Boolean equality (notation \coqin{==}) is implicitly coerced 
to a natural number.
Then, given a function~\coqin{f} that takes as input a row vector, we
define a function \coqin{partial} that given a row vector~\coqin{a}
and an index~\coqin{i} returns the limit
$\lim_{\substack{h\to 0\\h\neq 0}}\frac{\texttt{f}(\texttt{a} +
  h\texttt{err_vec i}) - \texttt{f}(\texttt{a})}{h}.$
Put formally:
\begin{minted}{ssr}
Definition partial {R} {n} (f : 'rV[R]_n.+1 -> R) (a : 'rV[R]_n.+1) i :=
  lim (h^-1 * (f (a + h *: err_vec i) - f a) @[h --> 0^']).
\end{minted}
In this syntax, \coqin{0^'} represents the deleted neighbourhood of \coqin{0},
and \coqin{lim g @ F} represents the limit of the function \coqin{g} at the 
filter \coqin{F} \cite{affeldt2018jfr}.
The notation \coqin{*:} represents scaling but is equivalent to the 
multiplication of real numbers here.
Hereafter, we use the \rocq{} notation \coqin{d f '/d i} (reminiscent of 
$\pdv{f}{x_i}$) for \coqin{partial f i}.

Using partial derivatives, the definition of shadow-lifting 
(Definition~\ref{def:shadow}) translates directly into \rocq{}:
\begin{minted}{ssr}
Definition shadow_lifting {R : realType} n (f : 'rV_n.+1 -> R) :=
  forall p, p > 0 -> forall i, ('d f '/d i) (const_mx p) > 0.
\end{minted}
The \coqin{const_mx} function comes from \mathcomp's matrix theory and 
represents a matrix where all coefficients are the given constant; we use it to 
implement the restriction ``$p_j = p$'' seen in Definition~\ref{def:shadow}.

\subsection{Shadow-lifting for \DLtwo{} and \product{}}
\label{sec:sldltwoproduct}

The proof of shadow-lifting for \DLtwo{} and \product{} \DL{}s provides
an easy illustration of the use of the definition of the previous section 
(Sect.~\ref{sec:shadow-lifting}).

For \DLtwo{}, the first thing to observe is that the semantics of a
vector of real numbers can simply be written as an iterated sum
using the notation \coqin{``_} to address elements of row-vectors:
\begin{minted}{ssr}
Definition dl2_and {R : fieldType} {n} (v : 'rV[R]_n) := \sum_(i < n) v ``_ i.
\end{minted}
Shadow-lifting for \DLtwo{} really just amounts to checking that the
partial derivatives of the function
$\vec{v}\mapsto \sum_{j < |\vec{v}|}\vec{v}_j$ are 1, i.e.,
considering vectors of size \coqin{M.+1}:
\begin{minted}{ssr}
Lemma shadowlifting_dl2_andE (p : R) : p > 0 ->
  forall i, ('d (@dl2_and R M.+1) '/d i) (const_mx p) = 1.
\end{minted}
Since the partial derivatives are all positive, \DLtwo{} satisfies the
\coqin{shadow_lifting} predicate, see \cite[\coqin{dl2.v}]{github}.

Similarly, we observe for the \product{} \DL{} that the semantics of a vector
is the function $\vec{v}\mapsto \prod_{j < |\vec{v}|}\vec{v}_j$ whose 
partial derivatives are $p^M$, which is positive, see 
\cite[\coqin{fuzzy.v}]{github}:
\begin{minted}{ssr}
Lemma shadowlifting_product_andE p : p > 0 ->
  forall i, ('d (@product_and R M.+1) '/d i) (const_mx p) = p ^+ M.
\end{minted}

\subsection{Formalisation of L'H\^opital's rule using \analysis{}}
\label{sec:lhopital}

As indicated in the introduction of this section, the key lemma to
prove shadow-lifting for \STL{} is L'H\^opital's rule.
Here follows a standard statement of it:
\begin{theorem}[L'H\^opital's rule {\cite[Thm.\ 5.13]{rudin1976}}]
\label{thm:lhopitalrudin}
Suppose $f$ and $g$ are real and differentiable in $\itvoo{a}{b}$, and 
$g'(x)\neq 0$
for all $x\in \itvoo{a}{b}$, where $-\infty\leq a < b\leq +\infty$. Suppose
\begin{align}
\frac{f'(x)}{g'(x)}\to A \textrm{ as }x\to a. \label{eqn:lhopital13}
\end{align}
If
\begin{align}
f(x)\to 0 \textrm{ and }g(x)\to 0 \textrm{ as }x\to a, \label{eqn:lhopital14}
\end{align}
or if
\begin{align}
g(x)\to +\infty \textrm{ as }x\to a, \label{eqn:lhopital15}
\end{align}
then
\begin{align}
\frac{f(x)}{g(x)}\to A \textrm{ as }x\to a. \label{eqn:lhopital16}
\end{align}
The analogous statement is of course also true if $x\to b$, or if 
$g(x)\to-\infty$ in
\eqref{eqn:lhopital15}.
\end{theorem}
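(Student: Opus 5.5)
The plan is to follow Rudin's proof, which reduces everything to Cauchy's generalised mean value theorem: if $f,g$ are continuous on $\itvcc{x}{y}$ and differentiable on $\itvoo{x}{y}$, then there is $t\in\itvoo{x}{y}$ with $(f(x)-f(y))g'(t)=(g(x)-g(y))f'(t)$. I would first establish this lemma by applying the ordinary (Rolle/mean value) theorem to $h(t)=(f(x)-f(y))g(t)-(g(x)-g(y))f(t)$, which satisfies $h(x)=h(y)$. Since $g'\neq 0$ on $\itvoo{a}{b}$, Rolle's theorem also shows that $g$ is injective there. Hence $g(x)\neq g(y)$ for $x\neq y$, and the quotient $\frac{f(x)-f(y)}{g(x)-g(y)}$ is well defined and equals $f'(t)/g'(t)$ for some $t$ strictly between $x$ and $y$.

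Next, I would use the one-sided bound formulation. It suffices to show the following: for every $q>A$ there is $c$ with $f(x)/g(x)<q$ for all $x\in\itvoo{a}{c}$, and symmetrically for $p<A$ (this is how ``$\to A$'' is handled uniformly for $A$ finite or $\pm\infty$). Fix $r$ with $A<r<q$. By hypothesis~\eqref{eqn:lhopital13} there is $c$ such that $f'(t)/g'(t)<r$ for $a<t<c$. Then for $a<x<y<c$ the Cauchy lemma gives
\begin{equation*}
\frac{f(x)-f(y)}{g(x)-g(y)}<r .
\end{equation*}

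The argument then splits into the two cases.

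\textbf{Case~\eqref{eqn:lhopital14}.} Fix $y$ and let $x\to a$. Then $f(x),g(x)\to 0$, so $f(y)/g(y)\le r<q$ for all $y\in\itvoo{a}{c}$.

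\textbf{Case~\eqref{eqn:lhopital15}.} Fix $y$ and choose $c_1\le y$ so that $g(x)>g(y)$ and $g(x)>0$ for $a<x<c_1$. Multiply by $(g(x)-g(y))/g(x)>0$ to get
\begin{equation*}
\frac{f(x)}{g(x)}<r-r\frac{g(y)}{g(x)}+\frac{f(y)}{g(x)} .
\end{equation*}
The two correction terms tend to $0$ as $x\to a$, so for $x$ close enough to $a$ the right-hand side is $<q$.

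The case $p<A$ is the mirror image. The variants with $x\to b$, or with $g\to-\infty$, follow by applying the result to $x\mapsto f(-x),g(-x)$, respectively to $-g$ with limit $-A$.

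The main obstacle, especially for the formalisation in \analysis{}, is the bookkeeping around filters. The endpoint $a$ may be $-\infty$, so the neighbourhoods ``$a<x<c$'' have to be expressed uniformly as the right filter at $a$ (or $-\infty$), and $A$ may be infinite; I would state the result with extended reals and the filter \coqin{a^'+}. The delicate step is the second case: it mixes a fixed $y$ with $x\to a$, so $c_1$ depends on $y$, and the sign of $g(x)$ and $g(x)-g(y)$ must be controlled before multiplying the inequality through. For Cauchy's mean value theorem I would reuse the existing mean value theorem \coqin{MVT} of \analysis{} applied to the auxiliary function $h$, which only requires continuity on the closed subinterval; that continuity is available since differentiability on $\itvoo{a}{b}$ implies continuity there.
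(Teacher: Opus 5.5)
Your plan is the paper's (that is, Rudin's) proof: Cauchy's mean value theorem, the one-sided bounds \eqref{eqn:cauchymvt1}--\eqref{eqn:cauchymvt2} obtained through an intermediate $r$ and the estimate \eqref{eqn:usecauchy}, then $x\to a$. You also cover case \eqref{eqn:lhopital15} and infinite $a$, $A$, which the paper's formalisation leaves out, and that part is fine as written. However, one step does not go through as stated, and it is the one the paper singles out as \emph{the} difficulty. In case \eqref{eqn:lhopital14}, letting $x\to a$ in $\frac{f(x)-f(y)}{g(x)-g(y)}<r$ yields $f(y)/g(y)\le r$ only if $g(y)\neq 0$. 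If $g(y)=0$, the denominator tends to $0$ and nothing follows. Your injectivity remark gives $g(x)\neq g(y)$ for distinct points of $\itvoo{a}{b}$. But $0$ here is only the limit of $g$ at $a$, not a value $g(x)$ with $x\in\itvoo{a}{b}$, so injectivity says nothing about whether $g(y)=0$. The same fact is also what makes $f(x)/g(x)$ defined near $a$ at all, so it cannot be skipped.

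You need to add an argument that $g\neq 0$ on $\itvoo{a}{b}$. For finite $a$, extend by $f(a)=g(a)=0$. By \eqref{eqn:lhopital14} the extension is continuous within $\itvcc{a}{y}$, though only right-continuous at $a$. So if $g(y)=0$, Rolle's theorem on $\itvcc{a}{y}$ gives a zero of $g'$, a contradiction. This is exactly why the paper first defines the extensions $f_0,g_0$ and states Theorem~\ref{thm:cauchyMVT} with only one-sided continuity at the endpoints. A route that also covers $a=-\infty$: $g$ is continuous and injective on $\itvoo{a}{b}$, hence strictly monotone. If it increases, then for $a<x_0<y$ we get $0=\lim_{x\to a}g(x)\le g(x_0)<g(y)$, so $g>0$ throughout; symmetrically $g<0$ if it decreases. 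With this added, your proof is complete. Note that the genuinely delicate point is therefore in case \eqref{eqn:lhopital14}, not in case \eqref{eqn:lhopital15} as your last paragraph suggests.
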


To formally state L'H\^opital's rule,
the main ingredients from \analysis{} we use are the relation
\coqin{is_derive} between a function and its derivative at some
point\footnote{The relation {\tt is\us{}derive} also takes into
  account the direction of the derivative, hence the {\tt 1} in the
  formal statement of L'H\^opital's rule, which is dealing with real
  functions.}  and the generic notation for a limit.

Here follows our formal statement of L'H\^opital's rule in the case
where $a$ and $l$ are real numbers.
One difference with the informal statement is that we make explicit
the fact that the limit is taken when $a$ is approached from the right
by using the \newterm{right filter\/} \coqin{a^'+}, i.e., the filter
of neighbourhoods of \coqin{a} intersected with $\itvoo{a}{+\infty}$.
Otherwise, the formal statement syntactically matches the informal
one.
The fact that $f$ and $g$ are differentiable in $\itvoo{a}{b}$ is
stated at lines~\ref{lhopital:fdf}--\ref{lhopital:gdg} and the fact
that $g'(x)\neq 0$ in $\itvoo{a}{b}$ is stated at
line~\ref{lhopital:cdg}. The limits of $f(x)$ and $g(x)$ as $x$
approaches $a$ appear at line~\ref{lhopital:fa0ga0} (this corresponds
to Equation \eqref{eqn:lhopital14} in Rudin's statement).
The fact that the limits of $\frac{f'(x)}{g'(x)}$ and
$\frac{f(x)}{g(x)}$ are the same (more precisely, that Equation
\eqref{eqn:lhopital13} implies Equation \eqref{eqn:lhopital16}) is the
conclusion at line~\ref{lhopital:ccl}:
\begin{minted}[numbers=left,xleftmargin=1.5em,escapeinside=77]{ssr}
Context {R : realType}.
Variables (f df g dg : R -> R) (a b : R) (l : R) (ab : a < b).
Hypotheses (fdf : forall x, x \in `]a, b[ -> is_derive x 1 f (df x)) 7\phantomsection\label{lhopital:fdf}7
           (gdg : forall x, x \in `]a, b[ -> is_derive x 1 g (dg x)). 7\label{lhopital:gdg}7
Hypotheses (fa0 : f x @[x --> a^'+] --> 0) (ga0 : g x @[x --> a^'+] --> 0) 7\label{lhopital:fa0ga0}7
           (cdg : forall x, x \in `]a, b[ -> dg x != 0). 7\label{lhopital:cdg}7

Lemma lhopital_at_right :
  df x / dg x @[x --> a^'+] --> l -> f x / g x @[x --> a^'+] --> l. 7\label{lhopital:ccl}7
\end{minted}

When formalising Rudin's proof of L'H\^opital's rule, we found out
that it is a good illustration of difficulties that typically occur
when dealing formally with continuity in real analysis. Below we
comment on the proof by focusing on this aspect.

Rudin's proof of L'H\^opital's rule relies on a generalisation of the
Mean Value Theorem often referred to as Cauchy's Mean Value Theorem:
\begin{theorem}[Cauchy's Mean Value Theorem]
\label{thm:cauchyMVT}
Suppose $f$ and $g$ are real, differentiable in $\itvoo{a}{b}$, and
continuous {\em within\/} $\itvcc{a}{b}$, i.e., that they are
continuous in $\itvoo{a}{b}$ and that $\mylim{f(x)}{f(a)}{x}{a^+}$ and
$\mylim{f(x)}{f(b)}{x}{b^-}$ (and resp.\ for $g$). Suppose moreover
that $g'(x)\neq 0$ for all $x\in\itvoo{a}{b}$. Then, there exists
$c\in\itvoo{a}{b}$ such that $\frac{f'(c)}{g'(c)}=\frac{f(b)-f(a)}{g(b)-g(a)}.$
\end{theorem}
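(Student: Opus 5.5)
The plan is the classical auxiliary-function argument: reduce Cauchy's Mean Value Theorem to Rolle's theorem, which is available in \analysis{} and needs exactly the hypotheses at hand (continuity within $\itvcc{a}{b}$ and differentiability in $\itvoo{a}{b}$). I will define
\[
h(x) = \bigl(f(b)-f(a)\bigr)\,g(x) - \bigl(g(b)-g(a)\bigr)\,f(x).
\]
Then $h$ is differentiable in $\itvoo{a}{b}$ with $h'(x) = (f(b)-f(a))g'(x) - (g(b)-g(a))f'(x)$, because derivatives are linear. It is also continuous within $\itvcc{a}{b}$, since limits of linear combinations are linear combinations of limits.

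A direct computation gives
\[
h(a) = f(b)g(a) - g(b)f(a) = h(b).
\]
Rolle's theorem then yields $c \in \itvoo{a}{b}$ with $h'(c)=0$, that is,
\[
\bigl(f(b)-f(a)\bigr)g'(c) = \bigl(g(b)-g(a)\bigr)f'(c).
\]

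To divide and obtain the stated ratio I need $g'(c)\neq 0$, which is a hypothesis. I also need $g(b)\neq g(a)$. For the latter I apply Rolle's theorem to $g$ itself: if $g(a)=g(b)$, there would be a point of $\itvoo{a}{b}$ where $g'$ vanishes, contradicting the hypothesis. With both denominators nonzero, rearranging gives
\[
\frac{f'(c)}{g'(c)} = \frac{f(b)-f(a)}{g(b)-g(a)}.
\]

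The main obstacle I expect is not the algebra but the bookkeeping of continuity \emph{within} $\itvcc{a}{b}$. The hypotheses give one-sided limits at $a$ and $b$ rather than full continuity, so I have to check that the Rolle lemma in \analysis{} accepts exactly this form, namely continuity on $\itvcc{a}{b}$ in the subspace topology. Where it does not, I will first state a small lemma: continuity on $\itvoo{a}{b}$ together with the one-sided limits at the endpoints is equivalent to the form Rolle expects. I will then transport this property through the linear combination defining $h$.

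Differentiability inside the open interval implies continuity there, so interior continuity comes for free. The endpoint limits of $h$ follow from the endpoint limits of $f$ and $g$ by the standard limit algebra.
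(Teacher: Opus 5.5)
Your proof is correct and takes essentially the paper's route. The paper says Cauchy's Mean Value Theorem is obtained from the Rolle and Mean Value theorems already in the library: Rolle applied to an auxiliary linear combination of $f$ and $g$ with equal values at $a$ and $b$, plus $g'\neq 0$ on $\itvoo{a}{b}$ to exclude $g(a)=g(b)$ (via the Mean Value Theorem there, interchangeable with your second use of Rolle). Your point about converting the one-sided endpoint limits into continuity within $\itvcc{a}{b}$ is exactly the bookkeeping those library lemmas require.
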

Note that we do not require $f$ and $g$ to be continuous on
$\itvcc{a}{b}$, which is too strong because indirectly assuming
limit-values on the left of~$a$ and on the right of~$b$.
The proof of Cauchy's Mean Value Theorem in \analysis{} can be carried
out using the already-available Mean Value and Rolle's Theorems
\cite[\coqin{derive.v}]{analysis} \cite[Sect.~A.2.5.]{affeldt2018jfr}.

We are now ready to sketch the proof of L'H\^opital's rule.
We use Cauchy's Mean Value Theorem to prove:
\begin{align}
\forall q, l < q \to
  \exists c_2, a < c_2 \land \forall x, a < x < c_2 \to \frac{f(x)}{g(x)} < q 
  \label{eqn:cauchymvt1}
\end{align}
and
\begin{align}
\forall p, p < l \to
  \exists c_3, a < c_3 \land \forall x, a < x < c_3 \to p < \frac{f(x)}{g(x)} 
  \label{eqn:cauchymvt2}.
\end{align}
Since our goal is to prove $\mylim{\frac{f(x)}{g(x)}}{l}{x}{a^+}$, we
can pick an $e>0$ and use Equations \eqref{eqn:cauchymvt1} and
\eqref{eqn:cauchymvt2} to produce a $c_2$ and a $c_3$ (using
$q = l + e$ and $p = l - e$) such that $\frac{f(x)}{g(x)}$ can be
chosen to be close enough to~$l$.

Let us take a closer look at the proof of Equation \eqref{eqn:cauchymvt1}.
First we prove 
\begin{align}
\exists c, c\in\itvoo{a}{b} \land \forall x, a < x < c \to \frac{f'(x)}{g'(x)} 
< r
\label{eqn:use13}
\end{align}
for some $r$ near (the right of) $l$ using Equation
\eqref{eqn:lhopital13} and the {\tt near} tactics of \analysis{}
\cite[Sect.~3.2]{affeldt2018jfr}.
Second we show that 
\begin{align}
\forall x, y, a < x < y < c \to \frac{f(x)-f(y)}{g(x)-g(y)}<r 
\label{eqn:usecauchy}
\end{align}
using Cauchy's Mean Value Theorem. Finally, we use Equations
\eqref{eqn:use13} and \eqref{eqn:usecauchy} to show
$$
\forall y, a < y < c \to \frac{f(y)}{g(y)}\leq r < q
$$
and conclude. The difficulty is the following.
For the last step, we need to take the limit of the left-hand side
of the conclusion of formula \eqref{eqn:usecauchy} when $x\to a^+$.
This means that we implicitly assume that $g(y)$ is not $0$.
It is possible because if $g(y)$ is 0 then we can establish a
contradiction using the fact that $g(a)=g(y)=0$.
However, nothing is actually assumed explicitly about the value of
$g(a)$; it is not necessarily $0$ because we only have continuity from
the right.
What happens is that Rudin implicitly assumes that $g$ and $f$ are
extended by continuity by taking $g(a)=f(a)=0$.
For this reason, the very first step of the formal of proof of
L'H\^opital's rule is to define $g_0$ and $f_0$ by continuity and work
internally with these functions. See
\cite[\coqin{realfun.v}]{analysis} for details.

The case of Theorem \ref{thm:lhopitalrudin} where the limit is taken
as $b$ is approached from the left is a direct consequence of
\coqin{lhopital_at_right} using the properties of negation.  The case
where the limit is taken for any $c\in\itvoo{a}{b}$ (regardless of
right or left limits) is a also a direct consequence.  We have not
formalised the cases where the limits involve $\pm\infty$ because they
are not needed to prove shadow-lifting of \STL{}.

\subsection{Shadow-lifting for \STL{}}
\label{sec:slstl}

Compared with \DLtwo{} and \product{}, the conjunction of \STL{}
(Table~\ref{tab:semantics-math}) is much more involved: it
consists of two non-trivial cases (marked as $p_{\min} < 0$ and $p_{\min} > 0$
in Table~\ref{tab:semantics-math}) 
whose computation requires summations of exponentials of deviations.
Varnai and Dimarogonas provide a proof sketch for the case
$p_{\min} > 0$~\cite[Sect.~V]{varnai} which we have successfully
formalised, using in particular l'H\^opital's rule from the previous
section (Sect.~\ref{sec:lhopital}). Below we explain the formalisation of the 
other case
$p_{\min} < 0$ that Varnai and Dimarogonas did not treat.

The case $p_{\min} < 0$ actually refers to the semantics provided
by the function \coqin{stl_and_lt0} already presented in
Fig.~\ref{fig:semantics-lt0-gt0}.
The positive limit we are looking for is actually $\frac{1}{M+1}$
(where $M$ is the size of vectors), i.e., our goal is to prove formally
the following (the notation \coqin{\o} is for function composition and we
are considering vectors of size \coqin{M.+1}):
\begin{minted}{ssr}
Lemma shadowlifting_stl_and_lt0 (p : R) : p > 0 -> forall i,
  ('d (@stl_and_lt0 M.+1 \o @fun_of_rV _ _) '/d i) (const_mx p) = M.+2%:R^-1.
\end{minted}
This boils down to proving the existence of the limit ``from below''
and ``from above''.
The ``from below'' case consists in the following convergence lemma:
\begin{minted}{ssr}
Lemma shadowlifting_stl_and_lt0_cvg_at_left (p : R) i : p > 0 ->
  h^-1 *
  (stl_and_lt0 (fun_of_rV M.+1 (const_mx p + h *: err_vec i)) -
   stl_and_lt0 (fun_of_rV M.+1 (const_mx p))) @[h --> 0^'-] --> M.+2%:R^-1.
\end{minted}
For the sake of clarity, let us switch to standard mathematical
notations and assume without loss of generality that \coqin{i} is actually 
\coqin{M}.
By mere algebraic transformations (using \mathcomp's algebra theory),
the goal can be turned into a sum of two limits:
$$
\begin{array}{ll}
\lim\limits_{h \rightarrow 0^-} \dfrac{\tSTL{\naryConj(\pval, \ldots, \pval, 
		\pval + h)} - \tSTL{\naryConj(\pval, \ldots, \pval)}}{h} \\ 
= \lim\limits_{h \rightarrow 0^-} \frac{1}{h} \left(
	\dfrac{(\pval + h) M e^{\frac{-h}{\pval+h}} e^{\nu 
	\frac{-h}{\pval+h}} + \pval + h }{Me^{\nu \frac{-h}{\pval+h}} + 1}
	- p
	\right) & \text{by definition (see Table~\ref{tab:semantics-math})}\\
= \underbrace{\lim\limits_{h \rightarrow 0^-} \frac{h}{h(M + e^{\nu 
\frac{h}{\pval+h}})}}_{(a)}
 +
\underbrace{\lim\limits_{h \rightarrow 0^-} \frac{M (\pval + 
h)e^{\frac{-h}{\pval+h}} - \pval M }{h(M + e^{\nu \frac{h}{\pval+h}})}}_{(b)} & 
\text{by simplification} \\
\end{array}
$$
We can show directly that $(a)=\frac{1}{M+1}$
but the computation of $(b)$ requires L'H\^opital's rule:
$$
\begin{array}{rcl}
(b) & = & \lim\limits_{h \rightarrow 0^-} \frac{\frac{h M 
	e^{\frac{-h}{\pval+h}}}{\pval + h}}
	{e^{\nu \frac{h}{\pval+h}} +
	h e^{\nu\frac{-h}{\pval+h}}(\frac{\nu}{\pval + h} - \frac{h \nu}{(\pval 
	+h)^2}) + M} \\
& = & \lim\limits_{h \rightarrow 0^-} h 
	\lim\limits_{h \rightarrow 0^-} \frac{M e^{\frac{-h}{\pval+h}}}{\pval + h}
	\lim\limits_{h \rightarrow 0^-} \frac{1}{e^{\nu \frac{h}{\pval+h}} +
		h e^{\nu\frac{-h}{\pval+h}}(\frac{\nu}{\pval + h} - \frac{h 
		\nu}{(\pval 
			+h)^2}) + M} \\
& = & 0 \cdot \frac{M}{\pval} \cdot \frac{1}{1 + M} = 0 \\
\end{array}
$$
 
Barring the necessity of finding the most convenient breakdown of the
limit in the two penultimate steps, this proof is arguably
mathematically straightforward.  The corresponding mechanised proof is
however significantly less trivial than in the cases of \product{} and
\DLtwo{} (Sect.~\ref{sec:sldltwoproduct}): length-wise the first
tentative formal proof we wrote was an order of magnitude larger.

Proving the ``from above'' above consists of a simpler but similar
argument:
\begin{flalign*}
& \lim\limits_{h \rightarrow 0^+} \dfrac{\tSTL{\naryConj(\pval, \ldots, \pval, 
		\pval + h)} - \tSTL{\naryConj(\pval, \ldots, \pval)}}{h}
\\
& = \lim\limits_{h \rightarrow 0^+}  \frac{1}{h} \left (\frac{\pval M + 		
e^{\frac{h}{\pval}} e^{\frac{\nu h}{\pval}}} {M + e^{\frac{h}{\pval}}} - 
p\right) \\
& = \lim\limits_{h \rightarrow 0^+} \frac{1}{M + e^{\frac{\nu h}{\pval}}}
\lim\limits_{h \rightarrow 0^+} e^{\frac{\nu h}{\pval}}
\lim\limits_{h \rightarrow 0^+} \frac{ e^{\frac{ h}{\pval}} 
-1}{\frac{h}{\pval}}
= \frac{1}{M + 1} \cdot 1 \cdot 1 = \frac{1}{M + 1}\\
\end{flalign*}

Combined with the formalisation of the case $p_{\min} > 0$
sketched by Varnai and Dimarogonas~\cite[Sect.~V]{varnai}, this completes
the formal proof of shadow-lifting for \STL{}.

\section{Proof-theoretic properties of \DL{}s}
\label{sec:soundness}

In the previous sections we showed the formalisation of properties relying 
both on algebra (Sect.~\ref{sec:algebraic}) and real analysis
(Sect.~\ref{sec:realanalysis}), leaving this section to define and explore
proof-theoretic properties of \DL{}s.

The main property we investigate is soundness (defined shortly in subsequent 
sections).
The soundness of fuzzy logics was proven by Metcalfe et 
al.~\cite{fuzzy-proof,metcalfe2004analytic}.
This section provides the first 
formalisation of this result and extends the existing calculi of fuzzy logics 
with derivable rules for 
the remaining connectives in comparison to the minimal fragments presented in 
Figures~\ref{fig:godel-seq-calc}--\ref{fig:product-seq-calc}.
Last, this section presents new calculi along with soundness and weak completeness 
proofs for \DLtwo{} as well as \STLinfty{}, neither of 
which has previously appeared 
in the literature. 

\subsection{Hypersequent calculi for fuzzy logics}
\label{sec:proof-theoretic}
Real-valued logics, and in particular fuzzy logics, have been studied 
proof-theoretically for decades. The full exposition of their proof-theoretic 
significance is beyond the scope of this paper for which we refer the interested 
reader to the existing comprehensive 
texts~\cite{fuzzy-proof,galatos2007residuated}.

Generally, these logics are shown to be more expressive than intuitionistic 
logic, 
but less expressive than classical logic, and for this reason they are 
sometimes called \newterm{superintuitionistic 
logics}~\cite{galatos2007residuated}. 
Both Hilbert-style and Gentzen-style calculi for fuzzy logics exist. We
only consider Gentzen-style (or sequent) calculi here or, more precisely,
we consider
\newterm{hypersequent} calculi, as they have been established as the most 
standard way to define fuzzy logics~\cite[Chapt.~VI]{fuzzy-proof}. The calculi 
presented in this work follow exactly those established in literature---which 
results in a single conclusion calculus for \Godel{} and multiple conclusion 
calculi for both \Luka{} and \product{}.


Finally, recall that the connectives $\odot$ and $\impl$ correspond to the 
multiplicative fragment of linear logic, and that the connectives $\land$, 
$\lor$, and $\neg$
correspond to the additive fragment of linear logic. Thus, absence of some of 
the  structural rules should not come as 
a surprise; this has been studied as well in, e.g., 
\cite{galatos2007residuated,fuzzy-proof}.  

\begin{definition}[Sequent, Hypersequent~\cite{fuzzy-proof}]
\label{def:hyper}
A \newterm{sequent} is an ordered pair of finite lists of formulas, written
$q_1, \ldots, q_n \vdash p_1, \ldots, p_m$. 

A \newterm{hypersequent\/} is a finite list of sequents of the form
$$ \seqOne_1 \vdash \seqTwo_1 | \ldots\ |\ \seqOne_n \vdash \seqTwo_n$$ 
where, for all $i \in \{1, \ldots, n\}$, $\seqOne_i \vdash \seqTwo_i$ is a 
sequent.
\end{definition} 

\begin{figure}[t]
		\centering
        \footnotesize{
\begin{gather*}
\infer[\init]{\hseqOne\ |\ \forml \vdash \forml}{}
\semSpace
\infer[(\bot)]{\hseqOne\ |\ \seqOne, \bot \vdash 
\seqTwo}{}
\semSpace
\infer[(\top)]{\hseqOne\ |\ \seqOne \vdash 
\top}{}
\\
\\
\infer[\ew]{\hseqOne\ |\ \hseqTwo}{\hseqOne}
\semSpace
\infer[\ec]{\hseqOne\ |\ \hseqTwo}{\hseqOne | \hseqTwo | \hseqTwo}
\semSpace
\infer[\eex]{\hseqOne_1\ |\ \hseqTwo_2\ |\ \hseqTwo_1\ |\ 
\hseqOne_2}{\hseqOne_1\ |\ \hseqTwo_1\ |\ \hseqTwo_2\ |\ \hseqOne_2}
\\ \\
\infer[\com]{\hseqOne\ |\ \seqOne_1, \seqOne_2 \vdash \seqTwo\ |\ 
\seqThree_1, \seqThree_2 \vdash \seqFour}{\hseqOne\ |\ \seqOne_1, \seqThree_1 
\vdash \seqTwo\ \ \ \ \hseqOne|\ \seqOne_2, \seqThree_2 \vdash 
 \seqFour}
\\ \\
\infer[\weak]{\hseqOne\ |\ \seqOne,\ \seqThree \vdash\seqTwo}{
	\hseqOne\ |\ \seqOne \vdash \seqTwo}
\semSpace
\infer[\contr]{\hseqOne\ |\ \seqOne,\ \seqThree \vdash\seqTwo}{
	\hseqOne\ |\ \seqOne,\seqThree,\seqThree \vdash\seqTwo}
\\ 
\\
\infer[\lex]{\hseqOne\ |\ \seqOne_1, \forml_0, \forml_1, \seqOne_2 \vdash 
\seqTwo}{
\hseqOne\ |\ \seqOne_1, \forml_1, \forml_0, \seqOne_2 \vdash \seqTwo}
\semSpace
\infer[\rex]{\hseqOne\ |\ \seqOne \vdash 
\seqTwo_1, \forml_0, \forml_1, \seqTwo_2 }{
\hseqOne\ |\ \seqOne \vdash 
\seqTwo_1, \forml_1, \forml_0, \seqTwo_2}
	\\
	\\
	\infer[(\text{L}\wedge)]{\hseqOne\ |\ \seqOne, \forml_0 \wedge \forml_1 
	\vdash 
	\seqTwo}{\hseqOne\ |\ \seqOne, \forml_0 \vdash \seqTwo\  |\ 
	\seqOne, \forml_1 \vdash \seqTwo}
	\semSpace
	\infer[(\text{R}\wedge)]{\hseqOne\ 
	|\ \seqOne \vdash \forml_0 \wedge \forml_1}{
	\hseqOne\ |\ \seqOne \vdash \forml_0\ \ \ \ \hseqOne\ |\ \seqOne 
	\vdash \forml_1}
	\\
	\\
	\infer[(\text{L}\vee)]{\hseqOne\ |\ \seqOne, \forml_0 \vee \forml_1 \vdash 
	\seqTwo}{\hseqOne\ |\ \seqOne, 
	\forml_0 \vdash \seqTwo \ \  \hseqOne\ |\ \seqOne, \forml_1 \vdash \seqTwo}
	\semSpace
	\infer[(\text{R}\vee)]{\hseqOne\ |\ 
		\seqOne \vdash 
		\forml_0 \vee \forml_1}{\hseqOne\ |\ \seqOne \vdash \forml_0\ |\ 
		\seqOne \vdash \forml_1}
	\\
	\\
	\infer[(\text{L}\impl)]{\hseqOne\ |\ \seqOne, \forml_0 \impl \forml_1 
	\vdash 
	\seqTwo}{\hseqOne\ |\ \seqOne \vdash \forml_0 \ \ \hseqOne\ |\ 
	\seqOne, \forml_1 \vdash 
	\seqTwo}
	\semSpace
	\infer[(\text{R}\impl)]{\hseqOne\ |\ \seqOne \vdash \forml_0 \impl 
	\forml_1}{\hseqOne\ |\ \seqOne, 
	\forml_0 \vdash \forml_1}
\end{gather*}}
\caption{Hypersequent calculus for \godel{} logic where $\hseqOne$ and 
$\hseqTwo$ 
are hypersequents, $\seqOne,\ \seqThree,\ \seqTwo,\ \seqFour$ 
are sequents and $\forml,\forml_0,\forml_1$ are formulas.}
\label{fig:godel-seq-calc}
\end{figure}

	\begin{figure}[t]
		\centering
                \footnotesize{
\begin{gather*}
\infer[\init]{\hseqOne\ |\ \forml_0 \vdash \forml_0}{}
\semSpace
\infer[\emp]{\hseqOne\ |\  \vdash }{}
\semSpace
\infer[(\bot)]{\hseqOne\ |\ \seqOne, \bot \vdash  \forml_0}{}
\\ \\
%
\infer[\ew]{\hseqOne\ |\ \hseqTwo}{\hseqOne}
\semSpace
\infer[\ec]{\hseqOne\ |\ \hseqTwo}{\hseqOne\ |\ \hseqTwo\ |\ \hseqTwo}
\semSpace
\infer[\eex]{\hseqOne_1\ |\ \hseqTwo_2\ |\ \hseqTwo_1\ |\ 
\hseqOne_2}{\hseqOne_1\ |\ \hseqTwo_1\ |\ \hseqTwo_2\ |\ \hseqOne_2}
\\ \\
\semSpace
\infer[\weak]{\hseqOne\ |\ \seqOne,\ \seqThree \vdash\seqTwo}{
	\hseqOne\ |\ \seqOne \vdash \seqTwo}
\\
\\
\infer[\lex]{\hseqOne\ |\ \seqOne_1, \forml_0, \forml_1, \seqOne_2 \vdash 
\seqTwo}{
\hseqOne\ |\ \seqOne_1, \forml_1, \forml_0, \seqOne_2 \vdash \seqTwo}
\semSpace
\infer[\rex]{\hseqOne\ |\ \seqOne \vdash 
\seqTwo_1, \forml_0, \forml_1, \seqTwo_2 }{
\hseqOne\ |\ \seqOne \vdash 
\seqTwo_1, \forml_1, \forml_0, \seqTwo_2}
\\ \\
\infer[\spl]{\hseqOne\ |\ \seqOne_1 \vdash\seqTwo_1\ |\ \seqOne_2 
\vdash\seqTwo_2}{\hseqOne\ |\ \seqOne_1, \seqOne_2 \vdash \seqTwo_1, \seqTwo_2}
\semSpace
\infer[\mix]{\hseqOne\ |\ \seqOne_1, \seqOne_2 \vdash \seqTwo_1, 
\seqTwo_2}{\hseqOne\ |\ \seqOne_1 \vdash\seqTwo_1\ \ \ \hseqOne\ |\ \seqOne_2 
\vdash\seqTwo_2}
\\ \\
	\infer[(\text{L}\impl)]{\hseqOne\ |\ \seqOne, \forml_0 \impl \forml_1 
	\vdash 
	\seqTwo}{\hseqOne\ |\ \seqOne, \forml_1 \vdash \forml_0, \seqTwo}
	\semSpace
	\infer[(\text{R}\impl)]{\hseqOne\ |\ \seqOne \vdash \forml_0 \impl 
	\forml_1, 
	\seqTwo}{\hseqOne\ |\ \seqOne \vdash \seqTwo\ \ \ \hseqOne\ |\ \seqOne, 
	\forml_0 \vdash \forml_1, \seqTwo}
\end{gather*}}
\caption{Hypersequent calculus for Łukasiewicz logic where $\hseqOne$ and 
$\hseqTwo$ 
are hypersequents, $\seqOne,\ \seqThree,\ \seqTwo,\ \seqFour$ 
are sequents and $\forml_0,\ \forml_1$ are formulas.}
\label{fig:luka-seq-calc}
\end{figure}

\begin{figure}[t]
		\centering
                \footnotesize{
\begin{gather*}
\infer[\init]{\hseqOne\ |\ \forml_0 \vdash \forml_0}{}
\semSpace
\infer[\emp]{\hseqOne\ |\  \vdash }{}
\semSpace
\infer[(\bot)]{\hseqOne\ |\ \seqOne, \bot \vdash  \seqTwo}{}
\\ \\
\infer[\ew]{\hseqOne\ |\ \hseqTwo}{\hseqOne}
\semSpace
\infer[\ec]{\hseqOne\ |\ \hseqTwo}{\hseqOne\ |\ \hseqTwo\ |\ \hseqTwo}
\semSpace
\infer[\eex]{\hseqOne_1\ |\ \hseqTwo_2\ |\ \hseqTwo_1\ |\ 
\hseqOne_2}{\hseqOne_1\ |\ \hseqTwo_1\ |\ \hseqTwo_2\ |\ \hseqOne_2}
\\ \\
\infer[\weak]{\hseqOne\ |\ \seqOne,\ \seqThree \vdash\seqTwo}{
	\hseqOne\ |\ \seqOne \vdash \seqTwo}
\\
\\
	\infer[\lex]{\hseqOne\ |\ \seqOne_1, \forml_0, \forml_1, \seqOne_2 \vdash 
	\seqTwo}{
	\hseqOne\ |\ \seqOne_1, \forml_1, \forml_0, \seqOne_2 \vdash \seqTwo}
	\semSpace
	\infer[\rex]{\hseqOne\ |\ \seqOne \vdash 
	\seqTwo_1, \forml_0, \forml_1, \seqTwo_2 }{
	\hseqOne\ |\ \seqOne \vdash 
	\seqTwo_1, \forml_1, \forml_0, \seqTwo_2}
\\ \\
\infer[\spl]{\hseqOne\ |\ \seqOne_1 \vdash\seqTwo_1\ |\ \seqOne_2 
\vdash\seqTwo_2}{\hseqOne\ |\ \seqOne_1, \seqOne_2 \vdash \seqTwo_1, \seqTwo_2}
\semSpace
\infer[\mix]{\hseqOne\ |\ \seqOne_1, \seqOne_2 \vdash \seqTwo_1, 
\seqTwo_2}{\hseqOne\ |\ \seqOne_1 \vdash\seqTwo_1\ \ \ \hseqOne\ |\ \seqOne_2 
\vdash\seqTwo_2}
	\\ \\
	\infer[(\text{L}\neg)]{\hseqOne\ |\ \seqOne, \neg \forml_0 \vdash \seqTwo}
	{\hseqOne\ |\ \seqOne \vdash \forml_0}
	\semSpace
	\infer[(\text{R}\odot)]{\hseqOne\ 
		|\ \seqOne \vdash \forml_0 \odot \forml_1, \seqTwo}{
		\hseqOne\ |\ \seqOne \vdash \forml_0, \forml_1, \seqTwo}
	\semSpace
	\infer[(\text{L}\odot)]{\hseqOne\ |\ \seqOne, \forml_0 \odot \forml_1 
	\vdash 
	\seqTwo}{\hseqOne\ |\ \seqOne, \forml_0, \forml_1 \vdash \seqTwo}
	\\
	\\
	\infer[(\text{L}\impl)]{\hseqOne\ |\ \seqOne, \forml_0 \impl \forml_1 
	\vdash 
	\seqTwo}{\hseqOne\ |\ \seqOne, \neg \forml_0 \vdash \seqTwo \ \ \ \hseqOne\ 
	|\ \seqOne, \forml_1 \vdash \forml_0, \seqTwo}
	\semSpace
	\infer[(\text{R}\impl)]{\hseqOne\ |\ \seqOne \vdash \forml_0 \impl 
	\forml_1, 
	\seqTwo}{\hseqOne\ |\ \seqOne \vdash \seqTwo\ \ \ \hseqOne\ |\ \seqOne, 
	\forml_0 \vdash \forml_1, \seqTwo}
\end{gather*}}
\caption{Hypersequent calculus for \product{} logic where $\hseqOne$ and 
$\hseqTwo$ are hypersequents, $\seqOne,\ \seqThree,\ \seqTwo,\ \seqFour$ 
are sequents, and $\forml_0,\ \forml_1$ are formulas.}
\label{fig:product-seq-calc}
\end{figure}

Figures~\ref{fig:godel-seq-calc}--\ref{fig:product-seq-calc} present three 
known sequent calculi covering the minimal fragments of the \godel{}, 
\product{},
and \Luka{} logics. As is standard in formalisation of sequent calculi, we 
define each of these calculi as an inductive type.
For example, Fig.~\ref{fig:luka-seq-calc-rocq} shows the \rocq{} definition 
of the \Luka{} logic defined in Fig.~\ref{fig:luka-seq-calc}. 

\begin{figure}[t]
\begin{minted}[numbers=left,xleftmargin=2em,escapeinside=88]{ssr}
Let formula := @expr R boolT_fuzzy.
Let hypersequent := seq (seq formula * seq formula).

Implicit Type Q P S : hypersequent. 8\label{line:implicithyper}8
Implicit Type A B C D X Y : seq formula. 8\label{line:implicitseq}8

Inductive seq_calc_luka_impl : hypersequent -> Prop :=
...
| split_l : forall Q A B C D,
    seq_calc_luka_impl (((A ++ B) |- (C ++ D)) :: Q) ->
    seq_calc_luka_impl ((A |- C) ::  (B |- D) :: Q)
| mix_l : forall Q A B C D,
    seq_calc_luka_impl ((A |- C) :: Q) ->
    seq_calc_luka_impl ((B |- D) :: Q) ->
    seq_calc_luka_impl ((A ++ B |- C ++ D) :: Q)
...
| bot_l : forall Q A (b : formula),
    seq_calc_luka_impl (((dl_bool neg_def _ _ _ false :: A)
                         |- [:: b]) :: Q)
| implL_l : forall Q A B (a b : formula),
    seq_calc_luka_impl (((b :: B) |- a:: A) :: Q ) ->
    seq_calc_luka_impl ((((a `=> b) :: B) |- A) :: Q)
| implR_l : forall Q A B (a b : formula),
    seq_calc_luka_impl ((A |- B) :: Q ) ->
    seq_calc_luka_impl  ((a :: A |- b :: B) :: Q)  ->
    seq_calc_luka_impl ((A |- (a `=> b) :: B) :: Q )
...
\end{minted}
\caption{Fragment of the implementation of the hypersequent calculus for 
\Luka{} logic (Fig.~\ref{fig:luka-seq-calc}). For the remainder of the rules 
see~\cite{github}. We 
use implicit typing for readability (lines 
\ref{line:implicithyper}--\ref{line:implicitseq}).}
\label{fig:luka-seq-calc-rocq}
\end{figure}

The distinct features of these fuzzy logic calculi come from their 
unusual structural rules. Firstly, handling the hypersequents necessitates 
introduction of structural rules for manipulation of hypersequents, the 
so-called \newterm{external structural rules} \ew, \ec{}, and \eex{} for hypersequent 
weakening, contraction, and exchange, respectively. 
Note that internal structural rules differ across different 
calculi: the \godel{} logic has both internal weakening and contraction, while the \product{} and \Luka{} only have weakening. 
All these logics have the internal exchange rule.

The structural rules called 
communication \com{}, split \spl{}, and mix \mix{} are introduced for the 
purposes of completeness, as they enable proofs of some formulas that are of 
special interest for real-valued logics. For example, the pre-linearity property
$$(p_0 \impl p_1) \lor (p_1 \impl p_0)$$
is supposed to reflect the fact that the real line forms a total order, in 
which for any two given elements, one can prove that one of them is greater 
than the other. In absence of internal weakening or contraction, this property 
usually requires some combination of the structural rules \com, 
\spl{}, and \mix{}, depending on the choice of the fuzzy 
logic. 

\paragraph{Formalisation.} When formalising in \rocq{}, we encode sequents as pairs of lists of formulas 
(using the polymorphic 
lists of type \coqin{seq})
 and hypersequents as lists of 
sequents.
For readability we use \coqin{Q |- P} to denote entailment. Each 
calculus is encoded as a separate inductive type. In the interest of 
conciseness we use a short-term \coqin{formula} for the type of logical formulas 
(e.g., \coqin{formula := @expr R boolT_fuzzy} for fuzzy \DL{}s) as well as 
\coqin{hypersequent := seq (seq formula * seq formula)} for hypersequents.

Note that we initially considered defining sequents and hypersequents in terms of 
multisets, 
as this would have allowed us to omit exchange rules.
However, due to the nesting of the structures needed in hypersequents and the 
richer library support for polymorphic lists, lists tend to 
result in shorter proofs. 

\subsection{Soundness of fuzzy \DL{}s}
\label{sec:soundness-fuzzy}

To define soundness, the semantics of $\tdl{\seqOne_i \vdash 
\seqTwo_i}$ need to be defined for each \DL{} in terms of their connectives.
While \godel{} logic has both monoidal and lattice connectives it is 
usually defined in terms of its lattice connectives (which can be equivalently 
replaced by the multiplicative analogues). 
However, the \product{} and \Luka{} logics are defined in terms of their 
multiplicative fragments. 

\begin{definition}[Sequent soundness]
\label{def:interpr-sequent}
A sequent $\seqOne \vdash \seqTwo = q_0, ..., q_n \vdash p_0, ..., p_m$, 
where 
$n, m \in \Nat$, is 
sound with respect to a given \DL{}, if the following holds for the 
appropriate 
\DL{}:
\begin{itemize}
\item $\tGodel{\seqOne \vdash \seqTwo} \mydef \tGodel{q_0 \wedge ... \wedge 
q_n} \leq \tGodel{p_0 \vee ... \vee p_m}$,
\item $\tlukasiewicz{\seqOne \vdash \seqTwo} \mydef \tlukasiewicz{q_0 \odot 
... \odot q_n} \leq \tlukasiewicz{p_0 \odot ... \odot 
p_m}$,
\item $\tproduct{\seqOne \vdash \seqTwo} \mydef \tproduct{q_0 \odot ... \odot 
q_n} \leq \tproduct{p_0 \odot ... \odot p_m}$,
\end{itemize}
and where, if either $P$ or $Q$ is empty, we have:
\begin{itemize}
\item $\tGodel{\vdash \seqTwo} \mydef 1 \leq \tGodel{p_0 \vee ... \vee p_m}$,
\item $\tGodel{\seqOne \vdash } \mydef \tGodel{q_0 \wedge ... \wedge 
q_n} \leq 0$,
\item $\tlukasiewicz{\vdash \seqTwo} \mydef 1 \leq \tlukasiewicz{p_0 \odot ... 
\odot 
p_m}$,
\item $\tlukasiewicz{\seqOne \vdash } \mydef \tlukasiewicz{q_0 \odot 
... \odot q_n} \leq 1 $, 
\item $\tproduct{\vdash \seqTwo} \mydef 1 \leq \tproduct{p_0 \odot ... \odot 
p_m}$,
\item $\tproduct{\seqOne \vdash} \mydef \tproduct{q_0 \odot ... \odot 
q_n} \leq 1$.
\end{itemize}
\end{definition}

Note that the sequent soundness for the \Godel{} logic follows the
traditional approach of the sequent calculus, that interprets
the left side of the sequent conjunctively 
and the right side of the sequent disjunctively.
However, \Luka{} and \product{} logics depart
from this tradition and interpret both sides using the same monoidal operator~\cite{fuzzy-proof}.


\begin{definition}[Soundness of a hypersequent calculus]
\label{def:soundness}
A hypersequent calculus for a $\DL$ is sound, if for every hypersequent 
$\hseqOne = \seqOne_0 \vdash \seqTwo_0\ |\ \ldots\ |\ \seqOne_n \vdash \seqTwo_n$
proven in this calculus there exists $i \in {1,...,n}$ such that 
$\tdl{\seqOne_i \vdash \seqTwo_i}$ for DL $\in \{\Godel{}, \Luka{}, \product\}$.
\end{definition}


We start with stating the soundness results that we prove in this section.
\begin{theorem}[Soundness of fuzzy logics]
The fuzzy logics \Godel{}, \product{} and \Luka{} are sound, in the sense of 
Definition~\ref{def:soundness}.
\end{theorem}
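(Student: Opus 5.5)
The plan is to prove soundness by induction on the derivation of the hypersequent, separately for each of the three inductive types (Figs.~\ref{fig:godel-seq-calc}--\ref{fig:product-seq-calc}). The invariant carried through the induction is exactly Definition~\ref{def:soundness}: some component $\seqOne_i \vdash \seqTwo_i$ of the derived hypersequent is valid in the sense of Definition~\ref{def:interpr-sequent}. The first step is to set up, for each logic, interpretation functions for the two sides of a sequent as iterated operations ($\min$/$\max$ for \Godel{}, iterated $\odot$ for \Luka{} and \product{}). Then I will prove the basic algebraic facts about these functions that the rule cases need:
\begin{itemize}
\item invariance under permutation, which handles \lex{} and \rex{};
\item the identity that the interpretation of $\seqOne_1,\seqOne_2$ equals the $\odot$ (resp.\ $\min$) of the interpretations of the parts;
\item monotonicity in each argument;
\item the effect of adding a formula: it can only lower the interpretation of the left side, which handles \weak{}.
\end{itemize}
These follow from commutativity, associativity, unit and monotonicity of $\odot$, all of which are available from Theorem~\ref{thm:dl-residuated}.

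The external rules \ew{}, \ec{}, \eex{} are immediate. The witnessing component of the premise is still present in the conclusion, so I only need a lemma saying that ``some component is valid'' is preserved under list permutation, duplication and extension. Axioms such as \init{}, $(\bot)$, $(\top)$ and \emp{} are direct computations.

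For the logical rules I will use residuation (R10) together with the lattice properties. For (R$\impl$) in \Godel{}, $\min(Q,p_0)\le p_1$ yields $Q\le p_0\impl p_1$. For \Luka{}'s (R$\impl$) and (L$\impl$), I will rewrite $x\odot(x\impl y)=\min(x,y)$ and use the concrete truncated-sum formulas, closing the resulting real inequalities by case analysis and \coqin{lra}. The product rules go the same way, with division handled by case splits on zero.

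Rules with two premises need care, because the induction hypothesis only gives a valid component in each premise, and that component may lie in the shared context $\hseqOne$. So every two-premise case splits: if either premise's witness is in $\hseqOne$, we are done; otherwise both active sequents are valid and we combine them. For \mix{} the combination is monotonicity of $\odot$. For \com{} I will use totality of $\Real$: compare $\seqOne_1$ with $\seqThree_2$ (suitably interpreted) and pick whichever conclusion component follows.

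I expect \spl{} and \com{} to be the main obstacle, together with the multiple-conclusion interpretation for \Luka{}/\product{}. The paper's definition interprets both sides by $\odot$, so for \spl{} I need: from $Q_1\odot Q_2\le P_1\odot P_2$ conclude $Q_1\le P_1$ or $Q_2\le P_2$. I will prove this by contraposition, using that $\odot$ is strictly monotone where it is positive. The boundary cases where $\odot$ truncates to $0$ need separate treatment, and I will enumerate them explicitly. If the literal definition makes some rule fail at these boundaries, I will first check whether the standard Metcalfe-style interpretation (e.g.\ $\oplus$ or a sum-based reading on the right) is what the formalisation actually uses, and adjust the invariant accordingly.
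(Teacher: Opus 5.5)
Your architecture matches the paper's proof: induction on the derivation, rule-by-rule preservation of ``some component is valid'', and, for two-premise rules, a split on whether either witness lies in the shared context $\hseqOne$, as in the paper's worked \DLtwo{} case. For \Godel{} and \product{} your plan goes through. For \product{}, \spl{} follows from strict monotonicity of multiplication on nonnegative reals, with no exceptional cases.

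The genuine gap is \Luka{}. Suppose you carry the literal invariant $\tlukasiewicz{q_0\odot\cdots\odot q_n}\le\tlukasiewicz{p_0\odot\cdots\odot p_m}$, where the $n$-ary $\odot$ is $\max(1+\sum_i(v_i-1),0)$. The truncation at $0$ discards information, and three rules do not preserve validity. No enumeration of boundary cases and no case analysis with linear arithmetic can close them:
\begin{itemize}
\item \spl{}: with left values $0.4,0.4$ and right values $0.3,0.3$, the premise reads $0\le 0$, while each conclusion component reads $0.4\le 0.3$.
\item (L$\impl$): let $\seqOne,\seqTwo$ be single formulas with values $0.5$ and $0.4$, and let $\tlukasiewicz{p_0}=\tlukasiewicz{p_1}=0.3$. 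The premise $\seqOne,p_1\vdash p_0,\seqTwo$ reads $0\le 0$, but the conclusion reads $0.5\le 0.4$.
\item (R$\impl$): let $\seqOne,\seqTwo$ both have value $0.5$, with $\tlukasiewicz{p_0}=0.5$ and $\tlukasiewicz{p_1}=0.1$. The premises read $0.5\le 0.5$ and $0\le 0$, but the conclusion reads $0.5\le 0.1$.
\end{itemize}
Formulas are closed, and comparison atoms realise such values (e.g.\ the fuzzy reading of $1=3$ is $1/2$). So these are real failures of the induction step, not artefacts.

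Your fallback is therefore mandatory, and you need it for the implication rules as well as for \spl{}. Run the induction with the untruncated, Metcalfe-style invariant $\sum_{q\in\seqOne}(\tlukasiewicz{q}-1)\le\sum_{p\in\seqTwo}(\tlukasiewicz{p}-1)$. Every \Luka{} rule preserves it by linear arithmetic. The two facts you need are $\tlukasiewicz{p_0\impl p_1}-1=\min(0,\tlukasiewicz{p_1}-\tlukasiewicz{p_0})$, and that every summand is $\le 0$, which handles \weak{} and $(\bot)$. The $n$-ary $\odot$ is exactly $x\mapsto\max(1+x,0)$ applied to this sum, and that map is monotone. Hence the stated $\odot$-inequality holds for the witnessing component at the end.

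There is also a smaller slip in \com{}, which occurs only in the \Godel{} calculus. The deciding comparison is $\seqOne_2$ versus $\seqThree_1$, not $\seqOne_1$ versus $\seqThree_2$. If $\tGodel{\seqOne_2}\le\tGodel{\seqThree_1}$, the first premise gives $\seqOne_1,\seqOne_2\vdash\seqTwo$; otherwise the second premise gives $\seqThree_1,\seqThree_2\vdash\seqFour$. Comparing $\seqOne_1$ with $\seqThree_2$ does not determine which conclusion component holds.
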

\begin{proof} For each logic, the proof proceeds by structural induction and uses the 
	properties of the interpretation functions. A similar proof following the same 
	structure for \DLtwo{} is shown in more detail in Sect.~\ref{sec:sound-dl2}.
\end{proof}

We will devote the rest of this section to explaining the formalisation of the 
calculi in \rocq.
Definitions~\ref{def:interpr-sequent}  and \ref{def:soundness} can be 
encoded directly. We illustrate this with the 
implementation of soundness of \Luka{} logic as an example. For ease of 
reading we use notation such as \coqin{eval_luka} which corresponds to the 
interpretation of sequents for \Luka{} logic (which uses $\odot$ on both sides, Definition~\ref{def:interpr-sequent}). Definition~\ref{def:soundness} is 
then:

\begin{minted}{ssr}
Lemma sound_luka_impl Q : seq_calc_luka_impl Q ->
  exists2 q : seq formula * seq formula,
  q \in Q & eval_luka q.1 <= eval_luka q.2.
\end{minted}
Recall that \coqin{Q} is a list of pairs; we use \coqin{.1} and \coqin{.2} to 
access the first and second element of the pair respectively. We use \coqin{exists2}, a version of the existential quantifier that asserts the existence a variable that satisfies two separate conditions.

The proofs are highly dependent on the semantics of the \DL{}, especially the 
lattice or monoidal connectives used in the soundness definition. At the most 
basic
level, most of the cases can be reduced to a set of inequalities---therefore, 
while they 
require significant simplification of the assumptions and proofs of 
intermediate steps, we can make extensive use of the \coqin{lra} and 
\coqin{nra} 
tactics, decision 
procedures for linear and non-linear arithmetic respectively~\cite{algebratactics}. Those simplifications include intermediate 
lemmas but also removal of 
unnecessary knowledge from the context and occasional renaming of complex 
formulas (as both of those can cause said tactics to fail).
In the case of the \product{} logic especially it is often necessary to prove 
more of 
the intermediate steps manually, as they can be too complex for the 
\coqin{nra} tactic.

The proofs for \Godel{} logic rely more on clever lemma application order (the 
presence of \coqin{minr}/\coqin{maxr} necessitates either defining helper 
lemmas or unfolding each occurrence). 
Finding the pattern of proof common among general cases significantly limits 
code duplication. In 
some cases, 
many nested occurrences of \coqin{minr}/\coqin{maxr} can increase 
computation time needed to process the combination. 
Therefore it is preferable to both make full use of targeted unfolding of only 
specific occurrences (\coqin{rewrite {n}/minr} where \coqin{n} is the number of 
the occurrence to unfold) and develop additional lemmas where 
possible---especially lemmas on the interactions between \coqin{minr}, 
\coqin{maxr} and big operations.

Lastly, the majority of the cases can be proven directly, with the exception of 
the rules 
for conjunction for \godel{} logic, which have to be proven by contradiction 
instead.

\paragraph{Extension of the calculi to \DL{} syntax with derivable rules}
\label{sec:sound-equiv}

The calculi presented for the fuzzy \DL{}s so far were for
minimal subsets of the language 
(Figures~\ref{fig:godel-seq-calc}--\ref{fig:product-seq-calc}), not the 
entirety of the \DL{} syntax. We, in fact, prove soundness both of
those standardly used calculi, as well as the calculi for the full set
of connectives used for \DL{}s (following the syntax of 
Fig.~\ref{fig:syntax-types-math}), with the 
exception of $\oplus$ for \product{} logic, because since it is not definable 
in 
terms of other connectives its rule cannot be derived.

We do not formalise completeness of the extended calculi in this work, but we 
compensate by showing that the rules in the extended calculi that include
the full set of connectives can be derived from the calculi for the minimal 
fragments (Figures~\ref{fig:godel-seq-calc}--\ref{fig:product-seq-calc}) which 
are known to be complete.

As an interesting observation, the derivability sometimes requires an addition 
of 
extended (yet still derivable) logical rules to the calculus. For example, to 
derive the sequent 
rules for monoidal conjunction for \Luka{} logic, an ``extended'' left 
implication rule, $(L\impl)_e$, is necessary:
$$
\infer[(L\impl)_e]{\hseqOne\ |\ \seqOne, \forml_0 \impl \forml_1 \vdash 
\seqTwo}{\hseqOne\ |\ \seqOne \vdash \seqTwo\ |\ \hseqOne\ |\ \seqOne, \forml_1 
\vdash 
\forml_0, \seqTwo}.
$$
This rule can be derived using the standard implication rule $(L\impl)$, 
weakening $(W)$ and external exchange $(EC)$. We note it here explicitly as the 
use of such alternate rules in derivation is often not stated in the original 
paper proofs~\cite{fuzzy-proof}.

\subsection{Hypersequent calculus for \DLtwo{}}
\label{sec:sound-dl2}

The main challenge of designing the calculus for \DLtwo{}~\cite{fischer2019dl2} 
lies in making the 
adjustments 
to its original definition and fitting it into the residuated lattice 
framework (Table~\ref{tab:semantics-math}).
 Recall that 
its 
original proposed interpretation of $\oplus$ (multiplication) is not a dual to the 
interpretation of $\odot$ and 
furthermore, that due to lack of the structural negation (as it is interpreted 
at 
the level of comparisons between reals) there is no simple way to obtain a 
dual (Sect.~\ref{sec:algebraic}). Therefore, it is not included in the 
calculus. 
Secondly, its implication was also only defined in terms of negation, 
and therefore was similarly non-structural and
only expressible with a 
combination of logical connectives and comparisons between real numbers instead 
of being a separate connective. 
In order to include it within the calculus, a new 
implication (Table~\ref{tab:semantics-math}) satisfying the residuation rule 
was added.

To prove the soundness of \DLtwo{} we first propose a new calculus
(Fig.~\ref{fig:dl2-calc}).
Unusually for a calculus, it does not have a rule for $\bot$, because $\bot$
is not defined for \DLtwo{}. If $\bot$ was clearly
defined, negation would be trivial to obtain.

\begin{figure}
		\centering
                \footnotesize{

\begin{gather*}
\infer[\init]{\hseqOne\ |\ \forml_0 \vdash \forml_0}{}
\semSpace
\infer[\emp]{\hseqOne\ |\  \vdash }{}
\semSpace
\infer[(\top)]{\hseqOne\ |\ \seqOne \vdash 
	\top}{}
	\\
	\\
\infer[\ew]{\hseqOne\ |\ \hseqTwo}{\hseqOne}
\semSpace
\infer[\ec]{\hseqOne\ |\ \hseqTwo}{\hseqOne\ |\ \hseqTwo\ |\ \hseqTwo}
\semSpace
\infer[\eex]{\hseqOne_1\ |\ \hseqTwo_2\ |\ \hseqTwo_1\ |\ 
\hseqOne_2}{\hseqOne_1\ |\ \hseqTwo_1\ |\ \hseqTwo_2\ |\ \hseqOne_2}
\\
\\
\infer[\weak]{\hseqOne\ |\ \seqOne,\ \seqThree \vdash\seqTwo}{
	\hseqOne\ |\ \seqOne \vdash \seqTwo}
\semSpace
\infer[\com]{\hseqOne\ |\ \seqOne_1, \seqOne_2 \vdash \seqTwo\ |\ 
\seqThree_1, \seqThree_2 \vdash \seqFour}{\hseqOne\ |\ \seqOne_1, \seqThree_1 
\vdash \seqTwo\ \ \ \ \hseqOne|\ \seqOne_2, \seqThree_2 \vdash 
 \seqFour}
\\ \\
 \infer[\lex]{\hseqOne\ |\ \seqOne_1, \forml_0, \forml_1, \seqOne_2 \vdash 
 \seqTwo}{
 \hseqOne\ |\ \seqOne_1, \forml_1, \forml_0, \seqOne_2 \vdash \seqTwo}
 \semSpace
 \infer[\rex]{\hseqOne\ |\ \seqOne \vdash 
 \seqTwo_1, \forml_0, \forml_1, \seqTwo_2 }{
 \hseqOne\ |\ \seqOne \vdash 
 \seqTwo_1, \forml_1, \forml_0, \seqTwo_2}
\\ \\
	\infer[(\text{L}\odot)]{\hseqOne\ 
		|\ \seqOne, \forml_0 \odot \forml_1\vdash \seqTwo}{
		\hseqOne\ |\ \seqOne, \forml_0, \forml_1 \vdash \seqTwo}
\semSpace
	\infer[(\text{R}\odot)]{\hseqOne\ |\ \seqOne_0,  \seqOne_1,
	\vdash \forml_0 \odot \forml_1,  
	\seqTwo_0, \seqTwo_1}{\hseqOne\ |\ \seqOne_0 \vdash \forml_0, \seqTwo_0 \ \ 
	\ 
	\hseqOne\ |\ \seqOne_1 \vdash \forml_1, \seqTwo_1}
	\\
	\\
	\infer[(\text{L}\wedge)]{\hseqOne\ |\ \seqOne, \forml_0 \wedge \forml_1 
	\vdash 
	\seqTwo}{\hseqOne\ |\ \seqOne, \forml_0 \vdash \seqTwo\  |\ 
	\seqOne, \forml_1 \vdash \seqTwo}
	\semSpace
	\infer[(\text{R}\wedge)]{\hseqOne\ 
	|\ \seqOne \vdash \forml_0 \wedge \forml_1, \seqTwo}{
	\hseqOne\ |\ \seqOne \vdash \forml_0, \seqTwo\ \ \ \ \hseqOne\ |\ \seqOne 
	\vdash \forml_1, \seqTwo}
	\\
	\\
	\infer[(\text{L}\vee)]{\hseqOne\ |\ \seqOne, \forml_0 \vee \forml_1 \vdash 
	\seqTwo}{\hseqOne\ |\ \seqOne, 
	\forml_0 \vdash \seqTwo \ \  \hseqOne\ |\ \seqOne, \forml_1 \vdash \seqTwo}
	\semSpace
	\infer[(\text{R}\vee)]{\hseqOne\ |\ 
		\seqOne \vdash 
		\forml_0 \vee \forml_1, \seqTwo}{\hseqOne\ |\ \seqOne \vdash \forml_0, 
		\seqTwo\ |\ 
		\seqOne \vdash \forml_1, \seqTwo}
	\\
	\\
	\infer[(\text{L}\impl)]{\hseqOne\ |\ \seqOne, \forml_0 \impl \forml_1 
	\vdash \seqTwo}{
	\hseqOne\ |\ \seqOne \vdash \seqTwo \ \ \ 
	\hseqOne\ |\ \seqOne, \forml_1 \vdash \forml_0, \seqTwo}
	\semSpace
	\infer[(\text{R}\impl)]{\hseqOne\ |\ \seqOne \vdash \forml_0 \impl 
	\forml_1, 
	\seqTwo}{\hseqOne\ |\ \seqOne \vdash \seqTwo \ \ \ 
		\hseqOne\ |\ \seqOne, \forml_0 \vdash \forml_1, \seqTwo}
\end{gather*}}
\caption{Hypersequent calculus for \DLtwo{} logic where $\hseqOne$ and 
$\hseqTwo$ 
are hypersequents, $\seqOne,\ \seqThree,\ \seqTwo,\ \seqFour$ 
are sequents and $\forml_0,\ \forml_1$ are formulas.}
\label{fig:dl2-calc}
\end{figure}

Recall from Table~\ref{tab:semantics-math} that \DLtwo{} has monoidal 
connectives. Therefore sequent soundness is defined as follows:

\begin{definition}[Sequent soundness for \DLtwo{}]
A sequent $\seqOne \vdash \seqTwo = q_1, ..., q_n \vdash p_1, ..., p_m$, 
where 
$n, m \in \Nat\setminus\{0\}$, is 
sound with respect to \DLtwo, if the following holds:
$$\tDLtwo{\seqOne \vdash \seqTwo} \mydef \tDLtwo{q_1 \odot 
... \odot q_n} \leq \tDLtwo{p_1 \odot ... \odot 
p_m}.$$
and where, if either $P$ or $Q$ is empty, we have:
\begin{align*}
\tDLtwo{\seqOne \vdash } &\mydef \tDLtwo{q_1 \odot ... \odot q_n} \leq 0, \\
\tDLtwo{\vdash \seqTwo} &\mydef 0 \leq \tDLtwo{p_1 \odot ... \odot p_m}.
\end{align*}
\end{definition}

Let us introduce an abbreviated notation for clarity: given 
$\tDLtwo{p_1 \odot 
... \odot p_n}$ where $p_1 \odot 
... \odot p_n = \seqTwo$ we write $\tDLtwo{\seqTwo}$.

\begin{theorem}[Soundness of \DLtwo{}]
\DLtwo{} is sound, in the sense of Definition~\ref{def:soundness}.
\end{theorem}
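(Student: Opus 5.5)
The proof goes by induction on the derivation of the hypersequent in the calculus of Fig.~\ref{fig:dl2-calc}. The induction hypothesis for a premise $\hseqOne\ |\ \seqOne \vdash \seqTwo$ says that some component is sound. So either a component of the side hypersequent $\hseqOne$ is sound, in which case it is also a component of the conclusion and we are done, or the active sequent $\seqOne \vdash \seqTwo$ is sound. For rules with two premises this gives four subcases. Three of them are closed immediately by a component of $\hseqOne$. Only the subcase where both active sequents are sound needs real work.

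First I will prove a few auxiliary facts. (i) Every \DLtwo{} formula has non-positive interpretation: $\tDLtwo{p} \leq 0$. This follows by induction on $p$, since comparisons, $\top$, $+$ of non-positives, $\min$, $\max$ and $-\max(\cdot,0)$ all stay in $\itvoc{-\infty}{0}$. (ii) Because $\odot$ is addition, $\tDLtwo{\seqOne}$ is the sum over the list, with the empty list read as $0$. This agrees with the definitions for empty $\seqOne$ or $\seqTwo$. So soundness of $\seqOne \vdash \seqTwo$ is just $\sum \seqOne \leq \sum \seqTwo$. It follows that sums are invariant under permutation and concatenation ($\sum(\seqOne_1,\seqOne_2) = \sum\seqOne_1 + \sum\seqOne_2$). (iii) A sum is antitone under adding formulas, by (i).

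With these, the rules fall out as follows.

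\textbf{Structural rules.} \init{} and \emp{} are equalities. $(\top)$ holds because $\sum\seqOne \leq 0 = \tDLtwo{\top}$ by (i). \weak{} follows from (iii). \lex{}, \rex{}, \ew{}, \ec{}, \eex{} preserve sums or list membership of components.

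\textbf{Monoidal rules.} (L$\odot$) leaves the sum unchanged. (R$\odot$) adds the two premise inequalities.

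\textbf{Lattice rules.} (L$\wedge$) uses $\min(a,b) \leq a$ in whichever component holds. (R$\wedge$) combines $\sum\seqOne \leq a + \sum\seqTwo$ and $\sum\seqOne \leq b + \sum\seqTwo$ into $\sum\seqOne \leq \min(a,b) + \sum\seqTwo$. (L$\vee$) is dual. (R$\vee$) uses $a, b \leq \max(a,b)$.

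\textbf{Implication rules.} Here $\tDLtwo{p_0 \impl p_1} = \min(\tDLtwo{p_1}-\tDLtwo{p_0},0) \leq 0$. (L$\impl$) already follows from the first premise and (iii). For (R$\impl$), split on $\tDLtwo{p_0} \leq \tDLtwo{p_1}$: if it holds, the implication is $0$ and the first premise suffices. Otherwise, the second premise $\sum\seqOne + \tDLtwo{p_0} \leq \tDLtwo{p_1} + \sum\seqTwo$ rearranges to the goal.

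The case I expect to be the main obstacle is \com{}. Its conclusion has two new components, so it is the one rule where soundness does not transfer from a single premise component. From the premises we get $\sum\seqOne_1 + \sum\seqThree_1 \leq \sum\seqTwo$ and $\sum\seqOne_2 + \sum\seqThree_2 \leq \sum\seqFour$. I will argue by contradiction: suppose both $\sum\seqOne_1+\sum\seqOne_2 > \sum\seqTwo$ and $\sum\seqThree_1+\sum\seqThree_2 > \sum\seqFour$. Adding them contradicts the sum of the premise inequalities. In \rocq{}, the same step requires handling the disjunctive ``exists a component'' statement together with list concatenation and big sums. I plan to discharge it with \coqin{lra} after rewriting with \coqin{big_cat}, first reducing the goal to the two-component disjunction.
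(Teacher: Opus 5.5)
Your proposal is correct and follows the paper's proof: induction on the derivation, proving each rule sound by splitting on whether the sound component lies in $\hseqOne$ or is the active sequent. Your contradiction argument for \com{} is equivalent to the paper's case split on $\tDLtwo{\seqOne_2} \leq \tDLtwo{\seqThree_1}$; your non-positivity lemma, needed for $(\top)$ and \weak{} but left implicit in the paper, is right, though its induction should also cover the $\oplus$ case, which holds because $-\tDLtwo{p_0}\times\tDLtwo{p_1} \leq 0$ when both factors are non-positive.
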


\begin{proof} 

The proof proceeds by structural induction.
Starting with the initial sequents---\init, \emp{}, and ($\top$)---which are 
straightforward, and then proceeding to the remaining rules (see~\cite[\coqin{seq_calc_dl2.v}]{github}). 

We do so by proving each individual rule to be sound by proving the bottom 
hypersequent is sound using the assumptions of the upper hypersequent: a rule 
$\infer{\hseqTwo}{\hseqOne_1 \ \ ...\ \ \hseqOne_n}$ is sound if whenever 
$\hseqOne 
,...,\hseqOne_n$ are sound $\hseqTwo$ is also sound.
 We show a selection of cases as an example (the remainder of the proof 
 proceeds analogously), starting with one of the initial sequents: \init.

\medskip
\textbf{\init} To prove  $\hseqOne\ |\ p \vdash p$ 
is sound, we observe, by evaluation, that $\tDLtwo{p \vdash p} = \tDLtwo{p} \leq \tDLtwo{p}$, which holds trivially. 
\medskip
%

\medskip
Now consider one of the structural rules of the calculus, \com:
\medskip



\textbf{\com} Recall that $|$ in hypersequents can be seen as a disjunction 
while the space between two separate hypersequents in the top of a rule can be 
seen as a conjunction. Therefore, to prove the soundness of the \com{} rule

$$\infer[\com]{\hseqOne\ |\ \seqOne_1, \seqOne_2 \vdash \seqTwo\ |\ 
\seqThree_1, \seqThree_2 \vdash \seqFour}{\hseqOne\ |\ \seqOne_1, \seqThree_1 
\vdash \seqTwo\ \ \ \ \hseqOne|\ \seqOne_2, \seqThree_2 \vdash 
 \seqFour}$$
we need to prove that there exists a sequent 
which is sound such that it either:
\begin{enumerate}
\item belongs to $\hseqOne$,
\item is equal to $\seqOne_1, \seqOne_2 \vdash \seqTwo$, or
\item is equal to $ 
\seqThree_1, 
\seqThree_2 \vdash \seqFour$.
\end{enumerate}
 From the assumptions we know that there exists a sound sequent $A$ such that $A \in \hseqOne \cup \{\seqOne_1, 
 \seqThree_1 \vdash \seqTwo\}$ and a sound sequent $B$ such that $B \in \hseqOne \cup \{\seqOne_2, 
 \seqThree_2 \vdash \seqFour\}$ .
%
If either $A$ or $B$ are in $\hseqOne$ then this directly concludes the proof (case 1. holds).

\smallskip

That leaves us with $A = \seqOne_1, 
\seqThree_1 \vdash \seqTwo$ and $B = \seqOne_2, 
\seqThree_2 \vdash \seqFour$. Consider the two cases:
\begin{align*}
\tDLtwo{\seqOne_2} &\leq \tDLtwo{\seqThree_1} \text{, and} \\
\tDLtwo{\seqOne_2} &> \tDLtwo{\seqThree_1}.
\end{align*}

In the first case, we add $ \tDLtwo{\seqOne_1}$ on both sides of that 
inequality, getting 

$$\tDLtwo{\seqOne_1} 
+ \tDLtwo{\seqOne_2} \leq \tDLtwo{\seqOne_1} + \tDLtwo{\seqThree_1}.$$

Note that
$$\tDLtwo{A} = \tDLtwo{\seqOne_1, 
\seqThree_1 \vdash \seqTwo} = \tDLtwo{\seqOne_1} + \tDLtwo{\seqThree_1} \leq 
\tDLtwo{\seqTwo}.$$

  Combining the two inequalities, we get $\tDLtwo{\seqOne_1} 
+ \tDLtwo{\seqOne_2} \leq \tDLtwo{\seqTwo}$ and from that we know that 
$\seqOne_1, \seqOne_2 
\vdash \seqTwo$ is sound, which concludes the proof (case 2. holds).

\smallskip

The second case proceeds analogously, proving that $\seqThree_1, 
\seqThree_2 \vdash \seqFour$ is sound (case 3. holds).

\medskip

Consider now the rules for logical connectives. We will provide in detail the proof for one of the cases---namely  (L$\odot$)---as the proofs for 
right rules are 
otherwise similar.

\medskip

\textbf{(L$\odot$)}  To prove this case, we need to prove that there 
exists a sequent which either:
\begin{enumerate}
\item belongs to $\hseqOne$, or
\item is equal to $\seqOne, p_0 \odot p_1 \vdash \seqTwo.$
\end{enumerate}
%
 We know that there exists a sound sequent $A$, such that $A \in \hseqOne \cup \{\seqOne, p_0, p_1 \vdash \seqTwo\}$
%

\smallskip

If $ A \in 
\hseqOne$, the proof concludes (case 1. holds). 

\smallskip

Otherwise we have $A = \seqOne, p_0, p_1 \vdash \seqTwo$. 
In this case, we need to prove $\seqOne, p_0 \odot p_1 
\vdash \seqTwo$ is sound; to do so, we need $\tDLtwo{\seqOne, p_0 \odot p_1 
\vdash \seqTwo}$ to be true. We therefore need to prove
$$\tDLtwo{\seqOne} + \tDLtwo{p_0} + 
\tDLtwo{p_1} \leq \tDLtwo{\seqTwo} .$$

Since $A$ is sound, 
we have
 $$\tDLtwo{\seqOne, p_0, p_1 \vdash \seqTwo} = \tDLtwo{\seqOne} + 
\tDLtwo{p_0} + \tDLtwo{p_1} \vdash \tDLtwo{\seqTwo}$$
which concludes the 
proof (case 2. holds).

\medskip

This concludes the proof. All other cases are proven analogously.
\end{proof}

Due to the simplicity of the $\odot$ semantics, the proof of 
soundness of \DLtwo{} itself is easier from the formalisation standpoint than 
the 
remaining fuzzy \DL{}s.
 Both the sequent rules and the soundness statement itself gain complexity with 
 the semantics of the relevant connectives (particularly the monoidal and 
 lattice connectives used to define soundness).

\subsection{Hypersequent calculus for \STLinfty{}}
\label{sec:sound-stl}

From initial inspection, \STLinfty{} is notably similar to \Godel{} 
logic through the shared semantics of both lattice and monoidal connectives 
(Table~\ref{tab:semantics-math}). Hence, we define its soundness of 
sequents analogically:

\begin{definition}[Sequent soundness for \STLinfty{}]
A sequent $\seqOne \vdash \seqTwo = q_1, ..., q_n \vdash p_1, ..., p_m$, 
where 
$n, m \in \Nat\setminus\{0\}$, is 
sound with respect to \STLinfty, if the following holds:
$$
\tSTLinfty{\seqOne \vdash \seqTwo} \mydef \tSTLinfty{q_1 \wedge ... \wedge q_n} 
\leq \tSTLinfty{p_1 \vee ... \vee p_m},
$$
and where, if either $P$ or $Q$ is empty, we have:
\begin{align*}
\tSTLinfty{\seqOne \vdash } &:\mydef \tSTLinfty{q_1 \wedge ... \wedge q_n} \leq 
-\infty, \\
\tSTLinfty{\vdash \seqTwo} &\mydef +\infty \leq \tSTLinfty{p_1 \vee ... \vee p_m}.
\end{align*}
\end{definition}

It is therefore natural that the calculus of \STLinfty{}, which we define in 
Fig.~\ref{fig:stl-seq-calc}, also shares the majority of its rules with 
\Godel{} logic, including the implication. 

The only difference is negation, 
which is 
not analogous to the binarised approached of \Godel{} logic (the sequent rules 
for which are derivable from the set defined in Fig.~\ref{fig:godel-seq-calc}). 
Standard negation 
rules are not sound for \STLinfty{}. 
An 
alternative 
approach would be to take negation satisfying axiom N1 in Table~\ref{table:axiom-schema-neg}, where $\tSTLinfty{\neg e} = \tSTLinfty{e \impl \bot}$ 
instead, with similarly derived sequent rules; this merits future 
investigation. 

We leave the problem of defining either custom sequent rules (sound and 
complete) for negation or a different negation for future work.

\begin{figure}[t]
		\centering
        \footnotesize{
\begin{gather*}
\infer[\init]{\hseqOne\ |\ \forml_0 \vdash \forml_0}{}
\semSpace
\infer[(\bot)]{\hseqOne\ |\ \seqOne, \bot \vdash \seqTwo}{}
\semSpace
\infer[(\top)]{\hseqOne\ |\ \seqOne \vdash 
\top}{}
\\
\\
\infer[\ew]{\hseqOne\ |\ \hseqTwo}{\hseqOne}
\semSpace
\infer[\ec]{\hseqOne\ |\ \hseqTwo}{\hseqOne | \hseqTwo | \hseqTwo}
\semSpace
\infer[\eex]{\hseqOne_1\ |\ \hseqTwo_2\ |\ \hseqTwo_1\ |\ 
\hseqOne_2}{\hseqOne_1\ |\ \hseqTwo_1\ |\ \hseqTwo_2\ |\ \hseqOne_2}
\\ \\
\infer[\com]{\hseqOne\ |\ \seqOne_1, \seqOne_2 \vdash \seqTwo\ |\ 
\seqThree_1, \seqThree_2 \vdash \seqFour}{\hseqOne\ |\ \seqOne_1, \seqThree_1 
\vdash \seqTwo\ \ \ \  \hseqOne\ |\ \seqOne_2, \seqThree_2 \vdash 
 \seqFour}
\\ \\
\infer[\weak]{\hseqOne\ |\ \seqOne,\ \seqThree \vdash\seqTwo}{
	\hseqOne\ |\ \seqOne \vdash \seqTwo}
\semSpace
\infer[\contr]{\hseqOne\ |\ \seqOne,\ \seqThree \vdash\seqTwo}{
	\hseqOne\ |\ \seqOne,\seqThree,\seqThree \vdash\seqTwo}
\\ 
\\
\infer[\lex]{\hseqOne\ |\ \seqOne_1, \forml_0, \forml_1, \seqOne_2 \vdash 
\seqTwo}{
\hseqOne\ |\ \seqOne_1, \forml_1, \forml_0, \seqOne_2 \vdash \seqTwo}
\semSpace
\infer[\rex]{\hseqOne\ |\ \seqOne \vdash 
\seqTwo_1, \forml_0, \forml_1, \seqTwo_2 }{
\hseqOne\ |\ \seqOne \vdash 
\seqTwo_1, \forml_1, \forml_0, \seqTwo_2}
	\\
	\\
	\infer[(\text{L}\wedge)]{\hseqOne\ |\ \seqOne, \forml_0 \wedge \forml_1 
	\vdash 
	\seqTwo}{\hseqOne\ |\ \seqOne, \forml_0 \vdash \seqTwo\  |\ 
	\seqOne, \forml_1 \vdash \seqTwo}
	\semSpace
	\infer[(\text{R}\wedge)]{\hseqOne\ 
	|\ \seqOne \vdash \forml_0 \wedge \forml_1}{
	\hseqOne\ |\ \seqOne \vdash \forml_0\ \ \ \ \hseqOne\ |\ \seqOne 
	\vdash \forml_1}
	\\
	\\
	\infer[(\text{L}\vee)]{\hseqOne\ |\ \seqOne, \forml_0 \vee \forml_1 \vdash 
	\seqTwo}{\hseqOne\ |\ \seqOne, 
	\forml_0 \vdash \seqTwo \ \  \hseqOne\ |\ \seqOne, \forml_1 \vdash \seqTwo}
	\semSpace
	\infer[(\text{R}\vee)]{\hseqOne\ |\ 
		\seqOne \vdash 
		\forml_0 \vee \forml_1}{\hseqOne\ |\ \seqOne \vdash \forml_0\ |\ 
		\seqOne \vdash \forml_1}
	\\
	\\
	\infer[(\text{L}\impl)]{\hseqOne\ |\ \seqOne, \forml_0 \impl \forml_1 
	\vdash 
	\seqTwo}{\hseqOne\ |\ \seqOne \vdash \forml_0 \ \ \hseqOne\ |\ 
	\seqOne, \forml_1 \vdash 
	\seqTwo}
	\semSpace
	\infer[(\text{R}\impl)]{\hseqOne\ |\ \seqOne \vdash \forml_0 \impl 
	\forml_1}{\hseqOne\ |\ \seqOne, 
	\forml_0 \vdash \forml_1}
\end{gather*}}
\caption{Hypersequent calculus for \STLinfty{} logic without negation where 
$\hseqOne$ and 
$\hseqTwo$ 
are hypersequents, $\seqOne,\ \seqThree,\ \seqTwo,\ \seqFour$ 
are sequents and $\forml_0,\ \forml_1$ are formulas.}
\label{fig:stl-seq-calc}
\end{figure}

\begin{theorem}[Soundness of \STLinfty{}]
\STLinfty{} is sound, in the sense of Definition~\ref{def:soundness}.
\end{theorem}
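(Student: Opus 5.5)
The proof will be a structural induction on the derivation of a hypersequent in the calculus of Fig.~\ref{fig:stl-seq-calc}. For every rule $\infer{\hseqTwo}{\hseqOne_1 \ \ldots\ \hseqOne_n}$ we show that if each premise contains a sound sequent, then so does the conclusion, exactly as in the proof for \DLtwo{}.

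The setting is the extended reals $\itvcc{-\infty}{+\infty}$, where $\wedge$ is $\min$ and $\vee$ is $\max$. A sequent $\seqOne \vdash \seqTwo$ is sound iff $\min \tSTLinfty{\seqOne} \leq \max \tSTLinfty{\seqTwo}$, with the conventions that the empty $\min$ is $+\infty$ and the empty $\max$ is $-\infty$. These conventions agree with the empty-side clauses of the sequent-soundness definition for \STLinfty{}. I would prove this characterisation first as an auxiliary lemma, together with how $\min$ and $\max$ behave under concatenation of lists, i.e.\ $\min(\seqOne,\seqThree)=\min(\min\seqOne,\min\seqThree)$. Once this is in place, every rule case reduces to an order-theoretic statement about $\min$ and $\max$ on a total order.

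In every case, if the sound sequent supplied by a premise lies in the side hypersequent $\hseqOne$, we are done immediately; the external rules \ew{}, \ec{} and \eex{} are just list membership facts. The remaining cases are as follows.
\begin{itemize}
\item \init{} holds by reflexivity. For $(\bot)$ the left $\min$ is $-\infty$, and for $(\top)$ the right side is $+\infty$.
\item \weak{} and \contr{} use that $\min$ over a superset is no larger, and that duplicating elements leaves $\min$ unchanged. \lex{} and \rex{} follow from commutativity of the iterated $\min$ and $\max$.
\item \com{} is proved by a case split on whether $\min\tSTLinfty{\seqOne_2}\leq\min\tSTLinfty{\seqThree_1}$, mirroring the \DLtwo{} argument with $+$ replaced by $\min$. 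In the first case, $\min(\seqOne_1,\seqOne_2)\leq\min(\seqOne_1,\seqThree_1)\leq\tSTLinfty{\seqTwo}$. The other case is symmetric and uses the second premise.
\item (L$\wedge$) and (R$\wedge$) hold because $\wedge$ is $\min$, so unfolding under the outer $\min$ or $\max$ does the work. (L$\vee$) and (R$\vee$) hold because $\max(a,b)$ equals $a$ or $b$, which picks out the relevant premise sequent.
\end{itemize}

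The main obstacle is the implication rules, because $\tSTLinfty{p_0\impl p_1}$ is $+\infty$ when $\tSTLinfty{p_0}\leq\tSTLinfty{p_1}$ and $\tSTLinfty{p_1}$ otherwise.
\begin{itemize}
\item For (R$\impl$) the premise gives $\min(\seqOne,p_0)\leq p_1$. If $p_0\leq p_1$, the right-hand side of the conclusion is $+\infty$ and the sequent is trivially sound. Otherwise $p_0>p_1$, so the premise forces $\min\seqOne\leq p_1$, which is the required value.
\item For (L$\impl$) we split on the same condition. If $p_0\leq p_1$, the formula is $+\infty$ and drops out of the $\min$, so the premise $\seqOne\vdash p_0$ gives the bound. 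Otherwise the formula equals $p_1$, and the second premise $\seqOne,p_1\vdash\seqTwo$ gives soundness directly.
\end{itemize}
Throughout, care is needed with $\pm\infty$ in the empty-list conventions. I expect the proof in \rocq{} to be tedious mainly because of nested \coqin{minr}/\coqin{maxr} over \coqin{\bar R}. I would discharge it with helper lemmas about big $\min$/$\max$ over concatenations, rather than by unfolding every occurrence.
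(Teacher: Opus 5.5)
Your strategy is the paper's: induction on the derivation, transferring a sound component from the premises to the conclusion rule by rule. You discharge the case where the witness lies in $\hseqOne$ immediately, and otherwise reduce to order facts about $\min$/$\max$ on the extended reals, as in the \DLtwo{} proof. Most cases are right, but the subcase of (L$\impl$) with $p_0\leq p_1$ fails as written (here I write formulas for their interpretations, as you do).

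The first premise $\seqOne\vdash p_0$ only gives $\min\seqOne\leq p_0$, which says nothing about $\seqTwo$, while the goal is $\min\seqOne\leq\max\seqTwo$. For example, values $0,1,2,-5$ for $\seqOne,p_0,p_1,\seqTwo$ satisfy the first premise but not the conclusion. You need both premises here. From $\min\seqOne\leq p_0\leq p_1$ you get $\min(\seqOne,p_1)=\min\seqOne$, and the second premise $\seqOne,p_1\vdash\seqTwo$ then gives $\min\seqOne\leq\max\seqTwo$. So in both subcases of (L$\impl$) the final inequality comes from the second premise. The first premise only shows that adding $p_1$ does not lower the minimum. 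This argument also covers an empty $\seqTwo$.

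A smaller slip concerns (R$\vee$). Its premise is a single hypersequent, so you obtain only one sound component, and it need not be the one whose formula attains $\max(p_0,p_1)$. The argument ``$\max(a,b)$ equals $a$ or $b$'' is right for (L$\vee$), where both premises are available. For (R$\vee$) you need $a\leq\max(a,b)$ and $b\leq\max(a,b)$ instead.
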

\begin{proof}
The proof structure is analogous to the proof for soundness of \DLtwo{} (for 
the formalised proof see~\cite[\coqin{seq_calc_stli.v}]{github}).
\end{proof}

From the perspective of formalisation in \rocq{} this proof has posed a 
different type of technical challenge. While, mathematically, the proof is very 
similar to the one of \Godel{} logic, in the majority of the cases, it instead 
uses 
\coqin{\bar R} (the type of extended real numbers in \analysis{}) as its type.
This is due to the explicit inclusion of $-\infty$ 
and $\infty$ in the semantics of \STLinfty{}, which was necessary for proving 
it is a residuated lattice (Sect.~\ref{sec:algebraic}): specifically $\infty$ was 
needed 
as an identity element for its monoidal operator $\odot$.
Working with 
extended reals can provide an additional challenge due to a smaller library of 
available lemmas. 
In this particular case it also leads to the inability of using \coqin{lra} and 
\coqin{nra} tactics \cite{algebratactics}, which the \Godel{} formalisation 
relies heavily on, and 
instead solving every inequality set manually.

We do not propose a sequent calculus for full \STL{}, as we assume that,
 due to the lack of associativity of this DL, both algebraic and 
 proof-theoretic semantics will necessitate approaches that are rather 
 different to the ones we considered in this paper.
For future work, one could try to take inspiration  from the proof 
theory for non-associative logics, such as non-associative Lambek 
calculus~\cite{lambek1961calculus,moot2012non}, when studying the full \STL{}.
%


\subsection{Weak completeness}
\label{sec:weak-compl}

Recall the axiom schema arising from residuated lattices 
(Table~\ref{table:axiom-schema-lattice}), which we established for the semantics of 
\DL{}s in Sect.~\ref{sec:algebraic}. 
In the absence of a full completeness result, we can employ a collection of 
axioms to demonstrate weak completeness for the calculi. 
By ensuring that these axioms are provable within the target calculus---rather 
than treating them as properties of the semantics---we guarantee that 
desirable 
properties are preserved by the calculus. 
This approach can provide reasonable 
confidence that the system is well-behaved. 
Most importantly if one of the axioms is not provable, thus the calculus is not 
weakly complete, it gives sufficient 
evidence against completeness of that given sequent calculi.
This is therefore especially important for the calculi of \DLtwo{} and 
\STLinfty{}, for which we do not provide a completeness proof.

\begin{definition}[Weak Completeness]
A hypersequent calculus is weakly complete if relevant axioms of R1--R9 are 
provable.
\end{definition}

We omit specifically axiom R10, as residuation is a semantic property. We 
determine relevant axioms based on the properties of the \DL{} in question 
(Table~\ref{tab:properties}).

By using the derivation within the respective sequent calculi we show that 
\DLtwo{}~\cite[\coqin{seq_calc_dl2.v}]{github} and 
\STLinfty{}~\cite[\coqin{seq_calc_stli.v}]{github} are 
weakly 
complete.
%
%
%
%
%

%
We leave the problem of finding complete sequent calculus rules for negation of 
\STLinfty{} for future work.

%
%
%

\section{Example of applications in machine learning}
\label{sec:example}

We return to the motivation behind differentiable logics,
property-guided training for verification.  
For this, we use as an example $\epsilon$-$\delta$-robustness, 
a property often used in neural network verification~\cite{CasadioKDKKAR22}.

\subsection{Property-guided training, by means of an example}
\label{sec:ex}
\paragraph*{Neural network properties}
Given a neural network $N: \Real^m \to \Real^n$, the a property to be verified 
usually takes the form of a Hoare triple
$\forall \x \in \Real^m . \Pp(\x) \longrightarrow \Sp(\x)$,
 where $\Pp$ and~$\Sp$ can be arbitrary properties 
obtained by using variables $\x \in \Real^m$,
constants, vector, arithmetic operations, $\leq$, $=$, $\land$, $\lor$, and 
$\neg$. Additionally, 
$\Sp$ may contain the neural network $N$ as a function.
 
\begin{example}[\newterm{$\epsilon$-$\delta$-robustness}~\cite{CasadioKDKKAR22}]\label{ex:prop}
\label{example:properties}
Given a neural network $N$ and a vector~$\vect{v}$, consider the
specification that requires that for all inputs $\x$ that are within
an $\epsilon$~distance from~$\vect{v}$, the output of $N(\x)$ should
not deviate by more than $\delta$ from $N(\vect{v})$:
$$\forall \x. | \x - \vect{v} |_{L_\infty} \leq \epsilon \Rightarrow
    | N(\x) - N(\vect{v})|_{L_{\infty}} \leq \delta.$$ 
This property can be used to avoid misclassifying images when only a few pixels are 
perturbed. 
This particular example uses the \newterm{$L_{\infty}$ norm}:
$| \x - \y|_{L_{\infty}} \mydef \max_{i\in\{0,\ldots,n-1\}} ([\x]_i - [\y]_i)$,
where $[\x]_i$ stands for the $i$th element of~$\x$.

\end{example}

Unfortunately, as demonstrated by Fischer et
al.~\cite{fischer2019dl2}, even most accurate neural networks fail
even the most natural verification properties, such as
$\epsilon$-$\delta$-robustness. This motivated the search for better
ways to train the networks.

\paragraph*{Property-guided training} Methods for property-guided
training have received considerable attention in the AI literature, as the 
survey~\cite{ijcai2022p767} shows.
We will only illustrate the method that was suggested by Fischer et
al.~\cite{fischer2019dl2}, and refer the reader to the survey for more
examples.

\begin{example}[Generating a loss function from a logical 
property~\cite{fischer2019dl2}]   
Recall that standard supervised learning trains a neural network $N$ with 
trainable parameters $\theta$
to optimise the objective $ \min_{\theta} \lossfn(\x, \y)$,
for the loss function $\losssymbol: \Real^m \times \Real^n \rightarrow \Real$.
Generally, $\losssymbol$ measures the difference between 
the network's output and the given data for each input point. Examples of 
$\losssymbol$ are cross-entropy loss or mean squared error.
But now we want to train the neural network to satisfy any arbitrary property
$\forall \x. \Pp(\x) \longrightarrow \Sp(\x)$. For this,  
we replace the above optimisation objective  with
$$\min_{\theta} \left(\max_{\x \in \mathbb{H}_{\Pp(\x)}} 
\lossfn_{\Sp}(\x)\right)$$
where $ \mathbb{H}_{\Pp(\x)} \subseteq \Real^m$ refines the type $\Real^m$ to a 
subset for which the property $\Pp$ holds, 
  and $\lossfn_{\mathcal{S}}: \Real^m  \rightarrow \Real$ is obtained by 
  applying a suitable
\newterm{interpretation function} for $\Sp$.

We omit the exact details of how such optimisation algorithms are
defined: they are known and can be found in a suitable machine
learning tutorial, for example~\cite{KM18}.
Intuitively, the optimisation algorithm will search for $\x \in 
\mathbb{H}_{\Pp(\x)}$ such that $\x$ 
maximises the loss $\lossfn_{\Sp}(\x)$, 
in order to train the neural network parameters $\theta$ to minimise that loss. 
Concretely, if the property is $\epsilon$-$\delta$-robustness, it will look for 
the worst perturbation of $\vect{v}$
that violates the property, and will optimise the neural network to classify 
that bad example correctly. 
\end{example}

\paragraph*{Differentiable logics for loss functions}  In the above
example, we did not explain how to define the interpretation function
$\lossfn_{\Sp}$ for an arbitrary property $\Sp$; we need differentiable
logics for that purpose.

\begin{example}[Loss functions from properties in a fuzzy logic]\label{ex:loss}

Taking the property from Example~\ref{ex:prop}, by the Fischer et al.\ method
we must be able to interpret the right-hand side of the implication, i.e.,
$| N(\x) - N(\vect{v})|_{L_{\infty}} \leq \delta$, 
given concrete values for $\epsilon$, $\delta$, 
a concrete vector~$\vect{v}$, neural network~$N$, and a suitable definition of 
the 
$L_{\infty}$ norm.
For example, interpretation for our property in \STL{}~\cite{varnai} is given by
$
 \tempty{| N(\x) - N(\vect{v})|_{L_{\infty}} \leq \delta}_{\STL} = $
$\delta - |\tempty{N(\x)} - \tempty{N(\vect{v})}|_{L_{\infty}}$.
On the left-hand side, the $L_{\infty}$ distance between vectors as well as~$N$ 
are defined
in the syntax of \STL{}; on the right-hand side, they are given by real vector
arithmetic operations. Example~\ref{ex:ldlcoq} will make the relation between 
syntax and interpretation clear. 
The obtained function can be used directly for training neural networks.
\end{example}

\subsection{Formalisation of $\epsilon$-$\delta$-robustness}
\label{sec:ex-formalised}

We can formalise $\epsilon$-$\delta$-robustness once for all DLs:

\begin{example}[Formalisation of $\epsilon$-$\delta$-robustness]
\label{ex:loss-formalization}

We use the formal syntax of Fig.~\ref{fig:syntax-coq} to formalise 
Example~\ref{ex:loss}, i.e., $\epsilon$-$\delta$-robustness:
$$\forall \x. | \x - \vect{v} |_{L_\infty} \leq \epsilon \Rightarrow
    | N(\x) - N(\vect{v})|_{L_{\infty}} \leq \delta.$$
First, we embed into our language the $L_{\infty}$ norm operator and
vector subtraction, and declare appropriate notations:
\begin{minted}{ssr}
Let dl_norm_infty n : expr (funT n.+1 1) := dl_fun
  (fun t : R ^ n.+1 => [ffun x : 'I_1 => \big[maxr/t 0]_(i < n.+1) t i ])%R.
Let idx0 := @dl_idx R 1 ord0.
Local Notation "'`|' v '|'" := ((dl_norm_infty _ `@ v) `! idx0).

Let dl_vec_sub n :=
  dl_fun2 (fun (x y : R ^ n) => [ffun i => x i - y i]%R).
Local Notation "x `- y" := (dl_vec_sub _ `@2 (x, y)) (at level 42).
\end{minted}

We use notation \coqin{`@} for function application \coqin{dl_app} and 
\coqin{`@2} for \coqin{dl_app2}.

Then we can express the $\epsilon$-$\delta$-robustness property as follows:
\begin{minted}{ssr}
Context {n m : nat} (eps delta : expr realT) (f : expr (funT n.+1 m.+1))
  (v : expr (vectorT n.+1)) (x : expr (vectorT n.+1)).

Definition eps_delta_robust fn fm fl : expr (boolT fn impl_def fm fl) :=
  `| x `- v | `<= eps `=> `| (f `@ x) `- (f `@ v) | `<= delta.
\end{minted}

\end{example}

\begin{example}\label{ex:ldlcoq}
Taking the interpretation task of Example~\ref{ex:loss} and its formalisation
in Example~\ref{ex:loss-formalization}, we can call the interpretation function
for any of the logics defined in our framework to obtain the desired loss 
function
for the property, e.g., for \DLtwo{}:
\begin{minted}{ssr}
Let eps_delta_robust_dl2 :=
  ([[ eps_delta_robust neg_undef m_undef l_undef ]]_dl2).
Compute eps_delta_robust_dl2.
\end{minted}
\end{example}

\section{Related work}
\label{sec:relatedwork}

Neural network verification is a new field, and its nascent methods
need validation and further refinement. Better understanding,
formalisation and implementation of \DL{}s is often named as one of
the main challenges in the field~\cite{CordeiroDGIJKKLMSW25}.

In terms of programming language support for neural network verification, 
the CAISAR tool~\cite{girardsatabin2022caisar}, implemented as an OCaml DSL, 
puts emphasis on the smooth integration of a general specification language 
with many existing neural network solvers. 
However, CAISAR does not support property-guided training.  
The aspiration of languages like Vehicle~\cite{DaggittKAKS025} and CAISAR is to 
accommodate compilation
of specifications into both neural network solver and machine learning backends.
For the former, there is an on-going work on certifying the neural network 
solver backends~\cite{Des25,DaggittAKKA23,DaggittKAKS025}. 

On the side of machine-learning backends, \DL{}s have been previously
formalised in the Agda interactive theorem prover as part of the formalisation 
of the Vehicle language~\cite{ADK24}, but did not extend to
shadow-lifting---the part that requires extensive mathematical
libraries.  Property-guided training certified via theorem proving was
also proposed by Chevallier et al.~\cite{ChevallierWF22}.

Relevant work on formalisation of neural networks in ITPs includes:
verification of neural networks in Isabelle/HOL
\cite{brucker2023verifying} and Imandra~\cite{Des25},
formalisation of piecewise affine activation functions in
\rocq{}~\cite{aleksandrov2023formalizing}, providing formal guarantees of
the degree to which the trained neural network will generalise to new
data in \rocq~\cite{bagnall2019certifying}, convergence of a
single-layered perceptron in \rocq~\cite{murphy2017verified}, and
verification of neural archetypes in \rocq~\cite{de2022use}.  The
formalisation presented here does not directly formalise neural
networks.

Prompted by the increasing demand from AI, there has been a surge of 
research into mathematical foundations of quantitative logics, 
of which \DL{}s can be seen as a special case. 
Firstly, Bacci et al.~\cite{bacci2023mfps,abs-2402-03543} formalised a logic of 
Lawvere quantale,
over the domain $[0, \infty]$. The logic bears some resemblance to \DLtwo{} 
thanks to its domain, and also 
the authors have shown that the \Luka{} logic is an instance of the logics of 
Lawvere quantale.
The structure of Lawvere Quantale generalises the definition of residuated 
lattice, and also this logic 
has a proof-theoretic interpretation~\cite{bacci2023mfps,abs-2402-03543}. 
Bacci and M\"{o}gelberg~\cite{abs-2501-18275} further extended that logic with 
the induction principle,
and showed how it can be applied to encode Probabilistic Hoare logic and Markov 
Processes. 

At the same time, Capucci~\cite{abs-2406-04936} presented a logic that promises 
even a greater generality.
It also takes a Lawvere quantale as its interpretation, thus retaining the link 
with the tradition of residuated lattices.
But at the same time it defines multiplicative and additive connectives that 
yield
a sequent calculus similar to that of the Linear logic. 
Notably, its monoidal operators resemble those of the \product{} logic we have 
considered here. 
For the lattice/additive component, that logic offers to use soft additive 
connectives $\land$ and $\lor$.
These connectives have a softness parameter, that, at~$\infty$, turns them into 
$\min$ and $\max$; while at all other values
retains the shadow-lifting and smoothness properties. This resembles a lot the 
\STL{} logic intuition, with its use of parameter $\nu$.
Capucci's logic is under development, yet it promises to bring a conceptually 
elegant resolution
to the quest for harmonious co-existence of algebraic, analytic and 
proof-theoretic properties in one logic.  
We are looking into formalisation of that logic in 
\rocq~\cite{marulanda-giraldo2025our}.  

Different variants of the logics we use in this work---fuzzy logics in 
particular---can also be considered as alternative \DL{}s. The 
primary example being the choice of type of 
implication~\cite{van2022analyzing}. 
The 
implication required by the properties of residuated lattices and used through 
this paper are R-implications, but S-implications as well as Yager f- and 
g-implications are other well-known classes with different semantic 
properties~\cite{baczynski2007yager}. 
We have investigated some of the alternate semantics, in particular involving 
S-implications and S-negations, residuation does not always hold.
On the other hand, the extension of \DL{}s, particularly fuzzy logics, with 
comparisons between real numbers is reminiscent of 
SMT~\cite{barrett2018satisfiability}, as both involve adding domain-specific 
theories on top of propositional logic. 

A property found desirable for \DL{}s which we do not consider in this work is 
adequacy relative to Boolean logic (in ML works sometimes referred to as 
``soundness''~\cite{fischer2019dl2}). Proving a logic to be adequate shows it 
to be weaker than classical logic---a property commonly seen within the field 
of substructural logics which are shown to be stronger than intuitionistic 
logic and weaker than classical~\cite{galatos2007residuated}. While adequacy of 
fuzzy logics is well known, the introduction of comparisons between real 
numbers has a non-trivial effect on the result and is left as future work.

Lastly, we would be remiss not to mention the developments in the
application of \DL{}s in machine learning. In the literature, fuzzy
logics have been introduced alongside comparative
analyses of their performance in training
tasks~\cite{van2022analyzing}. Krieken et
al.~\cite{van2025neurosymbolic} have also loosely included them in a
broader scene of neurosymbolic methods in terms of their assumptions
of independence of different concepts. More recently, Flinkow et
al.~\cite{flinkow2025comparing} conducted a comprehensive evaluation
of various \DL{}s---including all logics considered in the present
work except \STL---employing a neural network verifier; although the
\DL{}s' semantics differ slightly, their evaluation provides
information about the performance of \DL{}s in training.

%

\section{Conclusions and future work}
\label{sec:conclusion}

We have presented a complete \rocq{} formalisation of a range of
existing \DL{}s, making the following three main contributions:
\begin{enumerate}
\item We provided a {\em uniform formalisation} of algebraic,
  analytic, and proof-theoretic results for \DL{}s that span a wide
  range of substructural logic and machine learning papers, published
  by different communities in the last 20 years.  We showed that
  unification of these three aspects is crucial for adequate
  understanding and future design of logics for machine
  learning. Moreover, we showed that such a unifying approach is
  possible.
\item We contribute to the \DL{} community by {\em revisiting
    semantics of \STL{} and \DLtwo\/} in a way more amenable to formal
  verification. We found and fixed errors in the literature.
  Indeed, our previous attempt~\cite{ldl} to do this with pen and
  paper proofs was drowned in low-level case analysis and resulted in
  some errors, see Section~\ref{sec:errors}. This complexity was our
  initial motivation to undertake the formalisation.
  We moreover presented a new sequent calculus for \DLtwo{} that could 
  be derived thanks to our general unifying approach
  to \DL{}s.
\item We proposed a {\em general formalisation strategy\/} to \DL{}s based on
  dependent types and formal mathematics. The proposed formalisation
  is built to be easily extendable for future studies of different
  \DL{}s.
\end{enumerate}

\begin{table}
\centering
\begin{tabular}{lll}
File & Contents & L.o.c. \\
\hline
\multicolumn{3}{|l|}{\it Additions to \mathcomp{} libraries} \\
\hline
{\tt mathcomp\us{}extra.v} & Lemmas for iterated $\min$/$\max$, etc. & 520 \\
{\tt analysis\us{}extra.v} & Lemmas for $\min$/$\max$ with \coqin{\bar R}, 
partial 
derivatives, etc. & 745 \\
\hline
\multicolumn{3}{|l|}{\it Generic logic and generic definitions of properties } 
\\
\hline
{\tt dl.v} & \DL{} syntax and semantics(\S~\ref{sec:syntax}), shadow-lifting 
(Sect.~\ref{sec:shadow-lifting}) & 681 \\
\hline
\multicolumn{3}{|l|}{{\it Algebraic and analytic properties of concrete 
logics}} \\
\hline
{\tt dl2.v} & \DLtwo{}: algebraic properties (\S~\ref{sec:algebraic}), analytic 
properties
(\S~\ref{sec:sldltwoproduct}) & 325  \\
{\tt fuzzy.v} & \Godel, \Luka, \Yager, \product:  & 1590 \\
& algebraic properties (\S~\ref{sec:algebraic}), analytic properties 
(\S~\ref{sec:sldltwoproduct}) & \\
{\tt stl.v} & \STL{}: algebraic properties (\S~\ref{sec:algebraic}), analytic 
properties
(\S~\ref{sec:slstl}) & 1206 \\
\hline
\multicolumn{3}{|l|}{{\it Formalisations of algebraic properties 
using extended reals }} \\
\hline
{\tt dl2\us{}ereal.v} & \DLtwo{}: algebraic properties (\S~\ref{sec:algebraic}) 
& 207\\
{\tt stl\us{}ereal.v} & \STL{}: algebraic properties (\S~\ref{sec:algebraic})& 
478\\
{\tt stl\us{}infty.v} & \STLinfty{}: algebraic properties 
(\S~\ref{sec:algebraic}) & 234\\
\hline
\multicolumn{3}{|l|}{{\it Proof-theoretic properties}} \\
\hline
{\tt seq\us{}calc.v} & \Godel, \Luka, \product: soundness 
(\S~\ref{sec:proof-theoretic}--~\ref{sec:soundness-fuzzy}) & 2354\\
{\tt seq\us{}calc\us{}dl2.v} & \DLtwo{}: soundness (\S~\ref{sec:sound-dl2}) & 
532\\
{\tt seq\us{}calc\us{}stli.v} & \STLinfty{}: soundness 
(\S~\ref{sec:sound-stl}) & 
491\\
\hline
& \multicolumn{1}{r}{Total} & 9372 \\
\end{tabular}
\caption{Overview of the formalisation~\cite{github}.}
\label{tab:overview}
\end{table}

Table~\ref{tab:overview} summarises the \rocq{} implementation. 
Both L'H\^opital's rule (now a part of 
\analysis~\cite[\coqin{realfun.v}]{analysis} \cite{lhopital}) and analytic 
properties, 
especially in complex cases such 
as \STL{}, form a substantial
part of the development.
The proofs for fuzzy \DL{}s (file \coqin{fuzzy.v}) are grouped together
thanks to their similarity and even share some of the proofs.
The files \coqin{mathcomp_extra.v} and \coqin{analysis_extra.v}
contain utility lemmas for, respectively, the \mathcomp{} and
\analysis{} libraries.  The file \coqin{mathcomp_extra.v} has a
selection of lemmas on iterated operators (e.g., iterated sums,
$n$-ary maximum), including lemmas for said operations when restricted
to the domain $\itvcc{0}{1}$ used by fuzzy logics. The file
\coqin{analysis_extra.v}, on the other hand, contains multiple lemmas
on properties of $\min$ and $\max$ for
extended reals.  During our work, we completed missing parts of the
shadow-lifting proof for \STL{}; for example, the original \STL{}
proof failed to acknowledge the need for L'Hôpital's rule.
%

Regarding the concrete formalisation strategy, it was revealing that
most of our formalisation was coherent with the standard \mathcomp{}
libraries (and standard mathematical results), and the library
extensions we needed were natural (e.g., L'Hôpital's rule).  This
work hence demonstrates that \rocq{} and \mathcomp{} are effective
working tools to formalise state-of-the-art AI results: \DLtwo{} and \STL{}
were published in recent conferences\cite{varnai,fischer2019dl2} and
this paper formalises the most significant results from both.

In the end, we have a uniform formalisation where all \DL{}s are ``tamed'' 
which 
provides 
solid ground for formalisation of methods deployed in verification of neural 
networks.

\paragraph{Future work}
We plan to consider other definitions of soundness, and other \DL{}s,
including \STL{} with revised negation.  We conjecture that
Definition~\ref{def:shadow} allows for generalisation (removing the
condition ``$p_j = p$'') but this is left for future work.  
%
%
The trade-off between idempotence, associativity, and shadow-lifting
that was conjectured by Varnai and Dimarogonas~\cite{varnai} is reminiscent of 
substructural
logics and suggests investigating the connection.
Establishing connection of this work with the logics of Lawvere
quantale by Bacci et al.~\cite{bacci2023mfps}  or by 
Capucci~\cite{abs-2406-04936}
might also provide new tools to study \DL{}s.

Additionally, we have several possible improvements to the formalisation itself 
in mind. 
Firstly, we could generalise our formalisation by introducing
a structure for residuated lattices instead of specialising to real numbers.
Furthermore, we believe that the use of multisets~\cite{mset} instead
of polymorphic lists (the \coqin{seq} type in \mathcomp{}) in the
implementation of the hypersequent calculi would ultimately provide
better support when coupled with appropriate automation.
Namely, due to lack of exchange rules in calculi modelled on
multisets, automation of the proofs could be more modular; this would
however require substantial effort to match the level of support
enjoyed with polymorphic lists.

We leave the formalisation of completeness of the calculi, as well as 
result related to cut-free proofs~\cite{fuzzy-proof}, for future work.
We also suggest as a simpler but weaker alternative a notion of weak 
completeness---ensuring that axioms from Table~\ref{table:axiom-schema-lattice} are 
provable within the target calculus. 
This, while it does not guarantee completeness, can be seen as a 
pre-requisite to it, as it would standardly be proven first before algebraically 
proving how this set of axioms is complete relative to real numbers.
This provides some intuition as tho why \Yager{} logic may not have a 
calculus---it is easy to see that in it, prelinearity is not satisfied. If we 
take $\tyager{p_0} = \tyager{p_1} = 0.5$ and $r = 2$ and the 
semantic interpretation of the axiom, we have:
\begin{align*}
&\tyager{(p_0 \impl p_1) \vee (p_1 \impl p_0)} = \\ 
&\max(\min(((1-\tyager{p_0})^r+\tyager{p_1}^r)^{\frac{1}{r}},1),
\min(((1-\tyager{p_0})^r+\tyager{p_1}^r)^{\frac{1}{r}},1))
 =  \\
&\max(\min(((1-0.5)^2+0.5^2)^{\frac{1}{2}},1),\min(((1-0.5)^2+0.5^p)^{\frac{1}{2}},1))
 \approx 0.86
\end{align*}
while to be considered true it must be equal to $\tyager{\top} = 1$.

Separately from the questions of scientific curiosity and mathematical elegance,
there is a question of lacking programming language support for machine 
learning. 
As tools like Vehicle~\cite{DaggittKAKS025} and 
CAISAR~\cite{girardsatabin2022caisar}
 are being proposed to provide a more principled 
approach to verification of machine learning, in the long term, compilers of 
these new emerging languages will require certification. And this, in turn, 
will demand formalisation
of results such as the ones we presented here. The formalisation of \DL{}s 
would hence directly contribute to 
certified compilation of specification languages to machine learning 
libraries.   

Finally, first-order quantification for quantitative logics generally and 
\DL{}s in particular, has been a long-standing open problem, discussed in our 
work~\cite{ldl} and the work of others~\cite{abs-2501-18275,abs-2406-04936}. 
Thanks to accumulated knowledge of these logics, we stand very close to its 
resolution. 

\paragraph{Acknowledgments}
The authors would like to thank Zachary Stone for his review
of the formalisation of L'H\^opital's rule, Jairo Marulanda-Giraldo for his 
insights into soundness for \Godel{} logic, Enrico Marchioni for his comments 
on fuzzy logics, and
the anonymous reviewers of the 15th International
Conference on Interactive Theorem Proving. 

\section{Funding}
Natalia Ślusarz acknowledges the support of EPSRC DTA grant from Heriot-Watt 
University. Natalia Ślusarz and Ekaterina Komendantskaya acknowledge the 
support of the EPSRC Grant EP/T026952/1 \emph{AISEC: AI Secure and Explainable 
by Construction}.
Ekaterina Komendantskaya's work was partially supported by the ARIA grant 
\emph{``Mathematics of Safeguarded AI''}.
Reynald Affeldt acknowledges support by JSPS KAKENHI Grant Number 22H00520.
Alessandro Bruni acknowledges support by the Carlsberg Foundation,
grant CF24-2347 \emph{``Verified Functional Analysis for Safe AI (VeriFunAI)''}.

\bibliography{stl.bib}

\begin{thebibliography}{10}

\bibitem{analysis}
Reynald Affeldt, Yves Bertot, Alessandro Bruni, Cyril Cohen, Marie Kerjean,
  Assia Mahboubi, Damien Rouhling, Pierre Roux, Kazuhiko Sakaguchi, Zachary
  Stone, Pierre-Yves Strub, and Laurent Théry.
\newblock {MathComp-Analysis}: {Mathematical} {Components} compliant analysis
  library.
\newblock \url{https://github.com/math-comp/analysis}, 2025.
\newblock Since 2017. Version 1.13.0.

\bibitem{ldl-coq}
Reynald Affeldt, Alessandro Bruni, Ekaterina Komendantskaya, Natalia Slusarz,
  and Kathrin Stark.
\newblock Taming differentiable logics with {Coq} formalisation.
\newblock In {\em 15th International Conference on Interactive Theorem Proving
  ({ITP} 2024), September 9--14, 2024, Tbilisi, Georgia}, volume 309 of {\em
  LIPIcs}, pages 4:1--4:19. Schloss Dagstuhl - Leibniz-Zentrum f{\"{u}}r
  Informatik, 2024.
\newblock \href {https://doi.org/10.4230/LIPICS.ITP.2024.4}
  {\path{doi:10.4230/LIPICS.ITP.2024.4}}.

\bibitem{lhopital}
Reynald Affeldt, Alessandro Bruni, and Natalia Ślusarz.
\newblock L'{H}ôpital's rule.
\newblock MathComp-Analysis Pull Request
  \url{https://github.com/math-comp/analysis/pull/1479}.

\bibitem{github}
Reynald Affeldt, Alessandro Bruni, and Natalia Ślusarz.
\newblock {Formalisation of Differentiable Logics in {Coq}}.
\newblock \url{https://github.com/ndslusarz/formal_LDL}, 2025.

\bibitem{affeldt2018jfr}
Reynald Affeldt, Cyril Cohen, and Damien Rouhling.
\newblock Formalization techniques for asymptotic reasoning in classical
  analysis.
\newblock {\em J. Formaliz. Reason.}, 11(1):43--76, 2018.
\newblock \href {https://doi.org/10.6092/ISSN.1972-5787/8124}
  {\path{doi:10.6092/ISSN.1972-5787/8124}}.

\bibitem{agliano2025algebraic}
Paolo Aglian{\`o}.
\newblock An algebraic investigation of linear logic.
\newblock {\em Archive for Mathematical Logic}, pages 1--23, 2025.

\bibitem{albarghouthi2021introduction}
Aws Albarghouthi et~al.
\newblock Introduction to neural network verification.
\newblock {\em Foundations and Trends{\textregistered} in Programming
  Languages}, 7(1--2):1--157, 2021.

\bibitem{aleksandrov2023formalizing}
Andrei Aleksandrov and Kim V{\"{o}}llinger.
\newblock Formalizing piecewise affine activation functions of neural networks
  in {Coq}.
\newblock In {\em 15th International {NASA} Symposium on Formal Methods ({NFM}
  2023), Houston, TX, USA, May 16--18, 2023}, volume 13903 of {\em Lecture
  Notes in Computer Science}, pages 62--78. Springer, 2023.
\newblock \href {https://doi.org/10.1007/978-3-031-33170-1\_4}
  {\path{doi:10.1007/978-3-031-33170-1\_4}}.

\bibitem{ADK24}
Robert Atkey, Matthew~L. Daggitt, and Wen Kokke.
\newblock Vehicle formalisation, 2024.
\newblock URL: \url{https://github.com/vehicle-lang/vehicle-formalisation}.

\bibitem{bacci2023mfps}
Giorgio Bacci, Radu Mardare, Prakash Panangaden, and Gordon~D. Plotkin.
\newblock Propositional logics for the {Lawvere} quantale.
\newblock In {\em 39th Conference on the Mathematical Foundations of
  Programming Semantics ({MFPS} XXXIX), Indiana University, Bloomington, IN,
  USA, June 21--23, 2023}, volume~3 of {\em {EPTICS}}. EpiSciences, 2023.
\newblock \href {https://doi.org/10.46298/ENTICS.12292}
  {\path{doi:10.46298/ENTICS.12292}}.

\bibitem{abs-2402-03543}
Giorgio Bacci, Radu Mardare, Prakash Panangaden, and Gordon~D. Plotkin.
\newblock Polynomial {L}awvere {L}ogic.
\newblock {\em CoRR}, abs/2402.03543, 2024.
\newblock \href {https://arxiv.org/abs/2402.03543} {\path{arXiv:2402.03543}},
  \href {https://doi.org/10.48550/ARXIV.2402.03543}
  {\path{doi:10.48550/ARXIV.2402.03543}}.

\bibitem{abs-2501-18275}
Giorgio Bacci and Rasmus~Ejlers M{\o}gelberg.
\newblock Induction and recursion principles in a higher-order quantitative
  logic.
\newblock {\em CoRR}, abs/2501.18275, 2025.
\newblock \href {https://arxiv.org/abs/2501.18275} {\path{arXiv:2501.18275}},
  \href {https://doi.org/10.48550/ARXIV.2501.18275}
  {\path{doi:10.48550/ARXIV.2501.18275}}.

\bibitem{baczynski2007yager}
Micha{\l} Baczy{\'n}ski and Balasubramaniam Jayaram.
\newblock Yager’s classes of fuzzy implications: some properties and
  intersections.
\newblock {\em Kybernetika}, 43(2):157--182, 2007.

\bibitem{bagnall2019certifying}
Alexander Bagnall and Gordon Stewart.
\newblock Certifying the true error: Machine learning in {Coq} with verified
  generalization guarantees.
\newblock In {\em The 33rd {AAAI} Conference on Artificial Intelligence ({AAAI}
  2019), The 31st Innovative Applications of Artificial Intelligence Conference
  ({IAAI} 2019), The 9th {AAAI} Symposium on Educational Advances in Artificial
  Intelligence ({EAAI} 2019), Honolulu, Hawaii, USA, January 27--February 1,
  2019}, pages 2662--2669. {AAAI} Press, 2019.
\newblock \href {https://doi.org/10.1609/AAAI.V33I01.33012662}
  {\path{doi:10.1609/AAAI.V33I01.33012662}}.

\bibitem{barrett2018satisfiability}
Clark Barrett and Cesare Tinelli.
\newblock Satisfiability modulo theories.
\newblock In {\em Handbook of model checking}, pages 305--343. Springer, 2018.

\bibitem{brucker2023verifying}
Achim~D. Brucker and Amy Stell.
\newblock Verifying feedforward neural networks for classification in
  {Isabelle/HOL}.
\newblock In {\em 25th International Symposium on Formal Methods ({FM} 2023),
  L{\"{u}}beck, Germany, March 6--10, 2023}, volume 14000 of {\em Lecture Notes
  in Computer Science}, pages 427--444. Springer, 2023.
\newblock \href {https://doi.org/10.1007/978-3-031-27481-7\_24}
  {\path{doi:10.1007/978-3-031-27481-7\_24}}.

\bibitem{abs-2406-04936}
Matteo Capucci.
\newblock On quantifiers for quantitative reasoning.
\newblock {\em CoRR}, abs/2406.04936, 2024.
\newblock \href {https://arxiv.org/abs/2406.04936} {\path{arXiv:2406.04936}},
  \href {https://doi.org/10.48550/ARXIV.2406.04936}
  {\path{doi:10.48550/ARXIV.2406.04936}}.

\bibitem{CasadioKDKKAR22}
Marco Casadio, Ekaterina Komendantskaya, Matthew~L. Daggitt, Wen Kokke, Guy
  Katz, Guy Amir, and Idan Refaeli.
\newblock Neural network robustness as a verification property: {A} principled
  case study.
\newblock In {\em 34th International Conference on Computer Aided Verification
  ({CAV} 2022), Haifa, Israel, August 7--10, 2022, Part {I}}, volume 13371 of
  {\em Lecture Notes in Computer Science}, pages 219--231. Springer, 2022.
\newblock \href {https://doi.org/10.1007/978-3-031-13185-1\_11}
  {\path{doi:10.1007/978-3-031-13185-1\_11}}.

\bibitem{ChevallierWF22}
Mark Chevallier, Matthew Whyte, and Jacques~D. Fleuriot.
\newblock Constrained training of neural networks via theorem proving (short
  paper).
\newblock In {\em 4th Workshop on Artificial Intelligence and Formal
  Verification, Logic, Automata, and Synthesis hosted by the 21st International
  Conference of the Italian Association for Artificial Intelligence (AIxIA
  2022), Udine, Italy, November 28, 2022}, volume 3311 of {\em {CEUR} Workshop
  Proceedings}, pages 7--12. CEUR-WS.org, 2022.
\newblock URL: \url{https://ceur-ws.org/Vol-3311/paper2.pdf}.

\bibitem{mset}
Cyril Cohen and Kazuhiko Sakaguchi.
\newblock Mathematical components finite maps library.
\newblock Last stable version: 2.2.1 (2025).
\newblock URL: \url{https://github.com/math-comp/finmap}.

\bibitem{CordeiroDGIJKKLMSW25}
Lucas~C. Cordeiro, Matthew~L. Daggitt, Julien Girard{-}Satabin, Omri Isac,
  Taylor~T. Johnson, Guy Katz, Ekaterina Komendantskaya, Augustin Lemesle,
  Edoardo Manino, Artjoms Sinkarovs, and Haoze Wu.
\newblock Neural network verification is a programming language challenge.
\newblock In {\em 34th European Symposium on Programming ({ESOP} 2025),
  Hamilton, ON, Canada, May 5--8, 2025}, volume 15694 of {\em Lecture Notes in
  Computer Science}, pages 206--235. Springer, 2025.
\newblock \href {https://doi.org/10.1007/978-3-031-91118-7\_9}
  {\path{doi:10.1007/978-3-031-91118-7\_9}}.

\bibitem{FoMLAS2023}
Matthew Daggitt, Wen Kokke, Ekaterina Komendantskaya, Robert Atkey, Luca
  Arnaboldi, Natalia Ślusarz, Marco Casadio, Ben Coke, and Jeonghyeon Lee.
\newblock The {Vehicle} tutorial: Neural network verification with {Vehicle}.
\newblock In {\em 6th Workshop on Formal Methods for ML-Enabled Autonomous
  Systems}, volume~16 of {\em Kalpa Publications in Computing}, pages 1--5.
  EasyChair, 2023.
\newblock \href {https://doi.org/10.29007/5s2x} {\path{doi:10.29007/5s2x}}.

\bibitem{DaggittAKKA23}
Matthew~L. Daggitt, Robert Atkey, Wen Kokke, Ekaterina Komendantskaya, and Luca
  Arnaboldi.
\newblock Compiling higher-order specifications to {SMT} solvers: How to deal
  with rejection constructively.
\newblock In {\em 12th {ACM} {SIGPLAN} International Conference on Certified
  Programs and Proofs ({CPP} 2023), Boston, MA, USA, January 16--17, 2023},
  pages 102--120. {ACM}, 2023.
\newblock \href {https://doi.org/10.1145/3573105.3575674}
  {\path{doi:10.1145/3573105.3575674}}.

\bibitem{DaggittKAKS025}
Matthew~L. Daggitt, Wen Kokke, Robert Atkey, Ekaterina Komendantskaya, Natalia
  Ślusarz, and Luca Arnaboldi.
\newblock Vehicle: Bridging the embedding gap in the verification of
  neuro-symbolic programs (invited talk).
\newblock In {\em 10th International Conference on Formal Structures for
  Computation and Deduction ({FSCD} 2025), July 14--20, 2025, Birmingham,
  {UK}}, volume 337 of {\em LIPIcs}, pages 2:1--2:20. Schloss Dagstuhl -
  Leibniz-Zentrum f{\"{u}}r Informatik, 2025.
\newblock \href {https://doi.org/10.4230/LIPICS.FSCD.2025.2}
  {\path{doi:10.4230/LIPICS.FSCD.2025.2}}.

\bibitem{Des25}
Remi Desmartin, Omri Isac, Ekaterina Komendantskaya, Kathrin Stark, Grant~O.
  Passmore, and Guy Katz.
\newblock A certified proof checker for deep neural network verification.
\newblock In {\em 16th International Conference on Interactive Theorem Proving
  ({ITP} 2025), Reykjav\'ik, Iceland, 28 September--1 October 2025}, LIPIcs.
  Schloss Dagstuhl - Leibniz-Zentrum f{\"{u}}r Informatik, 2025.

\bibitem{mathcomp}
The~MathComp development team.
\newblock Mathematical components.
\newblock \url{https://github.com/math-comp/math-comp}, 2005.
\newblock Last stable version: 2.4.0 (2025).

\bibitem{fischer2019dl2}
Marc Fischer, Mislav Balunovic, Dana Drachsler{-}Cohen, Timon Gehr, Ce~Zhang,
  and Martin~T. Vechev.
\newblock {DL2:} training and querying neural networks with logic.
\newblock In {\em 36th International Conference on Machine Learning ({ICML}
  2019), 9--15 June 2019, Long Beach, California, {USA}}, volume~97 of {\em
  Proceedings of Machine Learning Research}, pages 1931--1941. {PMLR}, 2019.
\newblock URL: \url{http://proceedings.mlr.press/v97/fischer19a.html}.

\bibitem{flinkow2025comparing}
Thomas Flinkow, Barak~A Pearlmutter, and Rosemary Monahan.
\newblock Comparing differentiable logics for learning with logical
  constraints.
\newblock {\em Science of Computer Programming}, 244:103280, 2025.

\bibitem{galatos2007residuated}
Nikolaos Galatos, Peter Jipsen, Tomasz Kowalski, and Hiroakira Ono.
\newblock {\em Residuated Lattices: An Algebraic Glimpse at Substructural
  Logics, Volume 151}.
\newblock Studies in Logic and the Foundations of Mathematics. Elsevier
  Science, 2007.

\bibitem{girardsatabin2022caisar}
Julien Girard-Satabin, Michele Alberti, Fran{\c c}ois Bobot, Zakaria Chihani,
  and Augustin Lemesle.
\newblock {CAISAR}: A platform for characterizing artificial intelligence
  safety and robustness.
\newblock In {\em The IJCAI-ECAI-22 Workshop on Artificial Intelligence Safety
  (AISafety 2022), July 24--25, 2022, Vienna, Austria}, CEUR-Workshop
  Proceedings, July 2022.
\newblock URL: \url{https://hal.archives-ouvertes.fr/hal-03687211}.

\bibitem{ijcai2022p767}
Eleonora Giunchiglia, Mihaela~Catalina Stoian, and Thomas Lukasiewicz.
\newblock Deep learning with logical constraints.
\newblock In {\em 31st International Joint Conference on Artificial
  Intelligence ({IJCAI-22})}, pages 5478--5485. International Joint Conferences
  on Artificial Intelligence Organization, 7 2022.
\newblock Survey Track.
\newblock \href {https://doi.org/10.24963/ijcai.2022/767}
  {\path{doi:10.24963/ijcai.2022/767}}.

\bibitem{klement2013triangular}
Erich~Peter Klement, Radko Mesiar, and Endre Pap.
\newblock {\em Triangular norms}, volume~8.
\newblock Springer Science \& Business Media, 2013.

\bibitem{KM18}
Zico Kolter and Aleksander Madry.
\newblock Adversarial robustness---theory and practice.
\newblock NeurIPS 2018 tutorial, 2018.
\newblock Available at \url{https://adversarial-ml-tutorial.org/}.

\bibitem{lambek1961calculus}
Joachim Lambek.
\newblock On the calculus of syntactic types.
\newblock {\em Structure of language and its mathematical aspects},
  12:166--178, 1961.

\bibitem{Law73}
F.~William Lawvere.
\newblock Metric spaces, generalized logic, and closed categories.
\newblock {\em Rendiconti del Seminario Matematico e Fisico di Milano},
  43(1):135 -- 166, 1973.

\bibitem{de2022use}
Elisabetta~De Maria, Abdorrahim Bahrami, Thibaud L'Yvonnet, Amy~P. Felty,
  Daniel Gaff{\'{e}}, Annie Ressouche, and Franck Grammont.
\newblock On the use of formal methods to model and verify neuronal archetypes.
\newblock {\em Frontiers Comput. Sci.}, 16(3):163404, 2022.
\newblock \href {https://doi.org/10.1007/S11704-020-0029-6}
  {\path{doi:10.1007/S11704-020-0029-6}}.

\bibitem{marulanda-giraldo2025our}
Jairo~Miguel Marulanda-Giraldo, Ekaterina Komendantskaya, Alessandro Bruni,
  Reynald Affeldt, Matteo Capucci, and Enrico Marchioni.
\newblock Quantifiers for differentiable logics in {Rocq} (extended abstract).
\newblock In {\em International Symposium on AI Verification (SAIV 2025),
  Zagreb, Croatia, July 21--22, 2025}, 2025.
\newblock Paper available at \url{https://openreview.net/forum?id=VbQNQ9hjJE}.

\bibitem{metcalfe2004analytic}
George Metcalfe, Nicola Olivetti, and Dov Gabbay.
\newblock Analytic calculi for product logics.
\newblock {\em Archive for Mathematical Logic}, 43:859--889, 2004.

\bibitem{fuzzy-proof}
George Metcalfe, Nicola Olivetti, and Dov~M. Gabbay.
\newblock {\em Proof Theory for Fuzzy Logics}, volume~36 of {\em Applied
  Logic}.
\newblock Springer, 2008.

\bibitem{moot2012non}
Richard Moot and Christian Retor{\'e}.
\newblock The non-associative {L}ambek calculus.
\newblock In {\em The Logic of Categorial Grammars: A Deductive Account of
  Natural Language Syntax and Semantics}, pages 101--147. Springer.

\bibitem{murphy2017verified}
Charlie Murphy, Patrick Gray, and Gordon Stewart.
\newblock Verified perceptron convergence theorem.
\newblock In {\em 1st ACM SIGPLAN International Workshop on Machine Learning
  and Programming Languages}, pages 43--50, 2017.

\bibitem{rudin1976}
Walter Rudin.
\newblock {\em Principles of Mathematical Analysis}.
\newblock McGraw-Hill, 3rd edition, 1976.

\bibitem{sakaguchi2022itp}
Kazuhiko Sakaguchi.
\newblock Reflexive tactics for algebra, revisited.
\newblock In {\em 13th International Conference on Interactive Theorem Proving
  ({ITP} 2022), August 7--10, 2022, Haifa, Israel}, volume 237 of {\em LIPIcs},
  pages 29:1--29:22. Schloss Dagstuhl - Leibniz-Zentrum f{\"{u}}r Informatik,
  2022.
\newblock \href {https://doi.org/10.4230/LIPICS.ITP.2022.29}
  {\path{doi:10.4230/LIPICS.ITP.2022.29}}.

\bibitem{algebratactics}
Kazuhiko Sakaguchi and Pierre Roux.
\newblock Algebra tactics: Ring, field, lra, nra, and psatz tactics for
  {Mathematical} {Components}.
\newblock \url{https://github.com/math-comp/algebra-tactics}, 2021.
\newblock Last stable release: 1.2.3 (2024).

\bibitem{van2022analyzing}
Emile van Krieken, Erman Acar, and Frank van Harmelen.
\newblock Analyzing differentiable fuzzy logic operators.
\newblock {\em Artif. Intell.}, 302:103602, 2022.
\newblock \href {https://doi.org/10.1016/J.ARTINT.2021.103602}
  {\path{doi:10.1016/J.ARTINT.2021.103602}}.

\bibitem{van2025neurosymbolic}
Emile van Krieken, Pasquale Minervini, Edoardo Ponti, and Antonio Vergari.
\newblock Neurosymbolic diffusion models.
\newblock {\em arXiv e-prints}, pages arXiv--2505, 2025.

\bibitem{varnai}
Peter Varnai and Dimos~V. Dimarogonas.
\newblock On robustness metrics for learning {STL} tasks.
\newblock In {\em 2020 American Control Conference ({ACC} 2020), Denver, CO,
  USA, July 1--3, 2020}, pages 5394--5399. {IEEE}, 2020.
\newblock \href {https://doi.org/10.23919/ACC45564.2020.9147692}
  {\path{doi:10.23919/ACC45564.2020.9147692}}.

\bibitem{lukasiewicz1920three}
J~Łukasiewicz.
\newblock {\em O logice trójwartościowej (in Polish). English translation: On
  Three-Valued Logic, in Borkowski, L.(ed.) 1970. Jan Łukasiewicz: Selected
  Works, Amsterdam: North Holland}.
\newblock Ruch Filozoficzny, 1920.

\bibitem{ldl}
Natalia Ślusarz, Ekaterina Komendantskaya, Matthew~L. Daggitt, Robert~J.
  Stewart, and Kathrin Stark.
\newblock Logic of differentiable logics: Towards a uniform semantics of {DL}.
\newblock In {\em 24th International Conference on Logic for Programming,
  Artificial Intelligence and Reasoning ({LPAR} 2023), Manizales, Colombia,
  June 4--9, 2023}, volume~94 of {\em EPiC Series in Computing}, pages
  473--493. EasyChair, 2023.
\newblock \href {https://doi.org/10.29007/C1NT} {\path{doi:10.29007/C1NT}}.

\end{thebibliography}

\end{document}